\definecolor{egg}{rgb}{.98,.97,.92}
\definecolor{astroorange}{rgb}{1,.93,.79}
\definecolor{darkorange}{rgb}{1,.89,.6}
\definecolor{dullblue}{rgb}{.29,.47,.77}
\definecolor{grayblue}{rgb}{.98,.98,.98}
\definecolor{fadedblue}{rgb}{.78,.86,.92}
\definecolor{tiffanyblue}{rgb}{.96,1,1}
\definecolor{grayish}{rgb}{.93,.93,.97}
\definecolor{charcoal}{rgb}{.247,.259,.27}
\definecolor{evergreen}{rgb}{.7725,.858,.7647}
\definecolor{dullred}{rgb}{.929,.498,.598}
\definecolor{lavender}{rgb}{.8,.741,.85}
\newcommand{\qhofa}{q\nobreakdash-HOFA}
\newcommand{\mbe}{\mathbb{E}}
\newcommand{\Tr}[1]{\mathrm{Tr}\left\{#1\right\}}
\newcommand{\Ptr}[2]{\mathrm{Tr}_{#1}\left\{#2\right\}}
\newcommand{\ot}{\mathrm{OTOC}}
\newcommand{\norm}[1]{\left|\left| #1 \right|\right|}
\newcommand{\abs}[1]{\left| #1 \right|}
\newcommand{\lra}[1]{\left\langle #1 \right\rangle}
\newcommand{\lrp}[1]{\left(  #1 \right)}
\renewcommand{\geq}{\geqslant}
\renewcommand{\leq}{\leqslant}
\renewcommand{\ge}{\geqslant}
\renewcommand{\le}{\leqslant}
\def\ot{\otimes}
\def\complex{\mathbb{C}}
\def\Z{\mathbb{Z}}
\newcommand{\proj}[1]{| #1\rangle\!\langle #1 |}
\newcommand{\inner}[2]{\langle #1 , #2\rangle}
\newcommand{\be}{\begin{equation}}
\newcommand{\ee}{\end{equation}}
\newcommand{\beq}{\begin{eqnarray}}
\newcommand{\eeq}{\end{eqnarray}}
\newtheorem{thm}{Theorem}
\newtheorem{lem}[thm]{Lemma}
\newtheorem{prop}[thm]{Proposition}
\newtheorem{cor}[thm]{Corollary}
\newtheorem{Def}[thm]{Definition}
\newtheorem{Con}[thm]{Conjecture}
\newtheorem{Exam}[thm]{Example}
\newtheorem{Rem}[thm]{Remark}
\title{Quantum Higher-Order Fourier Analysis\\ and the Clifford Hierarchy}
\author{Kaifeng Bu$^{1,2}$}
 \address{$^1$\textnormal{Department of Mathematics, The Ohio State University, Columbus, Ohio 43210, USA}}
 \email{\href{mailto:bu.115@osu.edu}{bu.115@osu.edu} (Kaifeng Bu)}
\author{Weichen Gu$^{1}$}
\email{\href{mailto:gu.1213@osu.edu}{gu.1213@osu.edu} (Weichen Gu)}
\author{Arthur Jaffe$^{2}$}
\address{$^2$\textnormal{Harvard University, Cambridge, MA 02138, USA}}
\email{\href{mailto:Arthur\_Jaffe@harvard.edu}{Arthur\_Jaffe@harvard.edu} (Arthur Jaffe)}
\begin{document}
\begin{abstract}
We propose a mathematical framework that we call \textit{quantum, higher-order Fourier analysis}. This generalizes the classical theory of higher-order Fourier analysis, which led to many recent advances in number theory and combinatorics. We define a family of ``quantum measures'' on linear transformations  on a Hilbert space, that reduce in the case of diagonal matrices  to the uniformity norms introduced by Timothy Gowers.  We show that our quantum measures and our related theory of quantum higher-order Fourier analysis characterize the Clifford hierarchy, an important notion of complexity in quantum computation. 
 In particular, we give a necessary and sufficient analytic condition that a unitary is an element of the $k^{\rm th}$ level of the Clifford hierarchy.
\end{abstract}

\maketitle



\bigskip

\section{Introduction}
This paper revolves about two interrelated themes.  First we present the foundation for a mathematical technique that we call quantum, higher-order Fourier analysis (\qhofa). 
We introduce a family of quantum uniformity measures $\norm{\ \cdot \ }_{Q^{k}}$ on linear transformations $B$,  given explicitly in \eqref{eq:QUn}. We show that $\norm{B}_{Q^{k}}$ is non-negative and increases monotonically in $k$. We prove that these measures are norms for $k=2,3$. These properties of the measures are sufficient to establish the resulting applications to quantum information that we consider here; the norm property has not proved necessary.

These measures generalize the classical uniformity norms introduced by Gowers~\cite{Gowers98},  which led to the theory of classical, higher-order Fourier analysis. The classical theory has been studied widely and provides many useful insights into classical combinatorics and number theory. The present paper provides an  analysis of the corresponding quantum theory. 

Our second theme  is to show that {\qhofa} gives an analytical measure of the complexity of quantum gates. A widely used algebraic classification of the complexity of a unitary gate in quantum information theory is the Clifford hierarchy, introduced by Gottesman and Chuang~\cite{GCh99}. This hierarchy has been extensively studied. We show here that our quantum uniformity measures quantify the Clifford hierarchy analytically. Among other things, we prove that a unitary $U$ belongs to the $k^{\rm th}$-level of the Clifford hierarchy, if and only if $\norm{U}_{Q^{k+1}}=1$. 

We believe that the analytic notions of {\qhofa} will find  other applications  in mathematical physics, as well as in mathematics and in quantum theory.

\subsection{Background}
In classical Fourier analysis one pairs the function $f(x)$  with the linear 
exponential $e^{ipx}$. One obtains higher-order, classical Fourier analysis by generalizing this procedure to pair  $f(x)$ with the exponential $e^{iP_k(x)}$ of a polynomial $P_{k}(x)$ of degree $k$~\cite{Gowers98, Gowers01,tao2012higher, tao2006additive}. Here $P_k(x)$ is called a \textit{phase polynomial}.  

One can characterize the degree $k$ of $P_{k}$ inductively, by using a discrete, non-linear, multiplicative derivative $\partial_{a}$. Suppose that a function $f(x)=e^{ig(x)}$ that takes values on the unit circle in the complex plane. In classical analysis, one defines the discrete multiplicative derivative as $\partial_{a}f(x) :=f(x+a)\overline{f(x)}=e^{i\lrp{g(x+a)-g(a)}}$. In case that $g(x)=P_{k+1}(x)$ is a polynomial of degree $k+1$, its derivative is a polynomial of degree $k$. 

Gowers quantified pairing with a phase polynomical by introducing  a family of uniformity norms~\cite{Gowers98}, that provide a remarkable, quantitative bound for Szemerédi’s theorem on arithmetic progressions~\cite{Gowers01}.  
These classical norms $\norm{\ \cdot \ }_{U^{k}}$ have the form 
\begin{align}\label{eq:UG}
\norm{f}^{2^{k}}_{U^{k}}
=\mathbb{E}_{\vec a_1,..., \vec a_{k}, \vec x}
(\partial_{\vec a_1,.., \vec a_k}f(\vec x))\;,
\end{align}
where the expectation $\mathbb{E}$ is a uniform average over the $\vec a_j$ and $\vec x$, where $\vec a_j, \vec x\in \mathbb{F}^n_2$ and $f$ is function defined on $\mathbb{F}^n_2$, namely $n$ copies of the finite field with two elements.

The further development of that work  led to  other applications in  number theory, in additive combinatorics, and in theoretical computer science. Many mathematicians including Gowers, Green, Tao, Szegedy, Host, B.~Kra and Ziegler, made significant contributions to build a comprehensive  mathematical theory of higher-order Fourier analysis; see,  e.g.~\cite{Gowers01,
host2005nonconventional,ziegler2007universal,tao2010inverse,tao2010inverse_low, GTZAnnal12, Host2017higher,Taoannal23,green-tao08}. The classical Gowers 3-norm has been shown to study vector stabilizer states and non-stabilizer states~\cite{Bu19,labib2022stabilizer,arun2024poly,mehr2024}.

Here we extend the ideas in classical, higher-order Fourier analysis to the quantum case. We study the classical configuration space of quantum theory as a non-commutative Weyl group. We define quantum uniformity measures on transformations on this space in \eqref{eq:QUn} and study the properties of these measures.
Discrete classical analysis corresponds to restricting the Weyl group to a certain commutative subgroup, in which case our quantum measures reduce to Gowers' classical uniformity norms. After studying properties of the quantum uniformity measures, we show their relevance to complexity theory in quantum computation.
\subsection{Quantum Derivative}
In this paper we replace the underlying space $\mathbb{F}_{2}$ by a non-commutative (quantum) phase space, equipped with a corresponding family of non-commutative translations. In continuum quantum physics, the Weyl group describes such a phase space, and it has been widely studied. In the next section we introduce a discrete quantum version of this framework that is the basis for much work in quantum information theory. In a different context, Gross~\cite{Gross06} (and  others) have extensively studied this discrete group as a quantum phase space.

We use unitary translations on the quantum phase space to define a discrete quantum derivative. Using this derivative, we define polynomials of degree $k$ inductively. This leads one to introduce quantum higher-order Fourier analysis, which involves the pairing of a unitary $U$  with the exponential of a polynomial of degree $k$.  This set of ideas provides the path to quantum, higher-order Fourier analysis.  

This mathematical framework works across different settings where translations define a discrete directional  derivative. Define the translation of a unitary $B$ in the direction given by a unitary $A$ as the conjugation $ABA^{*}$. One then obtains the discrete, multiplicative, directional derivative $ \partial_{A}B$ of $B$ as 
\begin{equation}\label{DiscreteDerivative}
  \partial_{A}B=(ABA^{*} )B^{*} =: [A,B]_{c} \;.
\end{equation}
Throughout this paper we denote the commutator $[\ ,\ ]_{c}$ by the expression \eqref{DiscreteDerivative}, which agrees with the group commutator in case $A,B$ are both unitaries.

\subsection{Weyl Operators and Clifford Unitaries}
Unitary matrices $X$ and $Z$ play a central role in quantum theory. One denotes the eigenbasis (or computational basis) for $Z$ by $\ket{k}$ where $k\in\Z_d$, and defines $Z\ket{k}=\omega^k\ket{k}$. Here  $\omega=e^{\frac{2\pi i}{d}}$. Also
$X\ket{k}=\ket{k+1}$, interpreted modulo $d$.
These transformations do not commute, but satisfy the relations,
$$[X^{q},Z^{p}]_{c}
= X^{q} Z^{p} X^{-q} Z^{-p}
= \omega^{-qp}\;. 
$$
For $ a=(p,q)\in V=\mathbb{Z}_{d}^{2}$, and for $\zeta=e^{\frac{\pi i (d+1)}{d}}$ a square root of $\omega$, define the unitary Weyl operator $w(a)$ as
\begin{align}\label{OneQuditWeyl}
w(a) = \zeta^{-pq} Z^{p} X^{q},
\end{align}
These unitaries generate the 1-qudit Weyl group, known as the Pauli group in the quantum literature.  Taking an $n$-fold tensor product yields the $n$-qudit Weyl group.

The Weyl operators are an orthonormal Fourier basis  with respect to the inner product 
\begin{align}\label{InnerProduct-n}  
\lra{A,B}=\frac{1}{d^n}\Tr{A^{*}B}\;,
\quad\text{so}\quad
A = \sum_{\vec a\in \Z^{2n}_{d}} \Xi_{A}(\vec a) \, w(\vec a)\; . 
\end{align}
Conjugation by the unitary $A=w(\vec a)$ defines  translation on  quantum  phase space by  $\vec a$, 
\begin{align}\label{QuantumTranslation}
B\mapsto B(\vec a) = w(\vec a)\,B \, w(\vec a)^{*}\; .
\end{align}

\subsection{Stabilizers and Classical Computation}
There is an important family of quantum states introduced by Gottesman~\cite{Gottesman97} and  defined  as common eigenstates of a commuting subgroup of Weyl operators. They are called ``stabilizer states;'' see Definition \ref{Def:Stabilizers}. The elegant mathematical structure of stabilizer states allows them to be precisely described and manipulated. This makes them invaluable in the development of quantum error correction, where they form the basis of stabilizer codes, a class of codes capable of protecting quantum information from errors. Notable examples of stabilizer codes include Shor’s 9-qubit code~\cite{ShorPRA95} and Kitaev’s toric code~\cite{Kitaev_toric}.  Along with the stabilizer states, there is an important family of unitaries called Clifford unitaries, which map Weyl operators by conjugation to Weyl operators. 
 
 The famous Gottesman-Knill theorem is the statement that quantum circuits with stabilizer states as input, Clifford unitaries as evolution, and measurement in the computational basis, can be efficiently simulated by a classical computer~\cite{gottesman1998heisenberg}.  This result delineates a boundary between classical and quantum computational power, and emphasizes the importance of non-Clifford gates in achieving quantum computational advantage.  Later, many other results on classical simulation have been considered in the literature~\cite{BravyiPRL16,BravyiPRX16,bravyi2019simulation,BeverlandQST20,SeddonPRXQ21, bu2022classical,Bu19,koh2015further,FLLW25}. Stabilizer states are the quantum equivalent of classical Gaussians~\cite{BGJ23a,BGJ23b,lyu2024fermionicgaussiantestingnongaussian}.
 
 One way to quantify the complexity of non-Clifford gates is called the Clifford hierarchy, introduced by Gottesman and Chuang~\cite{GCh99}.
This classification helps to implement fault-tolerant quantum computation~\cite{GCh99}. 
In this work, we explore these ideas through the lens of quantum higher-order Fourier analysis.

\subsection{Quantum Uniformity Measures}
Using the translation implemented by the Weyl operators, we define the phase-space derivative,  
\begin{align}\label{QuantumMultiplicativeDerivative}
\fbox{$\partial_{\vec a}B :=\partial_{w(\vec a)}B
= B(\vec a)B^{*}$}\;,
\end{align}
as  specified in  \eqref{DiscreteDerivative}, \eqref{QuantumTranslation}.  Multiple derivatives are defined iteratively as $\partial_{\vec a_k, \vec a_{k-1}, \ldots,\vec a_1} = \partial_{\vec a_k}\partial_{\vec a_{k-1},  \ldots,\vec a_1}$. 
The quantum uniformity measures are bounded by the operator norm: 
\begin{eqnarray}\label{eq:QUn}
\fbox{$\norm{B}^{2^{k}}_{Q^k}
=\mathbb{E}_{\vec a_k,...,\vec a_{1}\in V^n}\,
\frac{1}{d^n}
\Tr{ \partial_{\vec a_k,...,\vec a_{1}}B} \leqslant \norm{B}^{2^{k}}$}\;.
\end{eqnarray} 
The expectation $\mathbb{E}$ denotes an average over each $\vec a_{j}\in\Z_{d}^{2n}$. We develop properties of these measures in  \S\ref{Sec:QuantumUniformityNorm}. In  \S\ref{Sect:Convolution} we show that these measures can also be characterized by the quantum convolution proposed earlier by the authors in~\cite{BGJ23a,BGJ23b,BJ2025a,BGJ2025a}.  

We compare some properties of the classical and quantum Fourier analysis in Table~\ref{sum:tab1}. This theory gives rise to quantum phase polynomials of degree-$k$, generalizing  classical phase polynomials that occur in classical combinatorics. The quantum uniformity norms and the Clifford hierarchy, on which we elaborate here,  provide a natural starting point for a more general study of q-HOFA. 

\begingroup
\setlength{\tabcolsep}{6pt} 
\renewcommand{\arraystretch}{1.5} 

\begin{table}[htbp]
\centering
 \resizebox{.48\textwidth}{20mm}{
\centering
\begin{tabular}{  |c|c|c| }
\hline
& quantum, higher-order Fourier analysis  &  classical, higher-order Fourier analysis \\
\hline
{Derivative} & $\partial_AB=(ABA^*)B^* $ & $\partial_{\vec w}f(\vec x)=f(\vec x+\vec w)\overline{f(\vec x)}$ \\
\hline
Degree-1 polynomial & Weyl operator $w(\vec a)$  & phase linear function $(-1)^{\vec a\cdot \vec x}$\\
\hline
Fourier coefficient &  $\Xi_U[\vec a]=\inner{U}{w(\vec a)}$ & $\hat{f}(\vec a)=\inner{f}{(-1)^{\vec a\cdot x}}$ \\
\hline
Degree-k polynomial & $k^{\rm th}$  level Clifford hierarchy $\mathcal{C}^{(k)}$ & phase polynomial $(-1)^{g(\vec x)}$, $deg(g)\leq k$, denote the set as $D_k$\\
\hline 
Higher-order Fourier coefficient & $\inner{U}{V}, \forall V\in C^{(k)}$ &
$\inner{f}{g}$, $\forall g\in D_k$
\\
\hline
Uniformity measures & quantum uniformity measure $\norm{\cdot}_{Q^k}$ (Definition~\ref{Def:QGow}) & Gowers norm $\norm{\cdot}_{U^k}$ \eqref{eq:UG} \\
\hline
Monotonicity & $\norm{\cdot}_{Q^k}\leq \norm{\cdot}_{Q^{k+1}}$ (Theorem~\ref{Thm:Monotonicity}) &$\norm{\cdot}_{U^k}\leq \norm{\cdot}_{U^{k+1}}$ \\
\hline
Connection with $L^{p}$ norm & $\norm{\cdot}_{Q^k}\leq \norm{\cdot}_{p_k}$ with $p_k=\frac{2^k}{k+1}$ (Theorem~\ref{thm:rel_lp}) &$\norm{\cdot}_{U^k}\leq \norm{\cdot}_{p_k}$ with $p_k=\frac{2^k}{k+1}$\\
\hline
Log-convexity & $\norm{\rho}_{Q^2}\leq \norm{\rho}^{1/2}_{Q^1}\norm{\rho}^{1/2}_{Q^3}$, for pure state $\rho$ (Theorem \ref{thm:log_conv})& $\norm{f}_{U^2}\leq \norm{f}^{1/2}_{U^1}\norm{f}^{1/2}_{U^3}$ for positive function $f$\\
\hline
Application in property testing &  Clifford-hierarchy testing (Theorem~\ref{thm:QU_CH}) & Low-degree polynomial testing\\
\hline
\end{tabular}}
\vskip 8pt
\caption{A comparison of q-HOFA with c-HOFA}
\label{sum:tab1}
\end{table}
\endgroup
\goodbreak

\subsection{Summary of Main Results}
\begin{thm}[\bf Positivity,  Monotonocity, and Norm]\label{Thm:Pos-Mono-Norm}
For any $n$-qudit linear operator $B$,  
\begin{eqnarray*}
0\leqslant\norm{B}_{Q^k}\leq 
\norm{B}_{Q^{k+1}},\quad \text{for} \ 1\leqslant k\;.
\end{eqnarray*}
Furthermore, $\norm{\ \cdot \ }_{Q^k}$ is a norm for $k=2,3$.
\end{thm}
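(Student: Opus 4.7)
The plan pivots on a single rewriting. Since the Weyl operators form a unitary $1$-design, $\mathbb{E}_{\vec{a}\in V^n}\,w(\vec{a})\,O\,w(\vec{a})^{*} = \tfrac{1}{d^{n}}\Tr{O}\,\id$ for every operator $O$. Applying this to the outermost derivative $\partial_{\vec{a}_k}D = w(\vec{a}_k)\,D\,w(\vec{a}_k)^{*}\,D^{*}$, with $D := \partial_{\vec{a}_{k-1},\ldots,\vec{a}_1}B$, gives $\mathbb{E}_{\vec{a}_k}\,\tfrac{1}{d^n}\Tr{\partial_{\vec{a}_k}D} = \tfrac{1}{d^{2n}}\abs{\Tr{D}}^{2}$. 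Substituting into \eqref{eq:QUn} then yields the manifestly non-negative identity
\[
\norm{B}_{Q^{k}}^{2^{k}} \;=\; \mathbb{E}_{\vec{a}_{1},\ldots,\vec{a}_{k-1}\in V^n}\left|\tfrac{1}{d^{n}}\Tr{\partial_{\vec{a}_{k-1},\ldots,\vec{a}_{1}}B}\right|^{2},
\]
which immediately gives $\norm{B}_{Q^{k}} \ge 0$. Monotonicity then comes from Cauchy–Schwarz $\abs{\mathbb{E}[X]}^{2} \le \mathbb{E}[\abs{X}^{2}]$ applied to $X := \tfrac{1}{d^{n}}\Tr{\partial_{\vec{a}_{k-1},\ldots,\vec{a}_{1}}B}$: the right-hand side is $\norm{B}_{Q^{k}}^{2^{k}}$, and the left-hand side equals $(\norm{B}_{Q^{k-1}}^{2^{k-1}})^{2} = \norm{B}_{Q^{k-1}}^{2^{k}}$ by the same identity one level lower, where squaring is justified because the previous step shows $\norm{B}_{Q^{k-1}}^{2^{k-1}}$ is real and non-negative.

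For the norm property at $k = 2$, I would expand $B = \sum_{\vec b}\Xi_{B}[\vec b]\,w(\vec b)$ in the Weyl basis. Because $w(\vec{a})\,w(\vec b)\,w(\vec{a})^{*} = \omega^{\sigma(\vec{a},\vec b)}w(\vec b)$ for the non-degenerate symplectic form $\sigma$, and because characters are orthogonal, $\mathbb{E}_{\vec{a}}\,\omega^{\sigma(\vec{a},\vec b-\vec c)} = \delta_{\vec b,\vec c}$, a short calculation collapses the double sum to
\[
\norm{B}_{Q^{2}}^{4} \;=\; \sum_{\vec b\in V^{n}}\abs{\Xi_{B}[\vec b]}^{4}.
\]
Hence $\norm{\cdot}_{Q^{2}}$ coincides with the $\ell^{4}$ norm of the Weyl coefficients, and all three norm axioms follow because $B \mapsto \Xi_{B}$ is a linear isomorphism onto $\ell^{4}(V^{n})$.

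For $k = 3$, homogeneity $\norm{cB}_{Q^{3}} = \abs{c}\,\norm{B}_{Q^{3}}$ is immediate from the eight occurrences of $B, B^{*}$ inside $\norm{B}_{Q^{3}}^{8}$, and positive definiteness reduces, by the monotonicity above, to the already-settled $k=2$ case. The substantive task is the triangle inequality. The plan is to promote $\norm{B}_{Q^{3}}^{8}$ to an eight-argument cube form $\langle(B_{\omega})_{\omega\in\{0,1\}^{3}}\rangle_{Q^{3}}$ in which one places $B_{\omega}$ (for $\abs{\omega}$ even) or $B_{\omega}^{*}$ (for $\abs{\omega}$ odd) at the corresponding corner of the cube and averages the trace of the resulting cyclic product, with the Weyl conjugations $w\bigl(\sum_{i}\omega_{i}\vec{a}_{i}\bigr)$ interleaved as dictated by iterating \eqref{QuantumMultiplicativeDerivative}. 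This form is multi-(anti)linear in its eight arguments and reduces to $\norm{B}_{Q^{3}}^{8}$ when every $B_{\omega} = B$. Iterating the Hilbert–Schmidt Cauchy–Schwarz inequality $\abs{\Tr{XY}}^{2} \le \Tr{XX^{*}}\Tr{YY^{*}}$ three times, once for each coordinate of the cube and thereby doubling along that coordinate, should yield a quantum Gowers–Cauchy–Schwarz inequality $\abs{\langle(B_{\omega})_{\omega}\rangle_{Q^{3}}} \le \prod_{\omega}\norm{B_{\omega}}_{Q^{3}}$. Expanding $\norm{B + C}_{Q^{3}}^{8}$ by multilinearity into $2^{8}$ cube inner products and bounding each by this inequality will give $\norm{B+C}_{Q^{3}} \le \norm{B}_{Q^{3}} + \norm{C}_{Q^{3}}$ after summing the binomial series.

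The main obstacle is this iterated Cauchy–Schwarz step. Classically the eight values of $f$ at the corners of the cube commute, so each splitting is effortless; in the quantum setting the eight operators sit inside a \emph{specific} cyclic product under the trace, and each Cauchy–Schwarz application must carve the trace into two Hilbert–Schmidt halves whose self-inner-products are themselves cube forms, now doubled along one coordinate. Cyclicity of the trace is just enough to make this succeed at $k = 3$; this same rigidity of operator ordering is presumably why the analogous argument for $k \ge 4$ is substantially more delicate, consistent with the theorem asserting the norm property only through $k = 3$.
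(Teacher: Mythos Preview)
Your treatment of positivity, monotonicity, and the $k=2$ norm is correct and essentially matches the paper. Your monotonicity argument is in fact slightly more direct: the paper first builds the cube inner-product machinery (Definition~\ref{Def:InnerProduct}, Proposition~\ref{Prop:PreInnerProduct}) and then specializes, whereas you apply Jensen's inequality $\abs{\mathbb{E}[X]}^{2}\le\mathbb{E}[\abs{X}^{2}]$ straight to the positivity identity. One small imprecision: the equality $\mathbb{E}[X]=\norm{B}_{Q^{k-1}}^{2^{k-1}}$ is just the defining equation~\eqref{eq:QUn}, not ``the same identity one level lower.'' For $k=2$ your $\ell^{4}$ computation is exactly Proposition~\ref{Prop:Q2Form}.

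The genuine gap is in the $k=3$ triangle inequality. You correctly set up the eight-linear cube form and correctly identify that the issue is making the iterated Cauchy--Schwarz work despite non-commutativity. But ``cyclicity of the trace is just enough'' is not a proof. The first application of Cauchy--Schwarz (your Proposition~\ref{Prop:PreInnerProduct} analogue) splits along the outermost cube coordinate, producing a doubled form such as
\[
\langle B_{000},B_{001},B_{010},B_{011},B_{000},B_{001},B_{010},B_{011}\rangle_{Q^{3}}\,.
\]
To split again along the \emph{second} coordinate you must first rearrange this into a form where that coordinate is outermost. The paper does this via two explicit permutation identities (Lemma~\ref{lem:Q3}), each proved by a specific combination of trace cyclicity, a translation $\vec a_{j}\mapsto -\vec a_{j}$ inside the average, and a relabeling of the averaging variables. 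These identities are not automatic, and your proposal does not supply them; without them the second and third Cauchy--Schwarz steps have no place to grab hold. Once you have those two rearrangement identities, your multilinear expansion and binomial summation argument is exactly what the paper does in Theorem~41.
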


In Proposition \ref{Prop:Positivity}, Proposition \ref{Prop:InductiveDefn}, and Section \S\ref{Sect:MoreProperties}
we prove this and other relations for the quantum uniformity measures, including the two following propositions:

\begin{prop}\label{Prop:NormReductions}
For arbitrary $B$,
\begin{align*}\label{inductive}
\norm{B}_{Q^1}& = \abs{\frac{1}{d^n} \Tr{B} }\;,\quad
\norm{B}_{Q^2} = \norm{\Xi_{B}}_{{\ell^{4}}} \;,\\
\norm{B}^{2^{k}}_{Q^{k}}
&=\mathbb{E}_{\vec a}
\norm{\partial_{\vec a}B}^{2^{k-1}}_{Q^{k-1}}
=\underset{\vec a_1,...,\vec a_{k-1}\in V^n}{\mathbb{E}}\left|\frac{1}{d^n}\Tr{\partial_{\vec a_{k-1},...,\vec a_{1}}B}\right|^2  \;, k\geq 2\;.
\end{align*}
\end{prop}

\begin{cor}
If $\partial_{\vec {a}_{k-1}, \ldots, \vec {a}_{1}} B=\lambda^{2^{k-1}} I$ is a scalar multiple of the identity, then $\norm{B}_{Q^{k'}}=\lambda$, for $k'\geq k$. 
\end{cor}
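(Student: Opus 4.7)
The plan combines the inductive characterization of the quantum uniformity measures in Proposition~\ref{Prop:NormReductions} with an elementary computation of $\norm{\cdot}_{Q^j}$ on scalar multiples of the identity. The latter is the key ingredient; once it is in hand, the corollary reduces to a short manipulation of exponents.

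First I would establish the auxiliary identity $\norm{cI}_{Q^j} = |c|$ for any scalar $c$ and any integer $j \geq 1$. For any unitary $A$ one has
\begin{equation*}
\partial_A(cI) = A(cI)A^*(cI)^* = |c|^2\, I,
\end{equation*}
and iterating yields $\partial_{\vec{b}_m, \ldots, \vec{b}_1}(cI) = |c|^{2^m}\, I$ independently of the directions $\vec{b}_j$. Substituting into the defining formula \eqref{eq:QUn} gives $\norm{cI}_{Q^j}^{2^j} = |c|^{2^j}$.

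Next I would iterate the inductive identity $\norm{B}_{Q^{k'}}^{2^{k'}} = \mathbb{E}_{\vec{a}} \norm{\partial_{\vec{a}} B}_{Q^{k'-1}}^{2^{k'-1}}$ from Proposition~\ref{Prop:NormReductions} a total of $k - 1$ times. Since $k' \geq k$, the level index $k' - k + 1$ remains at least one throughout, so every step is justified. This produces
\begin{equation*}
\norm{B}_{Q^{k'}}^{2^{k'}} = \mathbb{E}_{\vec{a}_1, \ldots, \vec{a}_{k-1}} \norm{\partial_{\vec{a}_{k-1}, \ldots, \vec{a}_1} B}_{Q^{k'-k+1}}^{2^{k'-k+1}}.
\end{equation*}

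To finish, I would invoke the hypothesis $\partial_{\vec{a}_{k-1}, \ldots, \vec{a}_1} B = \lambda^{2^{k-1}} I$, understood uniformly in the $\vec{a}_j$'s and with $\lambda \geq 0$ so the $2^{k-1}$-th root is unambiguous. Applying the auxiliary identity with $c = \lambda^{2^{k-1}}$ and $j = k' - k + 1$ collapses the expectation, yielding
\begin{equation*}
\norm{B}_{Q^{k'}}^{2^{k'}} = \bigl(\lambda^{2^{k-1}}\bigr)^{2^{k'-k+1}} = \lambda^{2^{k'}},
\end{equation*}
and therefore $\norm{B}_{Q^{k'}} = \lambda$. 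There is no substantive obstacle here: the argument is a short bookkeeping exercise once the scalar-identity computation is recorded.
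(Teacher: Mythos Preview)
Your proof is correct and follows exactly the route the paper intends: the corollary is stated immediately after Proposition~\ref{Prop:NormReductions} with no separate proof, so the implicit argument is precisely the one you give---iterate the inductive identity $k-1$ times and then evaluate $\norm{cI}_{Q^j}=|c|$. Your care in noting that $k'-k+1\geq 1$ keeps the recursion valid, and in reading $\lambda\geq 0$, is appropriate since the paper leaves those points tacit.
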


\begin{prop} The $Q^{k}$ measure  of a pure state $\rho=\sum_{\vec a\in V^n}  \Xi_{\rho}(\vec a)\,  w(\vec  a)$ is given by $\ell^{p}$ norms of its Fourier coefficients $\Xi_{\rho}(\vec a)$. See Proposition \ref{Prop:Q34Pure} for the exact statement.
\end{prop}

The Clifford hierarchy $\mathcal{C}^{(k)}$, see Definition \ref{Def:CLi_Hier}, is an important algebraic concept used
to characterize the complexity of unitaries in quantum computation. Our quantum uniformity measures give an analytic characterization of the Clifford hierarchy.

\begin{thm}[\bf Analytic Characterization of ${\mathcal{C}^{(k)}}$] 
\label{Thm:CliffordHierarchy}
A unitary  $U\in\mathcal{C}^{(k)}$,  iff\, $\norm{U}_{Q^{k+1}}=1$.
\end{thm}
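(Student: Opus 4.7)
The plan is to split the proof into an analytic step that reduces the norm condition to a statement about iterated quantum derivatives, and an algebraic induction identifying this condition with the recursive definition of the Clifford hierarchy.

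For the analytic step, iterate Proposition \ref{Prop:NormReductions} to obtain
\[
\|U\|^{2^{k+1}}_{Q^{k+1}}
=\mathbb{E}_{\vec a_1,\ldots,\vec a_k}\left|\tfrac{1}{d^n}\Tr{\partial_{\vec a_k,\ldots,\vec a_1}U}\right|^2.
\]
A short induction on the number of derivatives shows that iterated derivatives of a unitary are themselves unitary, because $[A,B]_c$ is unitary whenever $A$ and $B$ are. For any unitary $V$ on $\mathbb{C}^{d^n}$, the eigenvalues lie on the unit circle, so $|\tfrac{1}{d^n}\Tr V|\le 1$ with equality iff $V=\lambda I$ for some $|\lambda|=1$. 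Consequently $\|U\|_{Q^{k+1}}\le 1$ and the equality $\|U\|_{Q^{k+1}}=1$ holds iff
\[
\partial_{\vec a_k,\ldots,\vec a_1}U\in\mathbb{C} I \quad\text{for every }\vec a_1,\ldots,\vec a_k. \qquad (\star)
\]

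For the algebraic step, I would show that $(\star)$ is equivalent to $U\in\mathcal{C}^{(k)}$ by iterating the single-derivative lemma
\[
V\in\mathcal{C}^{(m)}\iff \partial_{\vec a}V\in\mathcal{C}^{(m-1)}\text{ for every }\vec a,
\]
where $\mathcal{C}^{(0)}:=\{\lambda I:|\lambda|=1\}$. Applying this lemma $k$ times peels off one derivative and lowers the hierarchy level by one, converting $U\in\mathcal{C}^{(k)}$ into $(\star)$. To prove the lemma for $m\ge 2$ I would use the factorization
\[
\partial_{\vec a}V=w(\vec a)\cdot\bigl(Vw(\vec a)^*V^*\bigr),
\]
together with the closure of $\mathcal{C}^{(m-1)}$ under left multiplication by Weyl operators; since $w(\vec a)^{*}$ is itself a Weyl operator up to a phase, this gives $\partial_{\vec a}V\in\mathcal{C}^{(m-1)}$ iff $Vw(\vec b)V^*\in\mathcal{C}^{(m-1)}$ for every $\vec b$, which is the definition of $V\in\mathcal{C}^{(m)}$. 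For the base case $m=1$, the forward direction is the Weyl commutation relation $[w(\vec a),w(\vec b)]_c\in\mathbb{C} I$. For the converse, rearrange $\partial_{\vec a}V=\lambda(\vec a)I$ as $V^{-1}w(\vec a)V=\lambda(\vec a)w(\vec a)$, expand $V=\sum_{\vec b}\alpha(\vec b)w(\vec b)$ in the Weyl basis, and use the nondegeneracy of the symplectic form $\sigma$ on $\mathbb{Z}_d^{2n}$: the identity $\omega^{\sigma(\vec a,\vec b)}=\lambda(\vec a)$, required for every $\vec b$ in the support of $\alpha$, forces that support to be a singleton, so $V$ is a Weyl operator up to a scalar.

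The main obstacle is the closure property used in the inductive step: that each $\mathcal{C}^{(m-1)}$ is preserved under multiplication by Weyl operators. Since $\mathcal{C}^{(m)}$ is not a group for $m\ge 3$, this requires a dedicated side induction on $m$, exploiting the almost-abelian structure of the Pauli group (conjugation of a Weyl by a Weyl is the same Weyl up to a phase). Once this closure is secured and the finite-Fourier base case is handled, assembling the analytic reduction with the iteration of the single-derivative lemma yields the theorem.
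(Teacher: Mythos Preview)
Your proposal is correct and follows essentially the same route as the paper. The single-derivative lemma you isolate is precisely Proposition~\ref{Prop:CliffordHierarchyAlgebraic} (the identification $Q^{(k)}=\mathcal{C}^{(k)}$), and the closure of $\mathcal{C}^{(m-1)}$ under Weyl multiplication that you flag as the main obstacle is exactly Lemma~\ref{lem:w}; your Fourier-expansion argument for the $m=1$ base case is in fact cleaner than the paper's somewhat terse treatment of that step.
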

We complete the proof of Theorem \ref{Thm:CliffordHierarchy} in Proposition \ref{Prop:AnalyticCliffordCharacterization}.  While $\norm{U}_{Q^{1}}=\norm{U^{*}}_{Q^{1}}$,  in general for  for $k>1$, the unitaries $U$ and $U^{*}$ have different quantum uniformity norms $\norm{\ \cdot \ }_{Q^{k}}$. In Corollary \ref{Cor:AdjointOK} we give a class of unitaries $U\in\mathcal{C}^{(k)}$ which are closed under taking their adjoint.  

\begin{thm}[\bf The Classical Case] \label{Thm:TheClassicalCase}
Let $U$ be diagonal in the computational basis, with diagonal values $f$. Then  
$\norm{U}_{Q^k}=\norm{f}_{U^k}$,  the  Gowers' uniformity norm. 
\end{thm}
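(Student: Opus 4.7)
The approach is a direct computation, exploiting the fact that Weyl conjugation acts on a diagonal operator as a classical shift of its diagonal entries. The hypothesis $U \in \mathcal{C}^{(k)}$ is not actually needed for the identity $\norm{B}_{Q^k} = \norm{f}_{U^k}$—it holds for any diagonal $B = \diag(f)$—but is natural since the result exhibits the classical Gowers theory as the diagonal restriction of \qhofa.

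First I would write $\vec a = (\vec p, \vec q) \in \mathbb{Z}_d^{2n}$ and $w(\vec a) = \zeta^{-\vec p \cdot \vec q} Z^{\vec p} X^{\vec q}$ as in \eqref{OneQuditWeyl}. For $B = \diag(f)$: the scalar phase cancels between $w(\vec a)$ and $w(\vec a)^{*}$, and $Z^{\vec p}$ commutes with $B$. Hence $w(\vec a) B w(\vec a)^{*} = X^{\vec q} B X^{-\vec q}$ is diagonal with entries $f(\vec x - \vec q)$, so
\[
\partial_{\vec a} B = \diag\!\bigl(f(\vec x-\vec q)\,\overline{f(\vec x)}\bigr) = \diag\!\bigl(\partial_{-\vec q} f(\vec x)\bigr),
\]
the classical multiplicative derivative in direction $-\vec q$. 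Crucially, $\partial_{\vec a} B$ depends only on the $\vec q$-part of $\vec a$. By an obvious induction on the number of derivatives, $\partial_{\vec a_k,\ldots,\vec a_1} B = \diag\bigl(\partial_{-\vec q_k,\ldots,-\vec q_1} f\bigr)$.

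Because $\frac{1}{d^n}\Tr{\diag(g)} = \mathbb{E}_{\vec x} g(\vec x)$ and the integrand is independent of every $\vec p_j$, those expectations are trivial; after also substituting $\vec q_j \mapsto -\vec q_j$ (a bijection on $\mathbb{Z}_d^n$), one obtains
\[
\norm{B}_{Q^k}^{2^k}
= \mathbb{E}_{\vec a_1,\ldots,\vec a_k}\frac{1}{d^n}\Tr{\partial_{\vec a_k,\ldots,\vec a_1}B}
= \mathbb{E}_{\vec q_1,\ldots,\vec q_k,\vec x}\,\partial_{\vec q_k,\ldots,\vec q_1} f(\vec x)
= \norm{f}_{U^k}^{2^k}.
\]
Taking $2^k$-th roots yields the theorem.

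The only real obstacle is careful bookkeeping: the cancellation of the $\zeta^{-\vec p \cdot \vec q}$ phase, the shift-direction convention, and the symmetry of the iterated multiplicative derivative under permutation of the $\vec q_j$'s. Nothing deep enters. Combined with Theorem \ref{Thm:CliffordHierarchy}, the hypothesis $U \in \mathcal{C}^{(k)}$ forces $\norm{f}_{U^{k+1}}=1$, identifying the diagonal $f$ as a classical phase polynomial of degree at most $k$, so that the result also ties the algebraic Clifford-hierarchy condition on $U$ to the classical phase-polynomial condition on its diagonal.
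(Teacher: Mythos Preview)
Your proof is correct and follows essentially the same approach as the paper: the key lemma $\partial_{\vec a}B_f = B_{\partial_{-\vec q}f}$ is exactly Proposition~\ref{prop:red_der}, and the paper's Proposition~\ref{Prop:ClassicalUniformity} then proves the identity by induction on $k$ via the inductive property $\norm{B}_{Q^k}^{2^k}=\mathbb{E}_{\vec a}\norm{\partial_{\vec a}B}_{Q^{k-1}}^{2^{k-1}}$, whereas you iterate the derivative fully before plugging into the definition---a cosmetic reordering of the same computation. Your observation that the hypothesis $U\in\mathcal{C}^{(k)}$ is unnecessary is also borne out by the paper, which states and proves the identity for arbitrary diagonal $B_f$.
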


Next we address a different aspect of quantum Fourier analysis, namely the magnitude of the overlap between a unitary and $\mathcal{C}^{(k)}$.

\begin{Def}[\bf Overlap with the Hierarchy]
\label{Thm:HierarchyOverlap}
 The hierarchy-overlap measure of a unitary $U$ with the $k$-th level of the Clifford hierarchy
is 
\begin{eqnarray*}
\norm{U}_{q^{k+1}}
=\max_{V\in C^{(k)}}
|\inner{V}{U}|^2.
\end{eqnarray*}
\end{Def}

We have the following relation between that hierarchy-overlap measure and the quantum uniformity norms:

\begin{thm}[\bf Direct Inequality]
\label{Thm:DirectBound}
Given an $n$-qudit unitary $U$,  
\begin{eqnarray*}
\norm{U}_{q^k}
\leq \norm{U}_{Q^k}, \quad \text{for } 1\leqslant k\;.
\end{eqnarray*}
Furthermore $\norm{U}_{Q^k}=1$, iff $\norm{U}_{q^k}=1$. Thus  $U\in\mathcal{C}^{(k)}$, iff $\norm{U}_{q^{k+1}}=1$.
\end{thm}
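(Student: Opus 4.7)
The plan is to prove the three assertions in turn: the main inequality $\norm{U}_{q^k}\le \norm{U}_{Q^k}$, the equivalence of the saturation cases, and the Clifford-hierarchy consequence. The last two will follow by combining the main inequality with the analytic characterization of Theorem~\ref{Thm:CliffordHierarchy}, so the substance is the first, which I will prove by induction on $k\ge 1$. The engine of the induction is the one-line trace identity
$$ \frac{1}{d^n}\Tr{\partial_{\vec a}(V^{*}U)} \;=\; \lra{\partial_{\vec a}V,\partial_{\vec a}U}, $$
obtained by writing $\partial_{\vec a}(V^{*}U)=V(\vec a)^{*}U(\vec a)U^{*}V$, cycling the trace to isolate $\partial_{\vec a}U=U(\vec a)U^{*}$, and recognizing $V\,V(\vec a)^{*}=(\partial_{\vec a}V)^{*}$. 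Together with Proposition~\ref{Prop:NormReductions} this gives $\norm{V^{*}U}_{Q^{2}}^{4}=\mathbb{E}_{\vec a}\lvert\lra{\partial_{\vec a}V,\partial_{\vec a}U}\rvert^{2}$.

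The base case $k=1$ is immediate: $\mc C^{(0)}$ consists of scalar-times-identity unitaries, so $\norm{U}_{q^{1}}=|\lra{I,U}|^{2}=\norm{U}_{Q^{1}}^{2}\le \norm{U}_{Q^{1}}$, using $\norm{U}_{Q^{1}}\le 1$ for unitaries. For the inductive step at $k\ge 2$, I fix $V\in \mc C^{(k-1)}$ and first establish the closure property $\partial_{\vec a}V=w(\vec a)(Vw(\vec a)^{*}V^{*})\in \mc C^{(k-2)}$; this uses the defining property of the hierarchy together with the fact that $\mc C^{(k-2)}$ is preserved under left multiplication by Pauli operators, which itself follows by a short induction on the hierarchy level using that conjugation by any Pauli preserves every $\mc C^{(j)}$. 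The inductive hypothesis applied to $(\partial_{\vec a}V,\partial_{\vec a}U)$ then gives $|\lra{\partial_{\vec a}V,\partial_{\vec a}U}|^{2}\le \norm{\partial_{\vec a}U}_{Q^{k-1}}$; integrating in $\vec a$ and using the power-mean (Jensen) inequality at exponent $2^{k-1}$ yields
$$ \norm{V^{*}U}_{Q^{2}}^{4} \;\le\; \mathbb{E}_{\vec a}\norm{\partial_{\vec a}U}_{Q^{k-1}} \;\le\; \bigl(\mathbb{E}_{\vec a}\norm{\partial_{\vec a}U}_{Q^{k-1}}^{2^{k-1}}\bigr)^{1/2^{k-1}} \;=\; \norm{U}_{Q^{k}}^{2}. $$
Finally the monotonicity $\norm{\,\cdot\,}_{Q^{1}}\le \norm{\,\cdot\,}_{Q^{2}}$ from Theorem~\ref{Thm:Monotonicity} gives $|\lra{V,U}|^{2}=\norm{V^{*}U}_{Q^{1}}^{2}\le \norm{V^{*}U}_{Q^{2}}^{2}\le \norm{U}_{Q^{k}}$, and taking the maximum over $V\in \mc C^{(k-1)}$ closes the induction.

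For the saturation case, if $\norm{U}_{Q^{k}}=1$ then $U\in \mc C^{(k-1)}$ by Theorem~\ref{Thm:CliffordHierarchy}, and the choice $V=U$ gives $\norm{U}_{q^{k}}=1$; conversely, $\norm{U}_{q^{k}}=1$ means $|\lra{V,U}|=1$ for some unitary $V\in \mc C^{(k-1)}$, and the equality case of Hilbert--Schmidt Cauchy--Schwarz then forces $U=cV$ with $|c|=1$, putting $U\in\mc C^{(k-1)}$ and so $\norm{U}_{Q^{k}}=1$. The final equivalence $U\in \mc C^{(k)}\Leftrightarrow \norm{U}_{q^{k+1}}=1$ is this saturation statement with $k$ replaced by $k+1$, composed with Theorem~\ref{Thm:CliffordHierarchy}. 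The main obstacle I anticipate is the closure lemma $\partial_{\vec a}V\in \mc C^{(k-2)}$ for $V\in \mc C^{(k-1)}$; once that Pauli-multiplication closure property of the hierarchy and the trace identity above are in hand, the remaining argument is a single application each of Cauchy--Schwarz and Jensen.
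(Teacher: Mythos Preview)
Your argument is correct. The organization differs from the paper's, though the underlying ingredients are the same. The paper fixes a maximizer $V\in\mathcal{C}^{(k)}$ and works directly from
\[
\norm{U}_{Q^{k+1}}^{2^{k+1}}=\mathbb{E}_{\vec a_1,\ldots,\vec a_k}\bigl|\lra{\partial_{\vec a_k,\ldots,\vec a_1}U,\partial_{\vec a_k,\ldots,\vec a_1}V}\bigr|^2,
\]
then peels off one derivative at a time via the Fourier estimate $\mathbb{E}_{\vec a}|\lra{\partial_{\vec a}U',\partial_{\vec a}V'}|^2=\sum_{\vec b}|\Xi_{(U')^*V'}(\vec b)|^4\ge |\lra{U',V'}|^4$ followed by Cauchy--Schwarz on the remaining expectation, iterating $k$ times. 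You instead run a clean induction on $k$: you show $\partial_{\vec a}V\in\mathcal{C}^{(k-2)}$ (this is exactly Lemma~\ref{lem:w} in the paper), apply the inductive hypothesis to the unitary $\partial_{\vec a}U$, and then use Jensen plus the monotonicity $\norm{\cdot}_{Q^1}\le\norm{\cdot}_{Q^2}$ on $V^*U$. That last monotonicity step is precisely the paper's Fourier peeling estimate in disguise, so the two arguments are really the same inequality chain read in opposite directions. Your packaging has the advantage of avoiding the ``$\ldots$'' in an unrolled iteration; the paper's has the advantage of not needing to name the closure property separately. Your treatment of the saturation case and the final equivalence is also fine and matches Proposition~\ref{Prop:AnalyticCliffordCharacterization}.
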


\begin{Con}[\bf Inverse Inequality Conjecture]
If $0<c\leqslant\norm{U}_{Q^{k}}$, 
does there exist $d$, independent of $n$, such that  $0<d\leqslant\norm{U}_{q^{k}}$?
\end{Con}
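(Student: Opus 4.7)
The plan is to argue by induction on $k$, following the structure of the classical inverse Gowers theorem while exploiting the analytic characterization $U \in \mathcal{C}^{(k)} \iff \norm{U}_{Q^{k+1}}=1$ from Theorem \ref{Thm:CliffordHierarchy}. For the base case $k=2$, a direct Parseval-style argument suffices. Writing $U = \sum_{\vec a \in V^n} \Xi_U[\vec a]\, w(\vec a)$ in the Weyl basis and using the commutation identity $w(\vec b) w(\vec a) w(\vec b)^* = \omega^{\lambda(\vec b, \vec a)} w(\vec a)$ (with $\lambda$ the symplectic form on $V^n$), together with the inductive formula of Proposition \ref{Prop:NormReductions}, a direct computation gives
\begin{align*}
\norm{U}_{Q^2}^{4} \;=\; \sum_{\vec a \in V^n} |\Xi_U[\vec a]|^4 \;\le\; \Big( \max_{\vec a} |\Xi_U[\vec a]|^2\Big) \sum_{\vec a} |\Xi_U[\vec a]|^2 \;=\; \norm{U}_{q^2},
\end{align*}
using $\sum_{\vec a} |\Xi_U[\vec a]|^2 = \lra{U,U} = 1$ for unitary $U$, and the fact that $\mathcal{C}^{(1)}$ is, up to phase, the set of Weyl operators. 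Thus $\norm{U}_{Q^2} \ge c$ implies $\norm{U}_{q^2} \ge c^4$, with the constant independent of $n$.

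For the inductive step $k \ge 3$, I would use the identity $\norm{U}^{2^k}_{Q^k} = \mathbb{E}_{\vec a}\, \norm{\partial_{\vec a} U}_{Q^{k-1}}^{2^{k-1}}$ from Proposition \ref{Prop:NormReductions} together with a Markov-type argument to produce a set $S \subseteq V^n$ of density $\delta = \delta(c) > 0$ on which $\norm{\partial_{\vec a} U}_{Q^{k-1}} \ge c' > 0$. By the inductive hypothesis, for each $\vec a \in S$ there exists $V_{\vec a} \in \mathcal{C}^{(k-2)}$ with $|\lra{V_{\vec a}, \partial_{\vec a} U}| \ge d' > 0$. The remaining task is to synthesize from the family $\{V_{\vec a}\}_{\vec a \in S}$ a single element $V \in \mathcal{C}^{(k-1)}$ with $|\lra{V, U}| \ge d$. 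Two structural facts special to the quantum setting should help: (i) the defining property $\partial_{\vec a} V \in \mathcal{C}^{(k-2)}$ whenever $V \in \mathcal{C}^{(k-1)}$, which is the exact algebraic analog of the statement that the derivative of a degree-$(k-1)$ polynomial has degree $k-2$ and plays the role of a cocycle identity; and (ii) the quantum convolution framework of \cite{BGJ23a,BGJ23b,BJ2025a,BGJ2025a}, which provides a natural ``integration'' operation on operators.

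The main obstacle is precisely this synthesis step, together with showing that the resulting constant $d$ is independent of $n$. A naive compactness argument fails because $\mathcal{C}^{(k)}$ grows with $n$. In the classical setting this is exactly the central content of the Green--Tao--Ziegler inverse theorem and its finite-field analogs, whose proofs rely on Balog--Szemer\'edi--Gowers, Freiman-type structure theorems, and the nilsequence (or polynomial-phase) structure theorem. The quantum case is further complicated by the fact that $\mathcal{C}^{(k)}$ for $k \ge 3$ is not a group, so standard additive-combinatorial symmetrization arguments must be reworked. My approach would be to first establish a non-commutative Balog--Szemer\'edi--Gowers lemma to refine $\{V_{\vec a}\}$ into a subfamily satisfying an approximate cocycle identity compatible with (i), and then lift this to a global Clifford-hierarchy element via the non-commutative convolution. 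The case $k=3$ may already be tractable using only (i) together with Theorem \ref{Thm:DirectBound} applied to $\partial_{\vec a} U$, whereas the general case likely requires genuinely new non-commutative structure theorems.
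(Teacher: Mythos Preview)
This statement is a \emph{conjecture} in the paper, not a theorem; the paper provides no proof and explicitly describes it as open in the Outlook section. So there is no ``paper's own proof'' to compare your proposal against.

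That said, your base case $k=2$ is correct and in fact resolves the conjecture for $k=2$, something the paper does not observe. Using Proposition~\ref{Prop:Q2Form} (which gives $\norm{U}_{Q^2}^4 = \sum_{\vec a}|\Xi_U[\vec a]|^4$) and $\sum_{\vec a}|\Xi_U[\vec a]|^2 = \norm{U}_2^2 = 1$ for unitary $U$, your chain of inequalities yields $\norm{U}_{q^2} \ge \norm{U}_{Q^2}^4$, so $d=c^4$ works and is independent of $n$. This is a genuine (if easy) contribution beyond the paper.

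For $k\ge 3$, however, what you have is a research plan, not a proof. You correctly identify that the ``synthesis step'' --- going from correlations $|\langle V_{\vec a}, \partial_{\vec a} U\rangle| \ge d'$ for many $\vec a$ to a single $V \in \mathcal{C}^{(k-1)}$ correlating with $U$ --- is the entire content of the problem, and that it is the quantum analogue of the inverse theorem for the Gowers norms. Your instinct that a non-commutative Balog--Szemer\'edi--Gowers lemma and an approximate-cocycle argument are needed is reasonable, but these do not currently exist, and the fact that $\mathcal{C}^{(k)}$ is not a group for $k\ge 3$ is a serious structural obstacle you flag but do not overcome. In particular, the ``integration'' step via quantum convolution is purely speculative: nothing in the cited convolution framework provides a mechanism for producing Clifford-hierarchy elements from approximate cocycles. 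So the inductive step, as written, has a genuine gap precisely where the difficulty of the conjecture lies.
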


\section{Preliminaries}
An $n$-qudit system is based on a Hilbert space $\mathcal{H}^{\ot n}$, where $\mathcal{H}\simeq \complex^d$. The chosen natural number $d$ characterizes the system and we take $d$ to be prime. 
Let $L(\mathcal{H}^{\ot n})$ denote the set of all linear transformations  on $\mathcal{H}^{\ot n}$,
and let 
$D(\mathcal{H}^{\ot n})$ denote the set of  all quantum states on $\mathcal{H}^{\ot n}$.

Let $a=(p,q)\in V$
denote a point in the phase space $V$.
For $d$ an odd prime, the one-qudit Weyl operators \eqref{OneQuditWeyl} is 
\begin{align}
w(a)=\omega^{-2^{-1}pq}Z^pX^q\;,
\end{align}
where $2^{-1}=\frac{d+1}{2}$ denotes the inverse of $2$ in $\mathbb{Z}_d$.
If $d=2$, take the Weyl operators to be 
\begin{eqnarray}
w(a)=i^{-pq}Z^pX^q \;.
\end{eqnarray}
The Weyl operators satisfy 
$w(a)^{*} = w(-a)$,
and their multiplication is given by  a symplectic product $\langle a,a' \rangle_{s}$ on phase space, $\lra{a,a'}_{s}
= pq'-qp'$.
Then 
\begin{align}\label{WeylMultiplication}
w(a)\,w(b)w(a)^{*}
&= \omega^{\inner{a}{b}_s}\,
 w(b)\;,
\end{align}
\begin{align} \label{Weyl2-M}
T(a):=w(a)Tw(a)^{*}
&=\sum_{b} \omega^{\inner{a}{b}_s}\,
\Xi_{T}(b) w(b)\;, \\
(T(a))(b)&=T(a+b)\;.
\end{align}
In the case of $n$ qudits, denote 
$
V^n=\mathbb{Z}^n_d\times\mathbb{Z}^n_d \; 
$.  For $\vec a\in V^n$. Define the Weyl operators  $w(\vec{a})$ as tensor products,
\begin{eqnarray*}
w(\vec a)
=w(a_1)\ot...\ot w(a_n) \;.
\end{eqnarray*}
As stated in the introduction, 
\begin{prop}[\bf Weyl Orthogonality]\label{Prop:WeylOrthonormal}
The Weyl operators $\set{w(\vec{a})}_{\vec a\in V^n}$ form an orthonormal basis 
in $L(\mathcal{H}^{\ot n})$ with respect to the inner product \eqref{InnerProduct-n}.
\end{prop}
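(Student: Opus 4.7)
The plan is to verify orthonormality; the spanning claim then follows immediately from a dimension count, as $\dim L(\mathcal{H}^{\otimes n}) = d^{2n} = |V^n|$. Using the tensor-product factorization $w(\vec a) = w(a_1) \otimes \cdots \otimes w(a_n)$ and the multiplicativity of trace under tensor products, the inner product factorizes as $\langle w(\vec a), w(\vec b)\rangle = \prod_{i=1}^n \langle w(a_i), w(b_i)\rangle$, so it suffices to handle the single-qudit case $n=1$.

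For $n=1$, the key computation is $\mathrm{Tr}(w(a))$ for $a=(p,q)\in V$. The shift $X^q$ maps $|k\rangle \mapsto |k+q\rangle$ and fixes no computational basis vector when $q \not\equiv 0 \pmod{d}$, so in that case $Z^p X^q$ has only zero diagonal entries and thus vanishing trace. When $q = 0$, the trace reduces to $\sum_{k=0}^{d-1} \omega^{pk}$, which equals $d$ when $p \equiv 0$ and $0$ otherwise, since $\omega$ is a primitive $d$-th root of unity (using that $d$ is prime so no smaller $d' \mid d$ with $d' > 1$ interferes). Hence $\mathrm{Tr}(w(a)) = d\,\delta_{a,0}$.

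To finish, I would combine $w(a)^* = w(-a)$ with the composition law $w(-a)\,w(b) = \alpha(a,b)\,w(b-a)$ for some phase $\alpha(a,b)$, which follows by tracking the $\zeta^{-pq}$ prefactor together with the $ZX$ commutation $X^q Z^p = \omega^{-pq} Z^p X^q$. Then
\[
\langle w(a), w(b)\rangle \;=\; \tfrac{1}{d}\,\alpha(a,b)\,\mathrm{Tr}(w(b-a)) \;=\; \alpha(a,b)\,\delta_{a,b},
\]
and at $a = b$ one has $w(-a)\,w(a) = I = w(0)$, forcing $\alpha(a,a) = 1$. Pulling the $n$-qudit case out again by the tensor factorization yields $\langle w(\vec a), w(\vec b)\rangle = \delta_{\vec a, \vec b}$.

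The only mildly technical point is verifying that $w(-a)w(b)$ is a phase times $w(b-a)$; everything else is a direct computation with the definitions and the primitive-root-of-unity sum. Once that composition law is in hand, orthonormality is immediate, and the count $|V^n| = d^{2n} = \dim L(\mathcal{H}^{\otimes n})$ upgrades it to a basis.
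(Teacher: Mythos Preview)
Your proposal is correct and is exactly the kind of elementary verification the paper has in mind; the paper does not supply a proof at all, stating only that the result ``is elementary to verify.'' The one minor remark is that primality of $d$ is not actually needed for the geometric sum $\sum_{k=0}^{d-1}\omega^{pk}=d\,\delta_{p,0}$ with $p\in\mathbb{Z}_d$; that identity holds for any $d$ once $\omega$ is a primitive $d$-th root of unity, so you can safely drop that parenthetical.
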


The generators $w(\vec a)$ of the Weyl group are  our quantum  Fourier basis.  
Any linear operator $T$
on $\mathcal{H}^{\ot n}$ has a Fourier  expansion 
\be\label{T-Fourier}
T = \sum_{\vec a} \Xi_{T}\lrp{\vec a} w(\vec a)\;,
\quad\text{where \ }
\Xi_{T}\lrp{\vec{a}} = \lra{w(\vec a),T}\;.
\ee
The coefficients $\Xi_{T}(\vec a)$ are called the characteristic function of the  $n$-qudit transformation $T$ on $\mathcal{H}^{\ot n}$.  
This Fourier expansion has extensive
uses in a myriad of applications, including quantum boolean functions ~\cite{montanaro2010quantum},  classical simulation of quantum circuits~\cite{Bu19}, 
the quantum circuit complexity~\cite{Bucomplexity22}, nonlocal games based on noisy entangled states~\cite{Yao2019doubly, Yao21}, the sample complexity of quantum machine learning~\cite{BuPRA19_stat,bu2023effects} and many other ways. (See also a more general framework of quantum Fourier analysis in~\cite{JaffePNAS20}. )

\begin{Def}[\bf Quantum Translation and Expectation] \label{def:charfn}
The quantum translation of a tranformation $T$ is   $T(\vec a)=w(\vec a)\,T\,w(\vec a)^{*}$. The expectation $\mathbb{E}_{\vec a}$ of the $d^{2n}$ possible translations is the unweighted average, 
\be\label{ExpectationDef}
\mathbb{E}_{\vec a}\,T(\vec a)
= \frac{1}{d^{2n}}\sum_{\vec a\in V^{n}} T(\vec a)\;.
\ee
\end{Def}

\begin{prop} The expectation \eqref{ExpectationDef} over translations equals the normalized trace,
\be\label{ErgodicIdentity}
\mathbb{E}_{\vec a}\,T(\vec a)
= \frac{1}{d^n}\Tr{T}\;.
\ee
\end{prop}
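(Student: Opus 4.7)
The plan is to use the Fourier expansion of $T$ in the Weyl basis combined with character orthogonality. Writing $T=\sum_{\vec b}\Xi_{T}(\vec b)\, w(\vec b)$ via \eqref{T-Fourier}, the conjugation formula in \eqref{Weyl2-M} (which itself comes from \eqref{WeylMultiplication}) gives
\[
T(\vec a)=\sum_{\vec b}\Xi_{T}(\vec b)\,\omega^{\inner{\vec a}{\vec b}_{s}}\,w(\vec b),
\]
where $\inner{\cdot}{\cdot}_{s}$ denotes the symplectic form on $V^{n}$, extended coordinate-wise from $V$.

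Next I would interchange the finite sum with the expectation to obtain
\[
\mathbb{E}_{\vec a}\,T(\vec a) = \sum_{\vec b}\Xi_{T}(\vec b)\,\Bigl(\mathbb{E}_{\vec a}\,\omega^{\inner{\vec a}{\vec b}_{s}}\Bigr)\,w(\vec b).
\]
For $\vec b=0$ the character is identically $1$, so the inner expectation equals $1$. For $\vec b\neq 0$, nondegeneracy of the symplectic form on $V^{n}$ (inherited from the nondegeneracy of $\inner{a}{a'}_{s}=pq'-qp'$ on each factor $V$) makes the map $\vec a\mapsto \inner{\vec a}{\vec b}_{s}$ a surjective group homomorphism $V^{n}\to\mathbb{Z}_{d}$. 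Each value in $\mathbb{Z}_{d}$ is then attained equally often, and $\mathbb{E}_{\vec a}\,\omega^{\inner{\vec a}{\vec b}_{s}}=\frac{1}{d}\sum_{k\in\mathbb{Z}_{d}}\omega^{k}=0$.

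Only the $\vec b=0$ term survives, giving $\mathbb{E}_{\vec a}\,T(\vec a)=\Xi_{T}(0)\,I$ since $w(0)=I$. By the definition of the Fourier coefficient in \eqref{T-Fourier}, $\Xi_{T}(0)=\lra{w(0),T}=\frac{1}{d^{n}}\Tr{T}$, which yields \eqref{ErgodicIdentity} (with the right-hand side understood as $\frac{1}{d^{n}}\Tr{T}\cdot I$ on $\mathcal{H}^{\otimes n}$). The argument presents no real obstacle; the one point that should be verified with care is the character-sum vanishing, which rests on the nondegeneracy of the symplectic form noted above. An equivalent conceptual formulation is that conjugation by the Weyl group is a unitary $1$-design, so averaging any operator produces its normalized trace times the identity.
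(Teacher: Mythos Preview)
Your proof is correct and follows essentially the same approach as the paper: expand $T$ in the Weyl basis, use \eqref{WeylMultiplication}/\eqref{Weyl2-M} to introduce the character $\omega^{\inner{\vec a}{\vec b}_{s}}$, and show its average over $\vec a$ is $\delta_{\vec b,\vec 0}$. The only cosmetic difference is that the paper evaluates the character sum by factoring it coordinate-by-coordinate as $\prod_{j}\frac{1}{d^{2}}\sum_{p_{j},q_{j}}\omega^{p_{j}q'_{j}-q_{j}p'_{j}}=\delta_{\vec b,\vec 0}$, whereas you invoke nondegeneracy of the symplectic form to argue surjectivity onto $\mathbb{Z}_{d}$; both are standard and equivalent here since $d$ is prime.
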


\begin{proof}
For $\vec a =(\vec p, \vec q)$ and $\vec {b}=(\vec {p}',\vec{q}')$, the multiplication law \eqref{WeylMultiplication} gives
\begin{align*}
\mathbb{E}_{\vec a}\,w(\vec a) \,w(\vec b)\,w(\vec a)^{*}
= \frac{1}{d^{2n}} \sum_{\vec a}
\omega^{\lra{\vec{a},\vec{b}}_s}\,w(\vec{b})
 = \lrp{\prod_{j=1}^n 
\frac{1}{d^2} \sum_{p_j,q_j}
\omega^{\lrp{p_jq'_j-q_jp'_j}} 
}\,w(\vec{b})\nonumber
=  \delta_{\vec b,\vec 0}\;.
\end{align*}
Thus $
\mathbb{E}_{\vec a} T(\vec a)
=  \sum_{\vec{b}\in V}  \Xi_{T}(\vec b)\,
\delta_{\vec b,\vec 0} 
= \frac{1}{d^{n}}\Tr{T}$.
\end{proof}

\begin{Def}[\bf Stabilizer state~\cite{Gottesman96, Gottesman97}]\label{Def:Stabilizers}
A stabilizer vector for an $n$-qudit system is a vector that is invariant under a maximal abelian subgroup of Weyl operators. The  corresponding pure stabilizer state is the projection onto this eigenvector. A general stabilizer state $\rho$ is a  convex linear combination of pure stabilizer states.
\end{Def}

Stabilizer states encompass a broad class of quantum states, including Bell states and GHZ states, which play a fundamental role in quantum entanglement theory and quantum error correction theory. Several alternative characterizations of stabilizer states are  given in \cite{BGJ23a,BGJ23b}.

\begin{prop}[\cite{BGJ23a,BGJ23b}]\label{prop:stab_char}
    Given an $n$-qduit pure state $\rho=\proj{\psi}$, 
    it is a pure stabilizer state if and only if the absolute value of 
    the characteristic  function $|\Xi_{\rho}[\vec a]|\in \left\{0,\frac{1}{d^n}\right\}$, for any $\vec a\in V^n$.
\end{prop}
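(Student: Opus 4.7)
The plan is to handle the two directions separately; the forward direction is essentially a restatement of the structure of the stabilizer formalism, while the converse is driven by the equality condition in Cauchy--Schwarz.

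For the forward direction, suppose $\rho = \proj{\psi}$ is a pure stabilizer state, stabilized by a maximal abelian subgroup $S$ of the Weyl group. Standard stabilizer formalism gives $\abs{S} = d^n$ and the projector identity $\rho = \frac{1}{d^n}\sum_{g\in S} g$. Each $g\in S$ is a unimodular phase times a Weyl operator $w(\vec a_g)$, and the assignment $g\mapsto \vec a_g$ identifies $S$ with a Lagrangian subspace $L\subset V^n$. Substituting into the Weyl Fourier expansion \eqref{T-Fourier} shows that $\Xi_\rho[\vec a]$ equals a phase divided by $d^n$ when $\vec a\in L$ and vanishes otherwise, giving the claim.

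For the converse, set $M=\{\vec a:\Xi_\rho[\vec a]\ne 0\}$. First, from $\Tr{\rho^2}=1$ combined with Proposition~\ref{Prop:WeylOrthonormal} one obtains $\sum_{\vec a}|\Xi_\rho[\vec a]|^2 = 1/d^n$, which under the hypothesis forces $|M|=d^n$. Next, for any $\vec a\in M$ the identity $\Xi_\rho[\vec a] = \frac{1}{d^n}\overline{\iinner{\psi}{w(\vec a)\psi}}$ combined with $|\Xi_\rho[\vec a]|=1/d^n$ yields $|\iinner{\psi}{w(\vec a)\psi}|=1$. The equality case of Cauchy--Schwarz then gives $w(\vec a)\ket{\psi} = c_{\vec a}\ket{\psi}$ for a unimodular scalar $c_{\vec a}$. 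Hence $\ket{\psi}$ is a simultaneous eigenvector of every $w(\vec a)$ with $\vec a\in M$.

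It remains to upgrade $M$ to a Lagrangian subspace. For $\vec a,\vec b\in M$, the product rule coming from \eqref{WeylMultiplication} gives $w(\vec a + \vec b)\ket{\psi}$ as a phase times $\ket{\psi}$, so $\vec a+\vec b\in M$, and $M$ is a subgroup of $V^n$. Moreover, commuting $w(\vec a)$ and $w(\vec b)$ in two different orders on $\ket{\psi}$ yields $\omega^{\inner{\vec a}{\vec b}_s}\ket{\psi}=\ket{\psi}$, forcing $\inner{\vec a}{\vec b}_s=0$, so $M$ is isotropic. Combined with $|M|=d^n$, this makes $M$ a Lagrangian subspace. The operators $\{\overline{c_{\vec a}}w(\vec a):\vec a\in M\}$ therefore form a maximal abelian subgroup of the Weyl group that fixes $\ket{\psi}$, and the dimension count $|M|=d^n$ shows this common $+1$ eigenspace is one-dimensional, so $\ket{\psi}$ is a pure stabilizer state by Definition~\ref{Def:Stabilizers}.

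The main subtlety I expect is bookkeeping the $\zeta$-phases consistently when passing between $w(\vec a)w(\vec b)$ and $w(\vec a+\vec b)$, and verifying that the unimodular cocycle $c_{\vec a}$ genuinely produces a group rather than only a projective one; this is resolved by noting that the $c_{\vec a}$'s are defined by the action on a single vector $\ket{\psi}$, so associativity of this action automatically enforces the correct 2-cocycle identity.
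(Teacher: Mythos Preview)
The paper does not supply its own proof of this proposition; it is quoted from \cite{BGJ23a,BGJ23b} and used later in the proof of Proposition~\ref{thm:log_conv}. So there is no in-paper argument to compare against.

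That said, your argument is correct and is essentially the standard one. The forward direction is just the projector formula $\rho=\frac{1}{d^n}\sum_{g\in S}g$ read off in the Weyl basis. For the converse, the two key moves---the Parseval count $\sum_{\vec a}|\Xi_\rho(\vec a)|^2=1/d^n$ forcing $|M|=d^n$, and the Cauchy--Schwarz equality case turning $|\langle\psi|w(\vec a)|\psi\rangle|=1$ into $w(\vec a)\ket{\psi}=c_{\vec a}\ket{\psi}$---are exactly right, and the isotropy of $M$ from the commutation relation is the clean way to get the Lagrangian structure. Your closing remark about the cocycle is also the correct resolution: since every $\overline{c_{\vec a}}w(\vec a)$ fixes the same vector $\ket\psi$, any two phase multiples of the same $w(\vec a+\vec b)$ that both fix $\ket\psi$ must coincide, so the set $\{\overline{c_{\vec a}}w(\vec a):\vec a\in M\}$ is genuinely closed under multiplication and not merely projectively.
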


In addition to the stabilizer states, there is another important family of unitaries in quantum information called Clifford unitaries.

\begin{Def}[\bf Clifford unitary]
An  $n$-qudit unitary $U$ is a Clifford unitary if conjugation by $U$ maps every Weyl operator to another Weyl operator, up to a phase.
\end{Def}
It is elementary to see that conjugation by a Clifford unitary maps a stabilizer state to a stabilizer state.

To further understand the computational power of quantum gates beyond Clifford unitaries, the concept of the Clifford hierarchy was introduced by Gottesman and Chuang~\cite{GCh99}. 
The hierarchy consists of an increasing family of transformations $C^{(k)}\subset C^{(k+1)}$, where $C^{(1)}$ denotes the group generated by the $n$-qudit Weyl operators $w(\vec a)$.  

\begin{Def}[\bf Clifford hierarchy]
\label{Def:CLi_Hier}
For $k>1$, the $k^{\rm th}$ level $\mathcal{C}^{(k)}$ of  the Clifford hierarchy for $n$ qudits  is the set of unitaries on $\mathcal{H}^{\otimes n}$ such that 
\be
C^{(k)}
:=\left\{U:\, UC^{(1)} U^{*} \subset C^{(k-1)} \right\}\;.
\ee
\end{Def}
In case that $k=2$, the set $C^{(2)}$ is known as the set of all $n$-qudit  Clifford unitaries. 
In case $k=3$, the third level of the hierarchy $C^{(3)}$ includes the $T$ gate and the control-control $CCZ$ gate. These unitaries are used to realize universal quantum computation. Levels of the hierarchy reflect increasing complexity of a quantum gate $U$.

Aside from unitary transformations, 
a quantum  operation (or channel) is defined as a completely-positive and trace-preserving (CPTP) map. 

\begin{Def}[\bf Choi-Jamiołkowski isomorphism \cite{Choi75,Jamio72}]
Given a   quantum
channel $\Lambda$ from $\mathcal{H}_{A}$ to $\mathcal{H}_{A'}$,  the Choi state $J_{\Lambda}$ is
\begin{eqnarray*}
J_{\Lambda}=id_{A}\ot \Lambda (\proj{\Phi}) \;,
\end{eqnarray*}
where $| \Phi \rangle = \frac{1}{\sqrt{d^n}} \sum_{\vec j \in \mathbb{Z}_d^n}\ket{\vec j}_{A}\ot\ket{\vec j}_{A'} $ is the maximally entangled state on $\mathcal{H}_A\ot\mathcal{H}_{A'}$.
\end{Def}
For any input state $\rho$, the output state of the quantum channel $\Lambda(\rho)$ can be represented via the Choi state $J_{\Lambda}$ as 
\begin{eqnarray}\label{0127shi2}
\Lambda(\rho)
=d^n\Ptr{A}{J_{\Lambda} \cdot \rho^T_{A}\ot I_{A'}} \;,
\end{eqnarray}
where $\rho^T_{A}$ denotes the transpose of $\rho_A$.

\section{The Quantum Derivative}
Given two operators $A$ and $B$, one can consider the multiplicative, quantum derivative of $B$ in the direction $A$ is \eqref{DiscreteDerivative}, as  
$\partial_AB
    =ABA^{*}B^{*}
$.
Throughout this paper we specialize to the case that $A=w(\vec a)$ is a Weyl operator, whose conjugation implements a translation in quantum phase space, which we denote as $B(\vec a)=w(\vec a)B w(\vec a)^{*}$. This is a generalization of the notion of the classical discrete derivative, defined by translations on a classical configuration space.

\begin{Def}[\bf Quantum Discrete Derivative]{\label{Def:QuantumDerivative}} The multiplicative quantum derivative of $B$ in the direction $\vec a$ is
\be
\partial_{\vec a}B
=[w(\vec a), B]=w(\vec a) B w(\vec a)^{*}B^{*}
= B(\vec a)B^{*}\;.
\ee 
The corresponding $k$-th order derivative is given  inductively as
\be
\partial_{\vec a_k, ..., \vec a_1}B
=\partial_{\vec a_k} (\partial_{\vec {a}_{k-1}, ... ,{\vec {a}_1}}   B).
\ee
\end{Def}
While the multiplicative quantum derivative $\partial_{\vec a}B$ is not linear in $B$, nor does it satisfy the Leibniz rule.  However, the normalized trace satisfies  
\be\label{TraceDerivativeProduct}
    \frac{1}{d^n}
    \Tr{\partial_{\vec a}(B^*C)}
    =\lra{\partial_{\vec a}B, \partial_{\vec a}C}\;.
\ee

\begin{prop}[\bf Expectations of Products and Derivatives] \label{Prop:ExpectationDerivative}
The expectation defines a non-negative form, and the expectation of the quantum derivative is non-negative. In particular,
\be
\mathbb{E}_{\vec a} \lra{C, B(\vec a)}
= \frac{1}{d^n}\Tr{B} \  \overline{\frac{1}{d^n}\Tr{C}}\;,
\ee
and
\be\label{Q1-Positive}
0\leq\mathbb{E}_{\vec a} \, \partial_{\vec a}B=\abs{\frac{1}{d^{n}} \Tr{B}}^2\;.
\ee
\end{prop}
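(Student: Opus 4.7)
The plan is to derive both identities from the ergodic identity \eqref{ErgodicIdentity} established just above, combined with sesquilinearity of the inner product $\lra{\cdot,\cdot}$ and cyclicity of the trace. Neither statement requires any new computation beyond what is already packaged in that identity.

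For the first claim, I would pull the expectation inside the inner product by linearity,
\begin{equation*}
\mathbb{E}_{\vec a}\lra{C,B(\vec a)}
= \lra{C,\mathbb{E}_{\vec a}B(\vec a)}\;,
\end{equation*}
and then read \eqref{ErgodicIdentity} as asserting $\mathbb{E}_{\vec a} B(\vec a)=\frac{1}{d^n}\Tr{B}\cdot I$ (the scalar on the right of \eqref{ErgodicIdentity} being shorthand for a scalar multiple of the identity). Substituting yields $\frac{1}{d^n}\Tr{B}\,\lra{C,I}$, and since $\lra{C,I}=\frac{1}{d^n}\Tr{C^{*}}=\overline{\frac{1}{d^n}\Tr{C}}$, the stated formula follows.

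For the positivity statement \eqref{Q1-Positive}, I would combine the definition $\partial_{\vec a}B=B(\vec a)B^{*}$ with cyclicity of the trace to obtain
\begin{equation*}
\frac{1}{d^n}\Tr{\partial_{\vec a}B}
=\frac{1}{d^n}\Tr{B(\vec a)B^{*}}
=\frac{1}{d^n}\Tr{B^{*}B(\vec a)}
=\lra{B,B(\vec a)}\;,
\end{equation*}
and then apply the first claim with $C=B$. The average equals $\bigl|\frac{1}{d^n}\Tr{B}\bigr|^{2}$, which is manifestly non-negative.

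The only genuine point of care, rather than a true obstacle, is interpreting \eqref{Q1-Positive}: its left side $\mathbb{E}_{\vec a}\,\partial_{\vec a}B$ is formally operator-valued but its right side is a scalar. Consistency with the $k=1$ case of the quantum uniformity measure \eqref{eq:QUn} and with the scalar-valued reading of the ergodic identity \eqref{ErgodicIdentity} shows that the intended meaning is $\mathbb{E}_{\vec a}\,\frac{1}{d^n}\Tr{\partial_{\vec a}B}=\bigl|\frac{1}{d^n}\Tr{B}\bigr|^{2}$, which is exactly what the argument above produces.
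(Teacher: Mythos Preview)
Your proposal is correct and follows essentially the same route as the paper's proof: invoke the ergodic identity \eqref{ErgodicIdentity} together with the definition \eqref{InnerProduct-n} of the inner product to obtain the first identity, then specialize to $C=B$ and use the definition of $\partial_{\vec a}B$ to get the second. Your added remark clarifying that \eqref{Q1-Positive} should be read as $\mathbb{E}_{\vec a}\,\frac{1}{d^n}\Tr{\partial_{\vec a}B}$ is a helpful gloss that the paper leaves implicit.
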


\begin{proof}
The first identity is a consequence of the definition \eqref{InnerProduct-n}  and \eqref{ErgodicIdentity}. The second equality follows from setting $B=C$ and using Definition \ref{Def:QuantumDerivative} of the quantum derivative.
\end{proof}

\paragraph{Reduction to the Classical Case:}
The quantum derivative reduces to the classical discrete derivative for operators that are diagonal in the computational basis $\ket{\vec x}$.  Let $f:\mathbb{Z}^n_d\to \complex$ be a function defining the diagonal operator
\be\label{DiagonalOperator}
B_f=\sum_{\vec x \in \mathbb{Z}_{d}^{n}  }f(\vec x)\proj{\vec x}\;.
\ee
The multiplicative, classical derivative in direction $\vec q$ of a complex-valued function  $f(\vec x)$ on $\mathbb{Z}_d^n$ is 
\be\label{ClassicalDerivative}
(\partial_{\vec q}f)(\vec x)=f(\vec x+\vec q)\overline{f(\vec x)}\;.
\ee
If $f$ takes values on the unit circle, this is a natural definition of a discrete derivative in direction $\vec q$ of the phase of $f$.  For $f$ taking arbitrary complex values, this definition is still useful, and one also  uses the name ``derivative.''

\begin{prop}
\label{prop:red_der}
Let $B_f$ have the form  \eqref{DiagonalOperator}, and let $\vec a=(\vec p, \vec q)$. The quantum derivative $\partial_{\vec a}B_f$  is 
\begin{eqnarray}
\partial_{\vec a}B_f
=B_{\partial_{-\vec q}f}\;.
\end{eqnarray}
\end{prop}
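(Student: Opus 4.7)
The plan is to compute $w(\vec a) B_f w(\vec a)^{*}$ directly from the definition of the Weyl operators and then multiply by $B_f^{*}$.

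First I would observe that since $B_f$ is diagonal in the computational basis and $Z^{\vec p}$ is also diagonal in that basis, the two operators commute: $Z^{\vec p} B_f Z^{-\vec p} = B_f$. More generally, $Z^{\vec p}$ commutes with any diagonal operator. Writing $w(\vec a) = \zeta^{-\vec p\cdot \vec q} Z^{\vec p} X^{\vec q}$ (or with $i$ in place of $\zeta$ when $d=2$), the scalar phase $\zeta^{-\vec p\cdot \vec q}$ cancels against its conjugate from $w(\vec a)^{*}$. Moreover, once $X^{\vec q} B_f X^{-\vec q}$ is seen to be diagonal, the remaining $Z^{\vec p}$ conjugation acts trivially on it. This reduces the problem to
\begin{equation*}
B_f(\vec a) \;=\; X^{\vec q} B_f X^{-\vec q}\;.
\end{equation*}

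Next I would evaluate this shift. Using $X^{\vec q}\ket{\vec x} = \ket{\vec x + \vec q}$ (interpreted mod $d$ componentwise),
\begin{equation*}
X^{\vec q} B_f X^{-\vec q} \;=\; \sum_{\vec x} f(\vec x) \ket{\vec x + \vec q}\bra{\vec x + \vec q} \;=\; \sum_{\vec y} f(\vec y - \vec q) \proj{\vec y} \;=\; B_{f_{-\vec q}},
\end{equation*}
where $f_{-\vec q}(\vec y) := f(\vec y - \vec q)$. Combined with the fact that $B_f^{*} = B_{\overline{f}}$ and that the product of two diagonal operators is the diagonal operator of the pointwise product, Definition \ref{Def:QuantumDerivative} gives
\begin{equation*}
\partial_{\vec a} B_f \;=\; B_f(\vec a)\,B_f^{*} \;=\; B_{f_{-\vec q}\cdot \overline{f}}\;.
\end{equation*}

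Finally I would match this against the classical derivative \eqref{ClassicalDerivative} evaluated in direction $-\vec q$: $(\partial_{-\vec q}f)(\vec y) = f(\vec y + (-\vec q))\overline{f(\vec y)} = f(\vec y - \vec q)\overline{f(\vec y)}$, which is exactly the diagonal entry we obtained. This yields $\partial_{\vec a} B_f = B_{\partial_{-\vec q} f}$.

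There is no serious obstacle; the only thing to be careful about is bookkeeping the sign of the shift (the translation is by $+\vec q$ while the classical derivative direction comes out as $-\vec q$, because $B_f(\vec a)$ carries $f(\vec y - \vec q)$) and verifying that the $Z^{\vec p}$ part of the Weyl operator indeed contributes nothing to the conjugation of a diagonal operator. Both steps are immediate from the commutativity of diagonal matrices.
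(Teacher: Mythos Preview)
Your proof is correct and follows essentially the same approach as the paper: the phase in $w(\vec a)$ cancels, the $Z^{\vec p}$ conjugation acts trivially on diagonal operators, and the $X^{\vec q}$ conjugation shifts the argument of $f$ by $-\vec q$, yielding $B_{\partial_{-\vec q}f}$ after multiplying by $B_f^{*}$. The paper's write-up is slightly more compressed (it carries the $Z^{\vec p}$ through one step before dropping it), but the computation is identical.
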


\begin{proof} 
By definition, the phase in $\omega(\vec a)$ cancels and 
\begin{align*}
    \partial_{\vec a}B_f
=&
Z^{\vec p}X^{\vec q}B_f X^{-\vec q}Z^{-\vec p}(B_f)^{*} \\
=& \sum_{\vec x,\vec y} f(\vec x) \overline{f(\vec y)} \,Z^{\vec p}\proj{\vec x+\vec q}Z^{-\vec p} \ \proj{\vec y}\\
=& \sum_{\vec x,\vec y} f(\vec x-\vec q) \overline{f(\vec y)}\, \ket{\vec x}\braket{\vec x|\vec y} \bra{\vec y}\\
=&\sum_{\vec x} f(\vec x-\vec q) \overline{f(\vec x)}\, \proj{\vec x}  
= B_{\partial_{-\vec q}f}\;.
\end{align*}
In the last equality we use the definition \eqref{ClassicalDerivative} to obtain the  claimed result. 
\end{proof}

\section{Quantum Polynomials and the Clifford Hierarchy}
We use the quantum derivative to give an inductive definition of  the exponential $Q^{(k)}$ of a quantum polynomial  of degree $k$.

\begin{Def}[\bf Quantum Polynomials]
Let the exponential $Q^{(0)}$ of a quantum polynomial of degree zero be a scalar multiple of the identity, $Q^{(0)}=\complex I$. A unitary $U\in Q^{(k)}$ is the exponential of a polynomial of degree $k$, if $\partial_{\vec a}U\in Q^{(k-1)}$ for all $\vec a\in V^n$.     
\end{Def}

The definition of quantum polynomials coincides with the Clifford hierarchy.

\begin{prop}[\bf Algebraic Hierarchy]\label{Prop:CliffordHierarchyAlgebraic}
For integer $k\geq 1$, the unitary exponentials of 
degree-k polynomials $Q^{(k)}$ equal the unitaries in the $k$-th level $\mathcal{C}^{(k)}$ of the Clifford hierarchy.
\end{prop}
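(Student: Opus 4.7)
The plan is to proceed by induction on $k$, exploiting the parallel recursive definitions of $Q^{(k)}$ and $\mathcal{C}^{(k)}$ together with the two algebraic identities
\begin{equation*}
  \partial_{\vec a} U \;=\; w(\vec a)\cdot \bigl(U\, w(-\vec a)\, U^*\bigr), \qquad U\, w(\vec a)\, U^* \;=\; (\partial_{\vec a} U)^*\, w(\vec a),
\end{equation*}
which follow directly from Definition \ref{Def:QuantumDerivative} by rearranging factors and taking adjoints (using $w(\vec a)^* = w(-\vec a)$). These identities let one pass between $\partial_{\vec a} U$ and $U\,w(\vec a)\,U^*$ at the cost of multiplying by a Weyl operator and, in one direction, taking an adjoint.

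For the base case $k=1$, I would characterize $U \in Q^{(1)}$ using the Weyl expansion $U = \sum_{\vec b} \Xi_U(\vec b)\, w(\vec b)$ guaranteed by Proposition \ref{Prop:WeylOrthonormal}. The defining condition $w(\vec a)\,U\,w(\vec a)^* = \lambda_{\vec a}\, U$ for all $\vec a \in V^n$, combined with the multiplication law \eqref{WeylMultiplication}, yields $\Xi_U(\vec b)\,\omega^{\langle \vec a,\vec b\rangle_s} = \lambda_{\vec a}\, \Xi_U(\vec b)$. Nondegeneracy of the symplectic form then forces the Fourier support of $U$ to be a single point, and unitarity fixes its modulus, so $U$ is a phase times a single Weyl operator, hence an element of $\mathcal{C}^{(1)}$. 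The converse is immediate from \eqref{WeylMultiplication}.

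For the inductive step, assume $Q^{(j)} = \mathcal{C}^{(j)}$ for $1 \leq j \leq k-1$. The inclusion $\mathcal{C}^{(k)} \subseteq Q^{(k)}$ is the easier direction: if $U \in \mathcal{C}^{(k)}$ then $U\,w(-\vec a)\,U^* \in \mathcal{C}^{(k-1)}$, so the first identity together with closure of $\mathcal{C}^{(k-1)}$ under left multiplication by Weyl operators places $\partial_{\vec a} U$ in $\mathcal{C}^{(k-1)} = Q^{(k-1)}$. The reverse inclusion $Q^{(k)} \subseteq \mathcal{C}^{(k)}$ uses the second identity: from $\partial_{\vec a} U \in Q^{(k-1)} = \mathcal{C}^{(k-1)}$ one must conclude that $(\partial_{\vec a} U)^*\, w(\vec a) \in \mathcal{C}^{(k-1)}$.

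Both directions therefore reduce to closure properties of the levels $\mathcal{C}^{(j)}$ under multiplication by Weyl operators and, for the harder direction, under taking adjoints. Multiplication-closure follows from a parallel induction, since for $V \in \mathcal{C}^{(k-1)}$ a direct computation gives $(V\,w(\vec c))\,w(\vec a)\,(V\,w(\vec c))^* = \omega^{\langle \vec c,\vec a\rangle_s}\, V\,w(\vec a)\,V^* \in \mathcal{C}^{(k-2)}$. The main obstacle is adjoint-closure of $\mathcal{C}^{(k-1)}$, which is a well-known but nontrivial feature of the Clifford hierarchy and does not follow directly from the asymmetric recursive definition. I would address it by an auxiliary induction: inverting the injection $w(\vec a) \mapsto U\,w(\vec a)\,U^*$ and showing that $U^*$ sends each Weyl operator into $\mathcal{C}^{(k-2)}$, using conjugation-closure of the lower levels (which itself requires a small joint induction), or alternatively invoking this property as a standard fact from the Clifford-hierarchy literature.
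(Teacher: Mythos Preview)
Your inductive scaffolding and the $\mathcal{C}^{(k)} \subseteq Q^{(k)}$ direction are essentially the paper's argument: both use closure of $\mathcal{C}^{(k-1)}$ under one-sided multiplication by Weyl operators (the paper isolates this as Lemma~\ref{lem:w}).

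The gap is in your $Q^{(k)} \subseteq \mathcal{C}^{(k)}$ direction. By switching to the second identity $U\,w(\vec a)\,U^* = (\partial_{\vec a} U)^*\, w(\vec a)$, you force yourself to need adjoint-closure of $\mathcal{C}^{(k-1)}$. But the paper explicitly states (just before Corollary~\ref{Cor:AdjointOK}) that levels $k\geq 3$ of the Clifford hierarchy are \emph{not} closed under the adjoint operation. So this is not a ``well-known but nontrivial feature'' you can invoke; it is false, and the auxiliary induction you sketch cannot succeed.

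The fix is already in your hands: simply rearrange your \emph{first} identity instead. From $\partial_{\vec a} U = w(\vec a)\cdot (U\,w(-\vec a)\,U^*)$ you get
\[
  U\,w(-\vec a)\,U^* \;=\; w(-\vec a)\cdot \partial_{\vec a} U,
\]
and since $\partial_{\vec a} U \in Q^{(k-1)} = \mathcal{C}^{(k-1)}$ by induction, left-multiplication closure (which you already prove) gives $U\,w(-\vec a)\,U^* \in \mathcal{C}^{(k-1)}$ for every $\vec a$. This is exactly the paper's route, and it needs no adjoint at all.
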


\begin{lem}\label{lem:w}
If $U\in C^{(k)}$ and $w(\vec a), w(\vec b)$ are Weyl operators, then $w(a)Uw(b)\in C^{(k)}$.
\end{lem}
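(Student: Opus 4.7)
The plan is to prove this by induction on $k$, using the conjugation definition of the Clifford hierarchy together with the Weyl multiplication law \eqref{WeylMultiplication}.

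For the base case $k=1$, note that $C^{(1)}$ is the group generated by Weyl operators, which contains all products of the form $w(\vec a) w(\vec d) w(\vec b)$ (up to phases coming from \eqref{WeylMultiplication}). Since any $U\in C^{(1)}$ is a scalar multiple of some Weyl operator $w(\vec d)$, the product $w(\vec a) U w(\vec b)$ is again a scalar multiple of a Weyl operator and hence in $C^{(1)}$.

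For the inductive step, assume the statement holds at level $k-1$ and let $U\in C^{(k)}$. To verify $w(\vec a)Uw(\vec b)\in C^{(k)}$, I would pick an arbitrary Weyl operator $w(\vec c)$ and compute
\begin{equation*}
\bigl(w(\vec a)Uw(\vec b)\bigr)\, w(\vec c)\, \bigl(w(\vec a)Uw(\vec b)\bigr)^{*}
= w(\vec a)\,U\,\bigl(w(\vec b)w(\vec c)w(\vec b)^{*}\bigr)\,U^{*}\,w(\vec a)^{*}.
\end{equation*}
By \eqref{WeylMultiplication}, the inner bracket equals $\omega^{\langle \vec b,\vec c\rangle_{s}}\,w(\vec c)$, so the whole expression becomes $\omega^{\langle\vec b,\vec c\rangle_{s}}\, w(\vec a)\,\bigl(U\, w(\vec c)\,U^{*}\bigr)\,w(\vec a)^{*}$. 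Since $U\in C^{(k)}$, the element $V:=Uw(\vec c)U^{*}$ lies in $C^{(k-1)}$. Applying the inductive hypothesis to $V\in C^{(k-1)}$ with the Weyl operators $w(\vec a)$ and $w(\vec a)^{*}=w(-\vec a)$ shows $w(\vec a)Vw(\vec a)^{*}\in C^{(k-1)}$. Finally, multiplication by the scalar phase $\omega^{\langle \vec b, \vec c\rangle_{s}}$ preserves membership in $C^{(k-1)}$, because the defining condition $WC^{(1)}W^{*}\subset C^{(k-2)}$ depends on $W$ only through conjugation and is insensitive to unimodular scalar prefactors. Hence $(w(\vec a)Uw(\vec b))\,C^{(1)}\,(w(\vec a)Uw(\vec b))^{*}\subset C^{(k-1)}$, giving $w(\vec a)Uw(\vec b)\in C^{(k)}$.

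The only subtle point, and the one I would treat most carefully, is confirming that each level $C^{(k-1)}$ is stable under multiplication by scalar phases; this is immediate from Definition \ref{Def:CLi_Hier} since the defining relation only sees $W$ through the map $X\mapsto WXW^{*}$, but it is worth stating explicitly because the phase $\omega^{\langle\vec b,\vec c\rangle_{s}}$ genuinely appears in the computation and would otherwise be the source of confusion. Apart from that, the argument is a clean induction driven by the Weyl commutation relation.
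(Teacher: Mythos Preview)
Your proof is correct and essentially identical to the paper's: both use induction on $k$, compute $(w(\vec a)Uw(\vec b))\,w(\vec c)\,(w(\vec a)Uw(\vec b))^{*}$ via the Weyl commutation relation \eqref{WeylMultiplication}, and then invoke the inductive hypothesis on $w(\vec a)Vw(\vec a)^{*}$ with $V=Uw(\vec c)U^{*}$. The only cosmetic difference is that the paper absorbs the phase $\omega^{\langle\vec b,\vec c\rangle_{s}}$ into $V$ and implicitly uses phase invariance, whereas you keep the phase separate and justify its harmlessness explicitly; note that your justification via the defining relation $WC^{(1)}W^{*}\subset C^{(k-2)}$ literally requires $k\geq 3$, so for $k=2$ you should appeal directly to the fact that $C^{(1)}$ contains all scalar multiples of Weyl operators.
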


\begin{proof} 
By definition,  $C^{(1)}$ is the group generated by the the Weyl operators, so $U\in C^{(1)}$ ensures that $w(a)Uw(b)\in C^{(1)}$.  

We establish this by induction on $k$.
Assume the proposition is valid for $C^{(j)}$, with $j=1,\ldots,k$ and $U\in C^{(k+1)}$. Let $W=w(\vec a)Uw(\vec b)$. The induction is valid, if $W w(\vec c) W^{*}\in C^{(k)}$ for any $w(\vec c)$.
Note that 
    \begin{align*}
    W w(\vec c) W^{*}
    = \omega^{\lra{\vec b,\vec c}_{s}} \, w(\vec a) U w(\vec c)  U^*    w(a)^*
    = w(\vec a) V w(\vec a)^{*}\;,
\end{align*}
where $V=\omega^{\inner{ \vec b}{\vec c}_s}Uw(c)U^*\in C^{(k)}$. As we assume $U\in C^{(k+1)}$, it follows that $V\in C^{(k)}$.  So by the induction hypothesis, also $w(\vec a)V w(\vec a)^*\in C^{(k)}$.  Hence, $ w(\vec a)Uw(\vec b) \in C^{(k+1)}$.
\end{proof}

\begin{proof}[Proof of Proposition \ref{Prop:CliffordHierarchyAlgebraic}]
Let us start with case $k=1$. Any unitary $U\in Q^{(1)}$ and any Weyl operator $w(a)$ satisfy $\partial_{\vec a}U=w(\vec a)Uw(\vec a)^* U^{*}\in Q^{(0)}$. That is $w(a)Uw(a)^* U^*=cI$, where $c$ is a complex unit. Thus, $Uw(a)^* U=Uw(-a)U^*$ is a Weyl operator, from which we infer that $U\in C^{(1)}$.
Conversely, if  $U\in C^{(1)}$, then $U$ is a Weyl operator $w(b)$. Hence $w(a)Uw(a)^* U^* =w(a)w(b)w(a)^* w(b)^*=\omega^{\inner{a}{b}_s}_dI$, indicating $U\in Q^{(1)}$.

We proceed by induction. Assume that $C^{(j)}=Q^{(j)}$ for $j=1,\ldots,k$. Then if  $Uw(\vec a) U^*\in C^{(k)}$ for every $w(\vec a)$, one infers that $U\in C^{(k+1)}$. On the other hand, if $Uw(\vec a)^* U^*\in C^{(k)}$, then we infer from  Lemma \ref{lem:w} that  $w(\vec a)Uw(\vec a)^* U^*=\partial_{\vec a}U\in C^{(k)}=Q^{(k)}$.  Hence 
$U\in Q^{(k+1)}$.

Conversely, for $U\in Q^{(k+1)}$, we have $\partial_{\vec a}U=w(\vec a)Uw(\vec a)^* U^*=V\in Q^{(k)}=C^{(k)}$.  Again from Lemma \ref{lem:w}, we have $Uw(a)^* U^*=w(a)^* V\in C^{(k)}$. 
Hence, we have $U\in C^{(k+1)}$ and  $C^{(k+1)}=Q^{(k+1)}$.
\end{proof}

The 1st and 2nd levels of the Clifford hierarchy are closed under the adjoint operation;  this property does not hold in general.  However, some subsets  of unitaries are closed under the adjoint operation in any $k$-th level of Clifford hierarchy.
\begin{cor}\label{Cor:AdjointOK}
   The unitaries generated by degree-k polynomial
$U_f=\sum_{\vec x}\omega^{f(\vec x)}_d\proj{\vec x}$ are  closed under the adjoint operation in the $k$-th level of Clifford hierarchy.
 
\end{cor}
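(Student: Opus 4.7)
The plan is to induct on $k$, using the reduction formula for the quantum derivative on diagonal operators (Proposition~\ref{prop:red_der}) together with the identification $\mathcal{C}^{(k)}=Q^{(k)}$ from Proposition~\ref{Prop:CliffordHierarchyAlgebraic}. The underlying algebraic fact that makes this work is that the operation $f\mapsto -f$ commutes with the ``classical discrete derivative'' in the exponent up to a sign, so the adjoint of a diagonal phase unitary is again a diagonal phase unitary, and the quantum derivative of the adjoint is the adjoint of the quantum derivative.

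First I would record the elementary identities. Since $U_f=\sum_{\vec x}\omega_d^{f(\vec x)}\proj{\vec x}$ is diagonal in the computational basis, its adjoint is $U_f^{*}=U_{-f}$. Applying Proposition~\ref{prop:red_der} with $\vec a=(\vec p,\vec q)$ gives
\begin{equation*}
\partial_{\vec a}U_f \;=\; U_{h_{\vec q}},
\qquad\text{where } h_{\vec q}(\vec x)=f(\vec x-\vec q)-f(\vec x),
\end{equation*}
and in particular
\begin{equation*}
\partial_{\vec a}U_{-f} \;=\; U_{-h_{\vec q}} \;=\; \bigl(\partial_{\vec a}U_f\bigr)^{*}.
\end{equation*}
Both $\partial_{\vec a}U_f$ and $\partial_{\vec a}U_{-f}$ are therefore themselves diagonal phase unitaries of the form $U_g$, which is what will let the induction hypothesis apply to them.

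Next I would set up the induction on $k$. For the base case $k=1$, the statement is immediate: $\mathcal{C}^{(1)}$ is the Weyl (Pauli) group, which is closed under taking adjoints, so $U_f\in\mathcal{C}^{(1)}$ implies $U_f^{*}=U_{-f}\in\mathcal{C}^{(1)}$. For the inductive step, assume the corollary holds at level $k-1$ for every diagonal phase unitary of the form $U_g$. Let $U_f\in\mathcal{C}^{(k)}=Q^{(k)}$. Then for each $\vec a$, the derivative $\partial_{\vec a}U_f=U_{h_{\vec q}}$ lies in $Q^{(k-1)}=\mathcal{C}^{(k-1)}$. Since $U_{h_{\vec q}}$ is itself of the required diagonal phase form, the induction hypothesis yields $(U_{h_{\vec q}})^{*}\in\mathcal{C}^{(k-1)}$. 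By the identity above, this is exactly $\partial_{\vec a}U_{-f}$, so $\partial_{\vec a}U_{-f}\in \mathcal{C}^{(k-1)}$ for every $\vec a$. By the quantum-polynomial characterization, $U_{-f}\in Q^{(k)}=\mathcal{C}^{(k)}$, and this completes the induction.

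I do not foresee a genuine obstacle; the only subtle point is choosing the induction hypothesis with the right shape, namely ``closed under adjoint within $\mathcal{C}^{(k-1)}$ for \emph{all} diagonal phase unitaries $U_g$,'' so that it actually applies to the derivatives $\partial_{\vec a}U_f$ produced at the next level. Proposition~\ref{prop:red_der} guarantees that those derivatives remain inside this class, which is what makes the induction close cleanly.
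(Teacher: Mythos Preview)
Your proof is correct. The paper states this corollary without proof, and your argument—inducting on $k$ via the identification $\mathcal{C}^{(k)}=Q^{(k)}$ from Proposition~\ref{Prop:CliffordHierarchyAlgebraic} together with the reduction formula Proposition~\ref{prop:red_der}, so that $\partial_{\vec a}U_f^{*}=(\partial_{\vec a}U_f)^{*}$ stays diagonal of the same form—is precisely the argument the paper's placement of the corollary invites.
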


\section{Quantum Norms and Measures}\label{Sec:QuantumUniformityNorm}
Here we investigate in detail the quantum  measures mentioned in \eqref{eq:QUn}.  We use the term ``measure'' to designate a quantity that in the classical case is a norm; in this paper we show that these measures are norms for $k=2,3$. 

\begin{Def}
\label{Def:QGow}
Given a linear operator $B$ and an integer $k\geq 1$,
the quantum uniformity measure $\norm{\ \cdot \ }_{Q^k}$ of
order $k$ is defined by \eqref{eq:QUn}.
\end{Def}

We now prove the first inequality in Theorem \ref{Thm:Pos-Mono-Norm}. 
\begin{prop}[\bf Positivity]\label{Prop:Positivity}
The expression \eqref{eq:QUn} is non-negative, so the  quantum uniformity measure $\norm{\ \cdot \ }_{Q^k}$ can be taken as its positive root.  In particular for $k\geq2$,

\be
\norm{B}_{Q^k}^{2^k}
=\mathbb{E}_{\vec a_{k-1}, \ldots , \vec a_{1}}\abs{ \frac{1}{d^n} \Tr{\partial_{\vec a_{k-1},\cdots, \vec{a}_1} B } }^2\;,
\ee
and for $k=1$,
\be
\norm{B}^2_{Q^1} = \abs{\frac{1}{d^n}\Tr{B}}^{2}\;.
\ee

\end{prop}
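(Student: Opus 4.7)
The plan is to reduce everything to the one-step identity \eqref{Q1-Positive} of Proposition \ref{Prop:ExpectationDerivative} by peeling off the outermost expectation in the definition \eqref{eq:QUn}.

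First I would handle the case $k=1$ as a direct application: by \eqref{eq:QUn}, $\norm{B}_{Q^1}^{2}=\mathbb{E}_{\vec a_1}\frac{1}{d^n}\Tr{\partial_{\vec a_1}B}$, and this is precisely the scalar identity contained in \eqref{Q1-Positive}, namely $\mathbb{E}_{\vec a}\frac{1}{d^n}\Tr{\partial_{\vec a}B}=\abs{\frac{1}{d^n}\Tr{B}}^2$. This establishes both non-negativity and the claimed closed form in the base case.

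For $k\ge 2$, I would write the expectation over $(\vec a_1,\ldots,\vec a_k)$ as an iterated expectation, pulling the $\vec a_k$ average inside:
\begin{equation*}
\norm{B}_{Q^k}^{2^k}
=\mathbb{E}_{\vec a_{k-1},\ldots,\vec a_1}\,
\mathbb{E}_{\vec a_k}\,\frac{1}{d^n}\Tr{\partial_{\vec a_k}\!\bigl(\partial_{\vec a_{k-1},\ldots,\vec a_1}B\bigr)},
\end{equation*}
using the inductive definition $\partial_{\vec a_k,\ldots,\vec a_1}=\partial_{\vec a_k}\partial_{\vec a_{k-1},\ldots,\vec a_1}$ from Definition \ref{Def:QuantumDerivative}. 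With $\vec a_1,\ldots,\vec a_{k-1}$ held fixed, I would set $C:=\partial_{\vec a_{k-1},\ldots,\vec a_1}B$, so that the inner expectation is exactly $\mathbb{E}_{\vec a_k}\frac{1}{d^n}\Tr{\partial_{\vec a_k}C}$. Applying \eqref{Q1-Positive} to $C$ replaces this inner expectation by $\abs{\frac{1}{d^n}\Tr{C}}^2$, which gives
\begin{equation*}
\norm{B}_{Q^k}^{2^k}
=\mathbb{E}_{\vec a_{k-1},\ldots,\vec a_1}\,
\abs{\frac{1}{d^n}\Tr{\partial_{\vec a_{k-1},\ldots,\vec a_1}B}}^2,
\end{equation*}
which is the claimed identity. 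Non-negativity is then immediate since the integrand is a square modulus, so the $2^k$-th root defining $\norm{B}_{Q^k}$ is well-defined.

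There is no real obstacle here; the only thing to be careful about is the bookkeeping of which variables are averaged and which are fixed when invoking \eqref{Q1-Positive}, and making clear that $C=\partial_{\vec a_{k-1},\ldots,\vec a_1}B$ is a well-defined $n$-qudit operator (a product of $2^{k-1}$ translates of $B$ and $B^*$) to which the one-step identity applies verbatim. The argument is essentially a Fubini-type decomposition combined with a single use of the base-case positivity already established in Proposition \ref{Prop:ExpectationDerivative}.
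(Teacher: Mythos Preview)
Your proposal is correct and follows essentially the same approach as the paper: peel off the outermost expectation over $\vec a_k$ and reduce to the one-step identity for $C=\partial_{\vec a_{k-1},\ldots,\vec a_1}B$. The only cosmetic difference is that you invoke the packaged identity \eqref{Q1-Positive} from Proposition~\ref{Prop:ExpectationDerivative}, whereas the paper expands $\partial_{\vec a_k}C=C(\vec a_k)C^*$ explicitly and applies \eqref{ErgodicIdentity} directly.
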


\begin{proof} 
Write for $k\geq2$,
\begin{align}   \label{Qkmeasure} 
\norm{B}^{2^{k}}_{Q^k} 
&= \mathbb{E}_{\vec a_k, \vec a_{k-1}, \ldots , \vec a_{1}}\frac{1}{d^n}
\Tr{\partial_{\vec a_{k}}  \partial_{\vec a_{k-1},\cdots, \vec{a}_1} B} \nonumber\\
& \hskip -.35in 
=\mathbb{E}_{\vec a_{k-1}, \ldots , \vec a_{1}}\frac{1}{d^n}  
\Tr{\mathbb{E}_{\vec {a}} \, 
\lrp{\partial_{\vec a_{k-1},\cdots, \vec{a}_1} B} (\vec a)
\lrp{\partial_{\vec a_{k-1},\cdots, \vec{a}_1} B }^{*}}\nonumber\\
& \hskip -.35in =   \mathbb{E}_{\vec a_{k-1}, \ldots , \vec a_{1}}\abs{ \frac{1}{d^n} \Tr{\partial_{\vec a_{k-1},\cdots, \vec{a}_1} B } }^2\;.
\end{align}
Here we use \eqref{ErgodicIdentity} to obtain the final equality. In the case $k=1$, one omits the derivatives $\vec {a}_1, \ldots, \vec{a}_{k-1}$. As an expectation of an absolute squared quantity, the result is non-negative. See also \eqref{Q1-Positive}.
\end{proof}

\begin{prop}[\bf Inductive Property]\label{Prop:InductiveDefn}
The quantum uniformity measures $\norm{B}_{Q^{k}}$ satisfy the inductive relation,
\be
\norm{B}_{Q^k}^{2^k}
=\mathbb{E}_{\vec a}
\norm{\partial_{\vec a}B}^{2^{k-1}}_{Q^{k-1}}\;.
\ee
\end{prop}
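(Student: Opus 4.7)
The plan is to prove this by unrolling the iterative definition of the multiple derivative and regrouping. The key observation is that the iterative definition in Definition~\ref{Def:QuantumDerivative} is manifestly associative: since $\partial_{\vec a_k, \ldots, \vec a_1}B = \partial_{\vec a_k}\!\left(\partial_{\vec a_{k-1}, \ldots, \vec a_1} B\right)$ by definition, iterating this gives
\begin{equation*}
\partial_{\vec a_k, \ldots, \vec a_1}B
= \partial_{\vec a_k, \ldots, \vec a_2}\!\left(\partial_{\vec a_1}B\right),
\end{equation*}
where $\partial_{\vec a_k, \ldots, \vec a_2}$ is itself defined iteratively. No phase-space or commutator computation is needed, only composition of operators; in particular I do not need the derivatives to commute pointwise.

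First I would write out the definition of $\norm{B}_{Q^{k}}^{2^k}$ from \eqref{eq:QUn}, substitute the associativity identity above for the inner derivative, and split the expectation as $\mathbb{E}_{\vec a_k, \ldots, \vec a_1} = \mathbb{E}_{\vec a_1}\mathbb{E}_{\vec a_k, \ldots, \vec a_2}$. Second, for each fixed $\vec a_1$, I would set $C := \partial_{\vec a_1}B$ and observe that
\begin{equation*}
\mathbb{E}_{\vec a_k, \ldots, \vec a_2}\,\frac{1}{d^n}\Tr{\partial_{\vec a_k, \ldots, \vec a_2}C}
= \norm{C}^{2^{k-1}}_{Q^{k-1}}
= \norm{\partial_{\vec a_1}B}^{2^{k-1}}_{Q^{k-1}},
\end{equation*}
which is just the definition of $\norm{\ \cdot \ }_{Q^{k-1}}$ applied to $C$ after relabeling the dummy indices $\vec a_{j+1}\mapsto \vec b_j$ for $j=1, \ldots, k-1$. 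Third, I would take the outer $\mathbb{E}_{\vec a_1}$ and rename $\vec a_1 \to \vec a$ to obtain the claimed identity.

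I do not foresee a real obstacle, since the argument is bookkeeping; the only subtle point is to make sure the reader sees that the iterated-derivative notation $\partial_{\vec a_k, \ldots, \vec a_1}$ is legitimately associative in the way used. For clarity I would state the associativity identity explicitly as the opening step, rather than letting it be implicit. A separate sentence can note that this formula, together with Proposition~\ref{Prop:ExpectationDerivative} (or equivalently with the $k=1$ base case $\norm{B}_{Q^1}^{2} = |\frac{1}{d^n}\Tr{B}|^{2}$), recovers the alternative expression in Proposition~\ref{Prop:NormReductions} by one more application of the same identity at the innermost step. This also makes transparent the inductive structure that is used later in proving monotonicity and positivity.
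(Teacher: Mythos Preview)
Your proposal is correct and follows essentially the same approach as the paper: the paper's proof is the one-line computation $\norm{B}^{2^{k}}_{Q^k} = \mathbb{E}_{\vec {a}_{1}}\mathbb{E}_{\vec a_{k}, \ldots , \vec a_{2}} \frac{1}{d^n}\Tr{ \partial_{\vec a_{k},\ldots, \vec{a}_2} \partial_{\vec a_{1}} B} = \mathbb{E}_{\vec a} \norm{\partial_{\vec a}B}^{2^{k-1}}_{Q^{k-1}}$, which is exactly your associativity-plus-Fubini argument with less commentary. Your explicit remark on the associativity of the iterated derivative and the relabeling of dummy indices is a welcome clarification but not a different idea.
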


\begin{proof}
Write
\begin{align}  
\norm{B}^{2^{k}}_{Q^k} 
= \mathbb{E}_{\vec {a}_{1}}\mathbb{E}_{\vec a_{k}, \ldots , \vec a_{2}}  \frac{1}{d^n}
\Tr{ \partial_{\vec a_{k},\cdots, \vec{a}_2} \partial_{\vec a_{1}}  B} 
\nonumber&=\mathbb{E}_{\vec a} \norm{\partial_{\vec a}B}^{2^{k-1}}_{Q^{k-1}}\;,
\end{align}
as claimed.
\end{proof}

\begin{prop}\label{lem:quan_inv}
The measure $\norm{\ \cdot\ }_{Q^k}$ is invariant under conjugation by a Clifford unitary. In other words,  for $U \in C^{(2)}$,
\begin{eqnarray}
\label{ConjugationInvariance}
\norm{UBU^{*}}_{Q^k}
=\norm{B}_{Q^k}\;.
\end{eqnarray}
For $k\geq 2$, the measure 
$\norm{\ \cdot\ }_{Q^k}$ is invariant under left or right  multiplication by a Weyl operator, 
\begin{eqnarray}\label{InvarianceUnderWeyl}
\norm{w(\vec a)B w(\vec b)}_{Q^k}
=\norm{B}_{Q^k}
\;.
\end{eqnarray}
\end{prop}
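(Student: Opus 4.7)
The plan is to prove the two invariance statements in turn, each reducing to a direct manipulation of a single quantum derivative.

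For \eqref{ConjugationInvariance}, I will use the defining property of $U\in C^{(2)}$: there exists a bijection $\sigma\colon V^n\to V^n$ and unit-modulus scalars $\mu(\vec c)$ with $U^{*}w(\vec c)U=\mu(\vec c)\,w(\sigma(\vec c))$. A short calculation, using $(U^{*}w(\vec c)U)^{*}=U^{*}w(\vec c)^{*}U$ and the cancellation $\mu(\vec c)\overline{\mu(\vec c)}=1$, yields
\begin{equation}
\partial_{\vec c}(UBU^{*})\;=\;U\,(\partial_{\sigma(\vec c)}B)\,U^{*}.
\end{equation}
Iterating gives $\partial_{\vec a_k,\ldots,\vec a_1}(UBU^{*})=U\,\partial_{\sigma(\vec a_k),\ldots,\sigma(\vec a_1)}(B)\,U^{*}$. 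Cyclicity of the trace removes the outer $U,U^{*}$, and since $\sigma$ is a bijection on $V^n$, the change of variables $\vec a_j\mapsto\sigma(\vec a_j)$ leaves the iterated expectation in \eqref{eq:QUn} unchanged. Hence $\|UBU^{*}\|_{Q^k}=\|B\|_{Q^k}$ for every $k\geq 1$.

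For \eqref{InvarianceUnderWeyl}, I will compute $\partial_{\vec c}(w(\vec a)Bw(\vec b))$ directly from \eqref{WeylMultiplication}. Two applications of the Weyl commutation law, one to move $w(\vec b)w(\vec c)^{*}w(\vec b)^{*}$ past the trailing $B^{*}$ (producing the phase $\omega^{-\langle\vec b,\vec c\rangle_s}$) and one to move $w(\vec c)w(\vec a)$ into $\omega^{\langle\vec c,\vec a\rangle_s}w(\vec a)w(\vec c)$, collapse the five outer factors into a single phase times a conjugation:
\begin{equation}
\partial_{\vec c}(w(\vec a)\,B\,w(\vec b))\;=\;\omega^{\langle \vec c,\,\vec a+\vec b\rangle_s}\;w(\vec a)\,(\partial_{\vec c}B)\,w(\vec a)^{*}.
\end{equation}
I then invoke Proposition~\ref{Prop:InductiveDefn} in the form $\|T\|_{Q^k}^{2^k}=\mathbb{E}_{\vec c}\,\|\partial_{\vec c}T\|_{Q^{k-1}}^{2^{k-1}}$. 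The unit-modulus scalar $\omega^{\langle\vec c,\vec a+\vec b\rangle_s}$ is absorbed by $\|\cdot\|_{Q^{k-1}}$: at the base case $k-1=1$ this is visible from $\|\lambda M\|_{Q^1}=|\lambda|\,\|M\|_{Q^1}$, and for $k-1\geq 2$ it follows from $\partial_{\vec a_1}(\lambda M)=|\lambda|^2\,\partial_{\vec a_1}M$ taken on the outermost derivative. The surviving conjugation by $w(\vec a)\in C^{(2)}$ is then stripped off by \eqref{ConjugationInvariance}, already established. Thus the inner norm equals $\|\partial_{\vec c}B\|_{Q^{k-1}}$, and averaging over $\vec c$ reproduces $\|B\|_{Q^k}^{2^k}$.

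The main obstacle is purely bookkeeping of the Weyl phases and of the bijection $\sigma$; no deeper difficulty arises, since both arguments are elementary manipulations in the Weyl algebra. The one conceptual point worth flagging is that the hypothesis $k\geq 2$ in \eqref{InvarianceUnderWeyl} is genuinely necessary: at $k=1$ one has $\|w(\vec a)Bw(\vec b)\|_{Q^1}=|\mathrm{Tr}(w(\vec a)Bw(\vec b))|/d^n$, which generically differs from $|\mathrm{Tr}(B)|/d^n=\|B\|_{Q^1}$, so the argument rightly fails there.
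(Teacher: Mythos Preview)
Your proof is correct and follows essentially the same approach as the paper's: both arguments hinge on the identity $\partial_{\vec c}(UBU^{*})=U(\partial_{U^{*}w(\vec c)U}B)U^{*}$ for \eqref{ConjugationInvariance}, and on computing that $\partial_{\vec c}$ of a Weyl-translated operator is a unit phase times a Weyl-conjugated $\partial_{\vec c}B$ for \eqref{InvarianceUnderWeyl}. The only cosmetic difference is that for \eqref{InvarianceUnderWeyl} the paper first reduces $w(\vec a)Bw(\vec b)$ to $Bw(\vec b')$ via the already-proved conjugation invariance and then handles the one-sided case, whereas you compute $\partial_{\vec c}(w(\vec a)Bw(\vec b))$ in one stroke; your route is marginally more direct but not substantively different.
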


\begin{proof}
For $k=1$, we infer from Proposition \ref{Prop:Positivity}, that for any unitary $U$, 
\begin{align*}
\norm{UBU^*}_{Q^1}^2
= \abs{\frac{1}{d^n}\Tr{UBU^*}}^{2}
= \abs{\frac{1}{d^n}\Tr{B}}^{2}
=\norm{B}_{Q^1}^2\;.
\end{align*}
We proceed by induction on $k$. Assume hat the \eqref{ConjugationInvariance} holds  for $\norm{\ \cdot \ }_{Q^k}$, and consider $\norm{\ \cdot \ }_{Q^{k+1}}$. 
Rewrite the quantum derivative $\partial_{\vec a}(UAU^{*})$ as
\begin{align*}
\nonumber\partial_{\vec a}(UBU^{*})
&= w(\vec a) UBU^{*} w(\vec a)^*  UB^{*} U^{*}\\
&= UU^{*}w(\vec a) UBU^{*} w(\vec a)^*  UB^{*} U^{*} \\
&=U(\partial_{U^* w(\vec a) U}B) U^*.
\end{align*}
Then, using  Proposition \ref{Prop:InductiveDefn} and the inductive hypothesis, we have 
\begin{align*}
&\norm{UBU^*}_{Q^{k+1}}^{2^{k+1}}\\
=&\mathbb{E}_{\vec a}
\norm{\partial_{\vec a}(UBU^*)}_{Q^k}^{2^k}
=\mathbb{E}_{\vec a}
\norm{U(\partial_{U^* w(\vec a)U}B)U^*}_{Q^k}^{2^k}\\
=&\mathbb{E}_{\vec a}
\norm{\partial_{U^* w(\vec a)U}B}_{Q^k}^{2^k}
=\mathbb{E}_{\vec b}
\norm{\partial_{\vec b}B}_{Q^k}^{2^k}
=\norm{B}^{2^{k+1}}_{Q^{k+1}}\;.
\end{align*}
Here we use the fact that conjugation by $U^*$ maps $w(\vec a)$ to another element of the Weyl group, which equals $w(\vec b)$, up to a phase. Since this map in $\Z_d$ is 1-to-1, the summation over $\vec b$ gives the same result.

To establish \eqref{InvarianceUnderWeyl}, note that 
$w(\vec a) Bw(\vec b)=w(\vec a)Bw(\vec a)^* w(\vec a+\vec b)$ up to a  phase $\chi$, and $\norm{B\chi}_{Q^k}= \norm{B}_{Q^k} $ Since $w(\vec a)=U\in C^{(1)}$, the first statement of the proposition means that we only need to show that  $\norm{Bw(\vec b)}_{Q^k}
=\norm{B}_{Q^k}$.
The derivative $\partial_{\vec a}
(Bw(\vec b))$ can be written 
\begin{align}\label{PartialBw}
\partial_{\vec a} (Bw(\vec b))
= \omega^{\lra{\vec a,\vec b}_s}\, B(\vec a)B^*
= \omega^{\lra{\vec b,\vec a}_s}\, \partial_{\vec a} B\;.
\end{align}
Using Proposition \ref{Prop:InductiveDefn} and \eqref{PartialBw}, 
\begin{align*}
\norm{Bw(\vec b)}^{2^{k}}_{Q^{k}}
&=\mathbb{E}_{\vec a}
\norm{\partial_{\vec a}(Bw(\vec b))}^{2^{k-1}}_{Q^{k-1}}\\
&\hskip-.45in =\mathbb{E}_{\vec a}
\norm{\omega^{\inner{\vec b}{\vec a}_s}\partial_{\vec a}B}^{2^{k-1}}_{Q^{k-1}}
= \mathbb{E}_{\vec a}
\norm{\partial_{\vec a}B}^{2^{k-1}}_{Q^{k-1}}
=\norm{B}^{2^{k}}_{Q^{k}}.
\end{align*}
This completes the proof.
\end{proof}

\section{Schatten and Fourier Norms}
\label{Sect:SchattenFourier}
Robert Schatten introduced $L^{p}$ norms for linear transformations, which are used widely in analysis. We use the Schatten and Fourier norms on the coefficients $\Xi_{B}(\vec b)=\lra{w(\vec b), B}$ defined as
\begin{align*}
  \norm{B}_p=\lrp{\frac{1}{d^n} \Tr{(B^*B)^{p/2}}}^{1/p}\;,
  \ \ 
  \norm{\Xi_B}_{\ell^p}^{p} 
= { \sum_{\vec b}\abs{\Xi_{B}(\vec b)}^{p}   }\;.  
\end{align*}
For $p=2$ they are related by  
\be\label{Schatten-Fourier}
\norm{B}_{2} 
= \norm{\Xi_{B}}_{\ell^2}\;.
\ee

\section{Classical Uniformity Norms}
Our quantum uniformity measures $\norm{\ \cdot \ }_{Q^k}$ are a generalization of 
Gowers' classical uniformity norms $\norm{\ \cdot \ }_{U^k}$; see~\cite{Gowers98, tao2012higher}.  For a function $f(\vec x)$ on $\mathbb{Z}^n_d$, the classical uniformity norm can defined inductively as follows,
\begin{align*}
\norm{f}^{2^{k}}_{U^{k}}
= \mathbb{E}_{\vec q}
\norm{
\partial_{\vec q}f
}^{2^{k-1}}_{U^{k-1}}\;,
\ \text{and}\ 
\norm{f}_{U^{1}}
= \abs{\frac{1}{d^{n}} \sum_{\vec x \in \mathbb{Z}^n_d }  f(\vec x)}\;.
\end{align*}
For transformations $B$ that are diagonal in the computational basis, the quantum uniformity measures reduce to the classical norms.

\begin{prop}[\bf Reduction to Gowers' Norms] \label{Prop:ClassicalUniformity}
Let $B_f$ have the form \eqref{DiagonalOperator}, and let  $k\geq1$. Then 
\begin{eqnarray}\label{Reduction-Classical}
\norm{B_f}_{Q^k}
=\norm{f}_{U^{k}}\;.
\end{eqnarray}
\end{prop}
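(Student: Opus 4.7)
The plan is to induct on $k$, using Proposition~\ref{prop:red_der} (which tells us the quantum derivative of a diagonal operator is diagonal, with entries given by a classical derivative) together with the inductive characterization of $\norm{\,\cdot\,}_{Q^k}$ from Proposition~\ref{Prop:InductiveDefn} and the parallel inductive definition of Gowers' $\norm{\,\cdot\,}_{U^k}$.

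For the base case $k=1$, I would just compute directly: $\norm{B_f}_{Q^1} = \bigl|\frac{1}{d^n}\Tr{B_f}\bigr| = \bigl|\frac{1}{d^n}\sum_{\vec x} f(\vec x)\bigr| = \norm{f}_{U^1}$, since $B_f$ is diagonal in the computational basis.

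For the inductive step, assume \eqref{Reduction-Classical} holds at level $k-1$. By Proposition~\ref{Prop:InductiveDefn},
\begin{align*}
\norm{B_f}_{Q^k}^{2^k}
=\mathbb{E}_{\vec a}\,\norm{\partial_{\vec a}B_f}_{Q^{k-1}}^{2^{k-1}}
=\mathbb{E}_{(\vec p,\vec q)}\,\norm{B_{\partial_{-\vec q}f}}_{Q^{k-1}}^{2^{k-1}},
\end{align*}
where the second equality uses Proposition~\ref{prop:red_der}. The key observation is that the right-hand side is independent of $\vec p$, so the average over $\vec p$ is trivial. Applying the inductive hypothesis and then making the bijective change of variable $\vec q \mapsto -\vec q$ on $\mathbb{Z}_d^n$ gives
\begin{align*}
\norm{B_f}_{Q^k}^{2^k}
=\mathbb{E}_{\vec q}\,\norm{\partial_{-\vec q}f}_{U^{k-1}}^{2^{k-1}}
=\mathbb{E}_{\vec q}\,\norm{\partial_{\vec q}f}_{U^{k-1}}^{2^{k-1}}
=\norm{f}_{U^{k}}^{2^{k}},
\end{align*}
by the inductive definition of the Gowers norm stated just before the proposition.

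There is no serious obstacle here; the argument is essentially bookkeeping. The only subtlety worth highlighting is that the quantum derivative direction $\vec a = (\vec p,\vec q)$ has a ``momentum'' component $\vec p$ that acts trivially on operators diagonal in the computational basis (the $Z^{\vec p}$ factor cancels as in the proof of Proposition~\ref{prop:red_der}), which is what makes the $\vec p$-average collapse and produces exactly the classical $\vec q$-average that appears in Gowers' norm.
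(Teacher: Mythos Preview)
Your proof is correct and follows essentially the same route as the paper: induction on $k$, with the base case handled by direct computation of the trace and the inductive step by combining Proposition~\ref{Prop:InductiveDefn} with Proposition~\ref{prop:red_der}. You are slightly more explicit than the paper in noting the triviality of the $\vec p$-average and the change of variable $\vec q\mapsto -\vec q$, but the argument is otherwise identical.
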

\begin{proof}
The identity \eqref{Reduction-Classical} holds for $k=1$ as a consequence of Proposition \ref{Prop:Positivity} and calculation of the trace in the computational basis.  
We proceed by induction on $k$.  
Assume \eqref{Reduction-Classical} holds for $j=1,\ldots, k-1$. Then using Proposition \ref{prop:red_der} to compute $\partial_{\vec a}B_f$, one has  
\begin{align*}
\norm{B_f}^{2^{k}}_{Q^{k}}
=\mathbb{E}_{\vec a}
\norm{\partial_{\vec a} B_f}^{2^{k-1}}_{Q^{k-1}}
=\mathbb{E}_{\vec a}
\norm{
B_{\partial_{-\vec q}f}
}^{2^{k-1}}_{Q^{k-1}}
=\norm{f}^{2^{k}}_{U^{k}}\;.
\end{align*}
In the last equality we use the inductive hypothesis.
\end{proof}

\section{Examples of Some Quantum Uniformity Measures}\label{Sect:SomeExamples}
Here we give the quantum uniformity measures for several gates, with  details in the Supporting Information.
We consider qudit systems with local dimension equal $2$ or an odd prime. Observe these quantum uniformity measures increase monotonically in  $k$, as proved later in Theorem~\ref{Thm:Monotonicity}.

\begin{Exam}[\bf Weyl Gates]\label{Exam:PauliGates}
    The Weyl (or Pauli) gates are given by the Weyl operators $w(\vec b)$. Their  quantum uniformity norms are 
\begin{align*}
    \norm{w(\vec b)}_{Q^1}=\delta_{\vec b, \vec 0}\;,\qquad
 \norm{w(\vec b)}_{Q^k}=1, \quad \forall k\geq 2\;. 
\end{align*}
   
\end{Exam}

\begin{Exam}[\bf The Fourier Gate]\label{Exam:FourierGate}
For $d=2$, this gate is called the  Hadamard gate. 
The Fourier gate for an $n$-qudit system of dimension $d$ is
$
F=\frac{1}{d^{n/2}}\sum_{x,y\in \Z_{d}^{n}}\omega^{\vec {x}\cdot \vec{y}}\ket{\vec {x}}\bra{\vec{y}}
$. 
\begin{align*}\label{FourierQ1}
        \norm{F}_{Q^1}
          &=\left\{
    \begin{aligned}
            \,0\;,\phantom{,} &\quad \text{if}~~ d=2\\
      \frac{1}{d^{n}}\;, &\quad \text{if d is odd prime}\\
    \end{aligned}
       \right.\;,\\
 \norm{F}_{Q^2} &=\left\{
    \begin{aligned}
        \frac{1}{2^{n/4}}, &\quad \text{if}~~ d=2\\
        \frac{1}{d^{n/2}}, &\quad \text{if d is odd prime}\\
    \end{aligned}
    \right.\;,\\
    \norm{F}_{Q^k}&=1, \quad \forall k\geq 3\;.
\end{align*}
\end{Exam}

\begin{Exam}[\bf The CNOT Gate]
The two-qudit ``control-not gate'' plays an important role in generating quantum entanglement. 
\be
CNOT=\sum_{x,y\in \mathbb{Z}_d}\proj{x}\ot \ket{x+y}\bra{y}\;.
\ee
For both $d=2$ or $d=$ odd prime, 
\begin{align*}
\norm{CNOT}_{Q^1}&=\frac{1}{d}\;,  \norm{CNOT}_{Q^2}=\frac{1}{\sqrt{d}},\\
\norm{CNOT}_{Q^k}&=1, \quad \forall k\geq 3\;.
\end{align*}

\end{Exam}

\begin{Exam}[\bf The CCZ Gate]
The three-qudit ``control-control-$Z$ gate,'' or $CCZ$, in the computational basis is  
\begin{align}
CCZ=\sum_{x,y,z}\omega^{xyz}_d\proj{x}\ot \proj{y}\ot\proj{z}.
\end{align}
The $CCZ$ gate also plays an important role in the theory of universal quantum computation. For $d$ an odd prime, 
\begin{align*}
&\norm{CCZ}_{Q^1}
=\frac{2d-1}{d^2}\\
&\norm{CCZ}_{Q^2}
=\left(\frac{d^3+(d-1)^3+d[d^3-(d-1)^3-1]}{d^6}  \right)^{1/4},\\
&\norm{CCZ}_{Q^3}=\left(\frac{d^3+3(d-1)d+3(d-1)^2d+(d-1)^3}{d^6}\right)^{1/8},\\
&\norm{CCZ}_{Q^k}=1, \forall k\geq 4
\end{align*}

If $d=2$, the corresponding quantum uniformity norms are  
\begin{align*}
&\norm{CCZ}_{Q^1}
=\frac{3}{4}, 
\norm{CCZ}_{Q^2}
=\left(\frac{11}{32}\right)^{1/4},\\
&\norm{CCZ}_{Q^3}=\left(\frac{11}{32}\right)^{1/8}, 
\norm{CCZ}_{Q^k}=1, \ \ \forall k\geq 4\;.
\end{align*}
\end{Exam}

\section{More Properties of the Quantum Uniformity Measures\label{Sect:MoreProperties}}
We explore more  mathematical properties of the quantum uniformity measures, including: monotonicity of 
$\norm{\cdot}_{Q^k}$ with respect to $k$, proving $\norm{\cdot}_{Q^k}$ is a norm for $k=2,3$, and showing relations between $\norm{\cdot}_{Q^k}$ and both Schatten and $\ell^{p}$ norms.

Given $2^k$ linear operators $\set{B_{\vec u}}_{\vec u\in \set{0,1}^k}$, with binary labels $\vec u$, we define 
\begin{align*}
\set{B_{\vec u}}_{\vec u\in \set{0,1}^k}(\vec a_k, .., \vec a_1)
\end{align*} 
inductively. 
For $k=1$,  
\begin{align*}
\set{B_{\vec u}}_{\vec u\in \set{0,1}^1}(\vec a_1)
=\set{B_0, B_1}(\vec a_1)
&:= B_0(\vec {a}_1) B_1^*
\;.
\end{align*}
This is chosen so that if $B_0=B_1=B$, then $\set{B_{\vec u}}_{\vec u\in \set{0,1}^1}(\vec a)
=\partial_{\vec a}B$. 
For $k=2$ we use the two intermediate sets
\begin{align*}
\set{B_{0\vec u}}_{\vec{u}\in\set{0,1}^1}(\vec{a}_1)
&= \set{B_{00},B_{01}}(\vec {a}_{1})\;,
\\
\set{B_{1\vec u}}_{\vec{u}\in\set{0,1}^1}(\vec{a}_1)
&= \set{B_{10},B_{11}}(\vec {a}_{1})\;,
\end{align*}
leading to 
\begin{align*}
\set{B_{\vec{u}}}_{\vec{u}\in\set{0,1}^{2}}
(\vec {a}_2, \vec{a}_1)
=&\set{B_{00}, B_{01}, B_{10}, B_{11}}(\vec a_2, \vec a_1)\\
=& w(\vec{a}_2)\, \lrp{\set{B_{0\vec{u}}}(\vec {a}_{1})}\,  w(\vec{a}_2)^*\,
\lrp{\set{B_{1\vec{u}}}(\vec {a}_{1})}^* \;.
\end{align*}

\begin{Def}
Given $\set{B_{\vec u}}_{\vec{u}\in(0,1)^1}(\vec{a}_1)$ above, for $k\geq1$ let   
\begin{align}
\set{B_{0\vec u}}_{\vec{u}\in\set{0,1}^{k}}
\lrp{\vec {a}_{k},\ldots,\vec {a}_{1}}
= \set{B_{0{00\cdots0}0},B_{0{00\cdots0}1}, \cdots, B_{0{11\cdots11}}}\lrp{\vec {a}_{k},\ldots,\vec {a}_{1}}\;,
\end{align}
\begin{align}
\set{B_{1\vec u}}_{\vec{u}\in\set{0,1}^{k}}
\lrp{\vec {a}_{k},\ldots,\vec {a}_{1}}= \set{B_{1{00\cdots0}0},B_{1{00\cdots0}1}, \cdots, B_{1{11\cdots11}}}\lrp{\vec {a}_{k},\ldots,\vec {a}_{1}}\;,
\end{align}
and  
\begin{align*}
\set{B_{\vec u}}_{\vec u\in \set{0,1}^{k+1}}(\vec a_{k+1}, .., \vec a_1)
:=&
w(\vec a_{k+1})\left(
\set{B_{0,\vec u}}_{\vec u\in \set{0,1}^{k}}(\vec a_{k}, .., \vec a_{1})
\right) w(\vec a_{k+1})^*\\
&\times 
\left(
\set{B_{1, \vec u}}_{\vec u\in \set{0,1}^{k}}(\vec a_{k}, .., \vec a_{1})
\right)^{*}.
\end{align*}
\end{Def}
The quantity $\set{B_{\vec u}}_{\vec u\in \set{0,1}^k}(\vec a_k, .., \vec a_1)$ is a generalization of the $k$-th order quantum derivative.  We use these quantities to define a quantum uniformity functional and a quantum uniformity inner product.

\begin{Def}[\bf Quantum Uniformity Functionals]
Given $\set{B_{\vec u}}_{\vec u\in \set{0,1}^{k+1}}$ with $2^{k+1}$ linear operators, the quantum uniformity inner functionals of $(k+1)^{\rm th}$ order are 
\begin{eqnarray}
\langle \set{B_{\vec u}}_{\vec u\in \set{0,1}^{k+1}}\rangle_{Q^{k+1}}
= \mathbb{E}_{\vec a_{k+1},...\vec a_1 \in V^n}
\frac{1}{d^n }\Tr{\set{B_{\vec u}}_{\vec u\in \set{0,1}^{k+1}}(\vec a_{k+1}, .., \vec a_1) }\;.
\end{eqnarray}
\end{Def}

If all the operators in the set $\set{B_{\vec u}}$ are equal $B$, then the quantum uniformity functional reduces to the quantum uniformity measures 
$\lra{\set{B_{\vec{u}}}_{\vec{u}\in \set{0,1}^{k}}}_{Q^{k}}=\norm{B}^{2^k}_{Q^k}$.

\begin{Def}\label{Def:InnerProduct}
Given two sets $\set{B_{0\vec{u}_k}}_{\vec u_{k}\in \set{0,1}^{k}}$  and $\set{B_{1\vec{u}_k}}_{\vec u_{k}\in \set{0,1}^{k}}$   of $2^k$ operators on $\mathcal{H}^{\otimes n}$, we regard them as subsets of $\set{B_{\vec{u}_{k+1}}}_{\vec u_{k+1}\in \set{0,1}^{k+1}}$.  The functional $\lra{\ \cdot \ }_{Q^{k+1}}$ determines a pairing $\lra{\ \cdot\  , \ \cdot\  }_{Q^{k}}$ of these subsets. 

For simplicity of notation, let us not explicitly note that $\vec{u}_{k} \in (0,1)^{k}$.  Then the pairing is 
\begin{align}\label{InnerProduct-2}
\lra{\set{B_{0\vec{u}_{k}}},   \set{B_{1\vec{u}_{k}}}}_{Q^{k}}
=
\lra{\set{B_{\vec{u}_{k+1}}}}_{Q^{k+1}}\;.
\end{align}
\end{Def}

\begin{prop}[\bf Inner Product]\label{Prop:PreInnerProduct}
With the notation introduced in Definition \ref{Def:InnerProduct}, the pairing \eqref{InnerProduct-2} is a pre-inner product. It is a non-negative, hermitian form that is linear in the first variable, conjugate linear in the second variable. It can be written   
\begin{align}\label{InnerProduct-3}
&\lra{\set{B_{0\vec{u}_{k}}},   \set{B_{1\vec{u}_{k}}}}_{Q^{k}}\nonumber =
\lra{\set{B_{\vec{u}_{k+1}}}}_{Q^{k+1}}\nonumber\\
&\quad\quad=
\underset{\vec a_1,...,\vec a_{k-1}}{\mathbb{E}}
\frac{1}{d^{2n} } \lrp{\Tr{\set{B_{0\vec {u}_{k}}}(\vec a_k, .., \vec a_1) }} \overline{\lrp{\Tr{\set{B_{1\vec {u}_{k}}}(\vec a_k, .., \vec a_1) }}} \;.
\end{align}
The pairing satisfies the Schwarz inequality,
\begin{align}
\abs{\lra{\set{B_{0\vec{u}}},    \set{B_{1\vec{u}}}}_{Q^{k}}}
\label{SchwarzInequality-Qk}\leq
\lra{\set{B_{0\vec{u}}},    \set{B_{0\vec{u}}}}^{\frac12}_{Q^{k}}
\lra{\set{B_{1\vec{u}}},    \set{B_{1\vec{u}}}}^{\frac12}_{Q^{k}}
\;.
\end{align}
\end{prop}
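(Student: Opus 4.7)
My strategy is to first establish the representation \eqref{InnerProduct-3} by performing the outermost Weyl average explicitly, and then to recognize the resulting quantity as a standard sesquilinear $L^2$ pairing of two complex-valued functions on $(V^n)^k$. All the claimed structural properties---sesquilinearity, Hermiticity, non-negativity on the diagonal, and the Schwarz inequality---then follow from the corresponding classical facts about $L^2$ inner products.

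For the first step, unfold the top-level recursion
\begin{equation*}
\set{B_{\vec u_{k+1}}}(\vec a_{k+1}, \ldots, \vec a_1) = w(\vec a_{k+1})\, \set{B_{0\vec u_k}}(\vec a_k, \ldots, \vec a_1)\, w(\vec a_{k+1})^* \,\bigl(\set{B_{1\vec u_k}}(\vec a_k, \ldots, \vec a_1)\bigr)^{*}
\end{equation*}
and perform the average over $\vec a_{k+1}$ first. By the ergodic identity \eqref{ErgodicIdentity}, $\mathbb{E}_{\vec a} w(\vec a) A w(\vec a)^* = \frac{1}{d^n} \Tr{A}\, I$ for any operator $A$. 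Applying this with $A = \set{B_{0\vec u_k}}(\vec a_k, \ldots, \vec a_1)$, pairing the resulting scalar with the factor $\set{B_{1\vec u_k}}(\ldots)^*$ inside the outer trace, and using $\Tr{A^*} = \overline{\Tr{A}}$, immediately yields \eqref{InnerProduct-3}.

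For the second step, define $F_i(\vec a_k, \ldots, \vec a_1) := \frac{1}{d^n} \Tr{\set{B_{i\vec u_k}}(\vec a_k, \ldots, \vec a_1)}$ for $i = 0, 1$. Then \eqref{InnerProduct-3} reads
\begin{equation*}
\lra{\set{B_{0\vec u_k}}, \set{B_{1\vec u_k}}}_{Q^k} = \mathbb{E}_{\vec a_k, \ldots, \vec a_1}\, F_0(\vec a_k, \ldots, \vec a_1)\, \overline{F_1(\vec a_k, \ldots, \vec a_1)},
\end{equation*}
which is the standard $L^2$ inner product of $F_0$ and $F_1$ on $(V^n)^k$ equipped with uniform probability measure. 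Linearity in the first slot, conjugate linearity in the second slot, and Hermitian symmetry are manifest from this representation, since $F_i$ is linear in each individual operator $B_{i\vec u_k}$ through the trace. Non-negativity on the diagonal, $\lra{\set{B_{0\vec u_k}}, \set{B_{0\vec u_k}}}_{Q^k} = \mathbb{E}|F_0|^2 \geq 0$, is likewise immediate. The Schwarz inequality \eqref{SchwarzInequality-Qk} is then exactly Cauchy--Schwarz applied to the $L^2$ pairing of $F_0$ with $F_1$.

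The only nontrivial calculation is the ergodic-identity step, which is purely mechanical once the recursive definition is unwound, so I do not anticipate a genuine obstacle. The mildly subtle point is the interpretation of ``linear in the first variable,'' since $F_0$ depends \emph{multilinearly} on the ensemble $\set{B_{0\vec u_k}}$, with conjugations dictated by the parity of the last bit of $\vec u_k$; the precise meaning is linearity in the function $F_0$, which the representation \eqref{InnerProduct-3} makes transparent.
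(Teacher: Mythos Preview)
Your proposal is correct and follows essentially the same approach as the paper's proof: both apply the ergodic identity \eqref{ErgodicIdentity} to the outermost average over $\vec a_{k+1}$ to factor the expression into a product of traces with a complex conjugate, and then read off Hermiticity, non-negativity, and Cauchy--Schwarz from this representation. Your explicit framing of the result as the $L^2$ pairing of the functions $F_0,F_1$ on $(V^n)^k$ is a touch more transparent than the paper's phrasing ``use the Schwarz inequality on the sequences,'' and your remark on the multilinear dependence of $F_0$ on the ensemble $\{B_{0\vec u_k}\}$ correctly flags an ambiguity in the statement that the paper does not address.
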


\begin{proof} 
Using  \eqref{ErgodicIdentity}, we have 
\begin{align}
\label{InnerProduct-6}&\lra{\set{B_{0\vec{u}}}_{u\in\set{0,1}^{k}},    \set{B_{1\vec{u}}}_{u\in\set{0,1}^{k}}}_{Q^{k}}
=\lra{\set{B_{\vec{u}_{k+1}}}}_{Q^{k+1}}\\
 \nonumber&=
\underset{\vec a_{k},...\vec a_1 \in V^n}{\mathbb{E}}
\frac{1}{d^{n} }{\Tr{\set{B_{0\vec u}}_{\vec u\in \set{0,1}^k}(\vec a_k, .., \vec a_1) }}{\overline{\frac{1}{d^{n} }\Tr{\set{B_{1\vec u}}_{\vec u\in \set{0,1}^k}(\vec a_k, .., \vec a_1) }}}\;.
\end{align}
From this representation, one can see the pairing has the correct linearity properties.  Furthermore, this shows that the pairing is hermitian, 
\be
\lra{\set{B_{0\vec{u}_{k}}},   \set{B_{1\vec{u}_{k}}}}_{Q^{k}}
= \overline{\lra{\set{B_{1\vec{u}_{k}}},   \set{B_{0\vec{u}_{k}}}}}_{Q^{k}}\;.
\ee
If $\set{B_{0\vec{u}}}=\set{B_{1\vec{u}}}$, then each term in the expectation is non-negative, so the expectation is non-negative.  To obtain \eqref{SchwarzInequality-Qk}, use the Schwarz inequality on the sequences in \eqref{InnerProduct-3}.
\end{proof}

Note that the expression \eqref{InnerProduct-6} is not strictly positive; for if  $k=1$, and $B_{00}=B_{10}=w(\vec b)$, $B_{01}=B_{11}=w(\vec c)$ with $\vec b\neq \vec c$, then \eqref{InnerProduct-6} is zero. In particular
\begin{align*}
0\leq \lra{\set{B_{0\vec{u}}}_{u\in(0,1)^{k}},    \set{B_{0\vec{u}}}_{u\in(0,1)^{k}}}_{Q^{k}} 
=&\mathbb{E}_{\vec {a}} \frac{1}{d^{2n}}
\abs{\Tr{w(\vec a) w(\vec b) w(\vec a)^* w(\vec c)^*}}^{2}\\
=& \delta_{\vec b, \vec c}=0\;.
\end{align*}
However, the form is strictly positive if all the $B$'s are the same.

Taking Proposition \ref{Prop:PreInnerProduct} into account, one can define
\begin{align}
\norm{{\set{B_{0\vec{u}}}_{u\in\set{0,1}^{k}}}}_{Q^{k}}
=
\lra{\set{B_{0\vec{u}}}_{u\in\set{0,1}^{k}},    \set{B_{0\vec{u}}}_{u\in(0,1)^{k}}}_{Q^{k}}^{\frac{1}{2}}\;,
\end{align}
as the right-hand side is non-negative. Similarly 
\begin{align}
\norm{{\set{B_{1\vec{u}}}_{u\in\set{0,1}^{k}}}}_{Q^{k}}
=
\lra{\set{B_{1\vec{u}}}_{u\in\set{0,1}^{k}},    \set{B_{1\vec{u}}}_{u\in\set{0,1}^{k}}}_{Q^{k}}^{\frac{1}{2}}\;.
\end{align}

\begin{thm}[\bf Monotonocity of Measures] \label{Thm:Monotonicity}
For any $n$-qudit linear operator $B\in L(\mathcal{H}^{\ot n})$,  the measures \eqref{eq:QUn} satisfy
\begin{eqnarray}
\norm{B}_{Q^k}\leq 
\norm{B}_{Q^{k+1}},\quad \text{for all } k\geq 1.
\end{eqnarray}
\end{thm}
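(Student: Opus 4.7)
The plan is to proceed by induction on $k$, using the inductive identity of Proposition \ref{Prop:InductiveDefn} together with a single Cauchy--Schwarz (Jensen) step at each stage.

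For the base case $k=1$, I will use Proposition \ref{Prop:Positivity} to write
\[
\norm{B}_{Q^2}^{4}
=\mathbb{E}_{\vec a}\,\abs{\tfrac{1}{d^n}\Tr{\partial_{\vec a}B}}^{2}
\,\geq\,\abs{\mathbb{E}_{\vec a}\,\tfrac{1}{d^n}\Tr{B(\vec a)B^{*}}}^{2},
\]
by Jensen's inequality applied to $x\mapsto |x|^{2}$. The ergodic identity \eqref{ErgodicIdentity} gives $\mathbb{E}_{\vec a}B(\vec a)=\frac{1}{d^n}\Tr{B}\,I$, so the inner expectation equals $|\frac{1}{d^n}\Tr{B}|^{2}=\norm{B}_{Q^1}^{2}$, yielding $\norm{B}_{Q^2}^{4}\geq \norm{B}_{Q^1}^{4}$.

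For the inductive step, assume $\norm{C}_{Q^{k-1}}\leq \norm{C}_{Q^{k}}$ for every linear operator $C$. By Proposition \ref{Prop:InductiveDefn},
\[
\norm{B}_{Q^{k}}^{2^{k}}=\mathbb{E}_{\vec a}\norm{\partial_{\vec a}B}_{Q^{k-1}}^{2^{k-1}},\qquad
\norm{B}_{Q^{k+1}}^{2^{k+1}}=\mathbb{E}_{\vec a}\norm{\partial_{\vec a}B}_{Q^{k}}^{2^{k}}.
\]
Squaring the first identity and applying Cauchy--Schwarz in the $\vec a$-expectation yields
\[
\norm{B}_{Q^{k}}^{2^{k+1}}
=\Bigl(\mathbb{E}_{\vec a}\norm{\partial_{\vec a}B}_{Q^{k-1}}^{2^{k-1}}\Bigr)^{2}
\leq \mathbb{E}_{\vec a}\norm{\partial_{\vec a}B}_{Q^{k-1}}^{2^{k}}.
\]
Then the inductive hypothesis applied pointwise to each $\partial_{\vec a}B$ gives $\norm{\partial_{\vec a}B}_{Q^{k-1}}^{2^{k}}\leq \norm{\partial_{\vec a}B}_{Q^{k}}^{2^{k}}$, and averaging over $\vec a$ identifies the right-hand side with $\norm{B}_{Q^{k+1}}^{2^{k+1}}$. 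Taking $(2^{k+1})$-th roots completes the step.

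The computation is essentially routine; the only thing to watch is bookkeeping of exponents so that the Cauchy--Schwarz step produces precisely $2^{k}$ powers inside the expectation, matching the level of $Q^{k-1}$ before the inductive hypothesis is invoked. An alternative (more symmetric) route would be to reformulate as a Gowers--Cauchy--Schwarz inequality via the inner product of Proposition \ref{Prop:PreInnerProduct}, splitting the $2^{k+1}$ derivative datum into two halves and comparing; but the simple induction above already leverages all the needed earlier results, so I would present that one.
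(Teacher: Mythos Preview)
Your induction argument is correct. The base case and inductive step both go through cleanly; the Jensen/Cauchy--Schwarz step $(\mathbb{E}_{\vec a}X_{\vec a})^2\le \mathbb{E}_{\vec a}X_{\vec a}^2$ with $X_{\vec a}=\norm{\partial_{\vec a}B}_{Q^{k-1}}^{2^{k-1}}\ge 0$ is exactly what is needed, and the inductive hypothesis applies to the general linear operator $\partial_{\vec a}B$.

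The paper, however, takes precisely the alternative route you sketch at the end: it invokes Proposition~\ref{Prop:PreInnerProduct} directly, setting all $B_{0\vec u}=B$ and all $B_{1\vec u}=I$. Since $\partial_{\vec a_k,\ldots,\vec a_1}I=I$, the pairing collapses to $\langle\{B\},\{I\}\rangle_{Q^k}=\norm{B}_{Q^k}^{2^k}$, and the Schwarz inequality \eqref{SchwarzInequality-Qk} bounds this by $\norm{B}_{Q^{k+1}}^{2^k}\cdot\norm{I}_{Q^{k+1}}^{2^k}=\norm{B}_{Q^{k+1}}^{2^k}$. So no induction is needed. Your approach has the virtue of being self-contained (it avoids the multilinear functional machinery), while the paper's one-line argument pays off the investment in Proposition~\ref{Prop:PreInnerProduct} and sits more naturally alongside the generalized Schwarz inequalities that follow.
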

\begin{proof}
Choose $\set{B_{0}}_{\vec{u}\in\set{0,1}^1}= \set{B}$ and $\set{B_{1}}_{\vec{u}\in\set{0,1}^1}= \set{I}$.  Then Proposition \ref{Prop:PreInnerProduct} yields the desired bound.
\end{proof}

We now  illustrate some further properties of $\norm{\cdot}_{Q^k}$ for $k=2,3$.

\begin{prop}\label{Prop:Q2Form}
For $k=2$ one has 
\begin{align*}
\lra{B_{00}, B_{01}, B_{10}, B_{11} }_{Q^{2}}
&= \sum_{\vec b} \Xi_{B_{00}}(\vec b)\,  \overline{\Xi_{B_{01}}(\vec b)} \, \overline{ \Xi_{B_{10}} (\vec b)}\Xi_{B_{11}} (\vec b) \;.\nonumber
\\
\abs{\lra{B_{00}, B_{01}, B_{10}, B_{11} }_{Q^{2}}}
&\leqslant \prod_{i,j=0}^1\norm{B_{ij}}_{Q^2}\;.
\end{align*}
Also $\norm{B}_{Q^{2}}=\norm{\Xi_{B}}_{\ell^4}$, and  $\norm{\ \cdot\ }_{Q^{2}}$ is a norm. 
\end{prop}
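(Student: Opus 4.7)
The plan is to pivot everything through the Fourier (characteristic-function) expansion $B_{ij}=\sum_{\vec b}\Xi_{B_{ij}}(\vec b)\,w(\vec b)$ and the representation \eqref{InnerProduct-3} of the $Q^2$ functional as an expectation of a product of normalized traces. First I would compute the inner trace. Since $w(\vec a_1)w(\vec b)w(\vec a_1)^*=\omega^{\lra{\vec a_1,\vec b}_s}w(\vec b)$ by \eqref{WeylMultiplication}, and the Weyl basis is orthonormal (Proposition \ref{Prop:WeylOrthonormal}), expanding $B_{00}$ and $B_{01}$ in Fourier modes gives
\begin{align*}
\tfrac{1}{d^n}\Tr{\{B_{00},B_{01}\}(\vec a_1)}
=\tfrac{1}{d^n}\Tr{w(\vec a_1)B_{00}w(\vec a_1)^*B_{01}^*}
=\sum_{\vec b}\Xi_{B_{00}}(\vec b)\overline{\Xi_{B_{01}}(\vec b)}\,\omega^{\lra{\vec a_1,\vec b}_s},
\end{align*}
and analogously for the pair $(B_{10},B_{11})$. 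Substituting these two expressions into \eqref{InnerProduct-3} and taking the expectation over $\vec a_1$ collapses the double sum through the orthogonality relation $\mathbb{E}_{\vec a_1}\omega^{\lra{\vec a_1,\vec b-\vec c}_s}=\delta_{\vec b,\vec c}$, leaving exactly the desired single sum $\sum_{\vec b}\Xi_{B_{00}}(\vec b)\overline{\Xi_{B_{01}}(\vec b)}\overline{\Xi_{B_{10}}(\vec b)}\Xi_{B_{11}}(\vec b)$.

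Next I would specialize to $B_{00}=B_{01}=B_{10}=B_{11}=B$; this gives $\norm{B}^{4}_{Q^2}=\sum_{\vec b}|\Xi_B(\vec b)|^4$, which is precisely $\norm{\Xi_B}_4^4$, establishing $\norm{B}_{Q^2}=\norm{\Xi_B}_4$. The product inequality $|\lra{B_{00},B_{01},B_{10},B_{11}}_{Q^2}|\leq\prod_{i,j}\norm{B_{ij}}_{Q^2}$ is then H\"older's inequality on $\ell^4$ applied with four factors to the single sum just derived, since each factor $(\sum_{\vec b}|\Xi_{B_{ij}}(\vec b)|^4)^{1/4}$ equals $\norm{B_{ij}}_{Q^2}$.

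Finally, for the norm property of $\norm{\cdot}_{Q^2}$: homogeneity $\norm{\lambda B}_{Q^2}=|\lambda|\norm{B}_{Q^2}$ is immediate from the identity $\norm{B}_{Q^2}=\norm{\Xi_B}_4$ together with linearity of the Fourier expansion; positive-definiteness follows because $\norm{B}_{Q^2}=0$ forces $\Xi_B\equiv 0$, which by \eqref{T-Fourier} and Proposition \ref{Prop:WeylOrthonormal} means $B=0$; and the triangle inequality reduces to Minkowski's inequality on $\ell^4$ via $\norm{B+C}_{Q^2}=\norm{\Xi_{B+C}}_4=\norm{\Xi_B+\Xi_C}_4\leq\norm{\Xi_B}_4+\norm{\Xi_C}_4$.

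I do not anticipate a serious obstacle: the whole argument is essentially a character-orthogonality computation followed by classical $\ell^4$-norm facts. The only place where a little care is needed is the bookkeeping of complex conjugates in the inner trace (the asymmetric appearance of $B_{01}^*$ versus $B_{11}$ inside \eqref{InnerProduct-3}), which is what produces the pattern $\Xi_{B_{00}}\overline{\Xi_{B_{01}}}\,\overline{\Xi_{B_{10}}}\Xi_{B_{11}}$ rather than $|\Xi|^4$ when the four operators are distinct.
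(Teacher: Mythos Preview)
Your proposal is correct and follows essentially the same route as the paper: expand in the Weyl (Fourier) basis, use character orthogonality to collapse the sums, then invoke H\"older for the product bound and the fact that $\norm{\Xi_B}_4$ is a norm. The only cosmetic difference is that you feed the computation through the inner-product representation \eqref{InnerProduct-3} (so the $\vec a_2$-average is already absorbed by \eqref{ErgodicIdentity}), whereas the paper expands the full $Q^2$ functional with both $\vec a_1,\vec a_2$ present and averages them simultaneously; the two computations are equivalent.
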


\begin{proof}
As $T^{*}=\sum_{\vec b} \overline{ \Xi_{T}(\vec b)} \,w(-\vec b)$, it follows that 
\begin{align*}
&\lra{B_{00}, B_{01}, B_{10}, B_{11} }_{Q^{2}}\\
&\qquad = \mbe_{\vec a_{1}, \vec a_{2}} \frac{1}{d^{n}}
\Tr{\lrp{
B_{00}(\vec a_{1}) B_{01}^{*} }(\vec a_{2}) \lrp{B_{10}(\vec a_{1}) B_{11} ^{*}  }^{*}    }\nonumber\\
&\qquad = \mbe_{\vec a_{1}, \vec a_{2}} \frac{1}{d^{n}} 
\sum_{\vec b_{1}, \vec b_{2}, \vec b_{3}, \vec b_{4}}
 \omega^{\lra{ \vec a_{1}+\vec a_{2}     , \vec b_{1}      }_{s}  -\lra{\vec a_{2}     , \vec b_{2}      }_{s}  -\lra{\vec a_{1}, \vec b_{3}}_{s} }\\
& \qquad\qquad \times 
\Xi_{B_{00}}(\vec b_{1})  \overline{\Xi_{B_{01}}(\vec b_{2}) }\,  \overline{  \Xi_{B_{10}}(\vec b_{3}) }  \Xi_{B_{11}}(\vec b_{4})
\Tr{w(\vec b_{1})  w(-\vec b_{2})  w(-\vec b_{3})   w(\vec b_{4})}
\\
&\qquad =  
\sum_{\vec b_{1}, \vec b_{2}, \vec b_{3}, \vec b_{4}}
\delta_{\vec b_{1}, \vec b_{3}}     \delta_{\vec b_{1}, \vec b_{2}}  \delta_{\vec b_{1}-\vec b_{2}, \vec b_{3} -\vec b_{4}   } 
\Xi_{B_{00}}(\vec b_{1})  \overline{\Xi_{B_{01}}(\vec b_{2}) }\,  \overline{  \Xi_{B_{10}}(\vec b_{3}) }  \Xi_{B_{11}}(\vec b_{4})
\\
&\qquad =  
\sum_{\vec b}
\Xi_{B_{00}}(\vec b)  \overline{\Xi_{B_{01}}(\vec b) } \, \overline{  \Xi_{B_{10}}(\vec b) }  \Xi_{B_{11}}(\vec b)\;.
\end{align*}
From this identity and  H\"older's inequality, we infer the upper bound. 
Setting the $B$'s equal, one has $\norm{B}_{Q^{2}}=\norm{\Xi_{B}}_{\ell^4}$.  Since $\Xi_{B}$ is linear in $B$ and $\norm{\Xi_{B}}_{\ell^4}$ is a norm, also $\norm{B}_{Q^{2}}$ is a norm.  
\end{proof}

For the case where $k=3$, the relation to $\ell^p$ norms in Fourier space is more complicated;  one has several independent vector summations, rather than only 1 for $k=2$.  In order to obtain a generalized Schwarz inequality, we establish the following.

\begin{lem}\label{lem:Q3}
    Given 8 linear operators $\set{B_{\vec u}}_{\vec u\in\set{0,1}^3}$, we have two permutation identities:
     \begin{align}
        \nonumber&\langle B_{000}, B_{001}, B_{010}, B_{011}, B_{100}, B_{101}, B_{110}, B_{111}\rangle_{Q^3}\\
       \label{eq:Q3_1} 
       =&\langle B^{*}_{010}, B^{*}_{000}, B^{*}_{011}, B^{*}_{001}, B^*_{110}, B^*_{100}, B^*_{111}, B^*_{101}\rangle_{Q^3},
    \end{align}
    and 
   \begin{align*}
      \nonumber & \langle B_{000}, B_{001}, B_{010}, B_{011}, B_{100}, B_{101}, B_{110}, B_{111} \rangle_{Q^3}\\
     \label{eq:Q3_2}   
     =&\langle B_{011}, B_{010}, B_{111}, B_{110}, B_{001}, B_{000}, B_{101}, B_{100}\rangle_{Q^3}.
    \end{align*}
\end{lem}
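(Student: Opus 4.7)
Both identities are symmetries of $\lra{\cdot}_{Q^3}$ that arise by combining (i) cyclicity of the trace, (ii) the conjugation rules $(B(\vec a))^*=B^*(\vec a)$ and $w(\vec a)^*Bw(\vec a)=B(-\vec a)$, and (iii) change-of-variable invariance of the expectations $\mathbb{E}_{\vec a_i}$ (in particular $\vec a_i\mapsto -\vec a_i$ and $\vec a_i\leftrightarrow\vec a_j$). The plan is to unpack both sides according to Proposition~\ref{Prop:PreInnerProduct},
\[
\lra{\{B_{\vec u}\}}_{Q^3}=\mathbb{E}_{\vec a_3,\vec a_2,\vec a_1}\frac{1}{d^n}\Tr{w(\vec a_3)F(\vec a_2,\vec a_1)w(\vec a_3)^*G(\vec a_2,\vec a_1)^*},
\]
with $F(\vec a_2,\vec a_1)=w(\vec a_2)B_{000}(\vec a_1)B_{001}^*w(\vec a_2)^*B_{011}B_{010}(\vec a_1)^*$ and $G$ the analogous expression built from the $u_3=1$ block, so that the integrand is an explicit product of fourteen operators under the trace.

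For \eqref{eq:Q3_1}, the pointwise identity I would use is $\overline{\Tr{F(\vec a_2,\vec a_1)}}=\Tr{\widetilde F(-\vec a_2,\vec a_1)}$, obtained by taking the adjoint of $F$ inside the trace and then cyclically shifting one Weyl operator from the tail to the head; here $\widetilde F$ has the same shape as $F$ but with the quadruple $(B_{000},B_{001},B_{010},B_{011})$ swapped within pairs to $(B_{010},B_{011},B_{000},B_{001})$, and the substitution $\vec a_2\mapsto -\vec a_2$ in the expectation absorbs the sign on $\vec a_2$. Applying the analogous identity to $G$ at the outer $\vec a_3$-level, where the complex conjugation and the adjoint introduce the stars on all eight operators, produces precisely the permutation-with-stars on the RHS of~\eqref{eq:Q3_1}.

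For \eqref{eq:Q3_2}, the key is that the joint expectation $\mathbb{E}_{\vec a_2,\vec a_3}$ is symmetric in its two arguments, so renaming $\vec a_2\leftrightarrow\vec a_3$ leaves the functional unchanged but moves the nesting that was keyed on $\vec a_3$ onto $\vec a_2$ instead. Composing this with the change of variable $\vec a_1\mapsto -\vec a_1$ and a cyclic rearrangement of the fourteen-operator trace exchanges the roles of the middle and outer indices and flips both the innermost and the outermost bit, which is exactly the permutation $\sigma(u_3,u_2,u_1)=(u_2,1-u_3,1-u_1)$ appearing on the RHS of~\eqref{eq:Q3_2}. In each case I would verify the match by writing out the resulting product of fourteen operators and checking it against the RHS expansion position by position.

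The main obstacle is accounting for the phases coming from the Weyl commutation $w(\vec a)w(\vec b)=\omega^{2^{-1}\lra{\vec a,\vec b}_s}w(\vec a+\vec b)$ as operators are commuted past each other during the adjoint-and-cycle steps, since translations do not commute with the Weyl prefactors and mis-tracking even a single $\omega$ would invalidate the identity. A clean way to manage this is to first expand every $B_{\vec u}$ in the Weyl basis via \eqref{T-Fourier}, whereupon $\lra{\{B_{\vec u}\}}_{Q^3}$ becomes a sum over eight Fourier labels $\vec b_{\vec u}$ subject to explicit linear constraints produced by the $\vec a_i$-integrations and carrying an explicit cocycle phase; both identities then reduce to a purely combinatorial check that the proposed reindexing of the $\vec b_{\vec u}$ preserves these constraints and the phase.
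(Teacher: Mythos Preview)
Your toolkit—trace cyclicity, translation invariance under Weyl conjugation, and relabeling of the expectation variables—is exactly right, and the Weyl-expansion fallback would certainly work. But for identity~\eqref{eq:Q3_1} the specific plan has a gap: you attribute the stars on the right-hand side to a complex-conjugation step, and that is not their source. The functional $\lra{\cdot}_{Q^3}$ already applies a star to the entries in odd-parity positions $001,010,101,110$, so half of the displayed stars on the right of~\eqref{eq:Q3_1} cancel against these built-in ones; both sides of~\eqref{eq:Q3_1} are built from the \emph{same} eight operators $B_{000},B_{001}^*,B_{010}^*,B_{011},B_{100},B_{101}^*,B_{110}^*,B_{111}$, merely rearranged inside the trace. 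If you actually carry out your identity $\overline{\Tr{F}}=\Tr{\widetilde F(-\vec a_2,\vec a_1)}$ at the two-trace level of Proposition~\ref{Prop:PreInnerProduct} and substitute $\vec a_2\mapsto -\vec a_2$, what drops out is a different (also true) relation, with no stars and a different permutation—not~\eqref{eq:Q3_1}.

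The paper avoids conjugation altogether. In the two-trace form of Proposition~\ref{Prop:PreInnerProduct}, one cyclically rotates $\Tr{B_{000}(1{+}2)B_{001}^*(2)B_{011}B_{010}^*(1)}$ to bring $B_{010}^*(1)$ to the front, conjugates the whole trace by $w(\vec a_2)^*$ (a global translation by $-\vec a_2$), and then relabels $\vec a_2\mapsto -\vec a_2$ \emph{and} swaps $\vec a_1\leftrightarrow\vec a_2$; this last swap is the step missing from your plan. The resulting trace then matches the standard form $\Tr{C_{00}(1{+}2)C_{01}^*(2)C_{11}C_{10}^*(1)}$ with $(C_{00},C_{01},C_{10},C_{11})=(B_{010}^*,B_{000}^*,B_{011}^*,B_{001}^*)$, so the stars appear purely by pattern-matching against the functional's template, not by taking adjoints. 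Your plan for the second identity is closer to the paper's sketch, which writes out the single eight-factor trace and applies the same cycle–translate–relabel routine.
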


\begin{proof}
To simplify notation,  denote $T(\vec {a}_1)$ by $T(1)$.  To establish the first equality, use  Proposition \ref{Prop:PreInnerProduct} to obtain
\begin{align*}
&\langle{B_{000}, B_{001}, B_{010}, B_{011}, B_{100}, B_{101}, B_{110}, B_{111}}\rangle_{Q^3}\\
=&\mathbb{E}_{2, 1}\frac{1}{d^{2n}}\Tr{B_{000}(1+2) B^{*}_{001}(2) B_{011} B^{*}_{010}(1)}
\overline{\Tr{B_{100}(1+2)B^{*}_{101}(2)
 B_{111} B^*_{110}(1) 
 }}\;.
\end{align*}
Use cyclicity of the trace and translation by $-\vec {a}_2$ to  rewrite the first trace as
\[
\Tr{B^{*}_{010}(1-2)B_{000}(1) B^{*}_{001} B_{011}(-2) }\;.
\] 
Make the same manipulation of the second trace. The expectation over $\vec a_2$ equals the expectation over $-\vec a_2$. Also one can interchange the expectation over $\vec a_1$ and $\vec a_2$. This gives the desired identity.
In order to establish the second equality, note that 
\begin{align*}
&\langle{B_{000}, B_{001}, B_{010}, B_{011}, B_{100}, B_{101}, B_{110}, B_{111}}\rangle_{Q^3}\\
=&\mathbb{E}_{3,2,1}\frac{1}{d^n}
\text{Tr}\left\{
B_{000}(1+2+3)B^*_{001}(2+3)  B_{011}(3)
 B^*_{010}(1+3)\right.\\\
& \qquad\qquad\qquad \left.\times B_{110}(1)  B^*_{111}   B_{101}(2) 
 B_{100}(1+2) \right\}\;.
 \end{align*}
One completes the proof in the same manner as was used to establish the first identity.
\end{proof}

\begin{prop}[\bf Generalized Quantum  Schwarz Inequality] \label{Prop:GeneralizedSchwarz}
Let $k=3$ and let $ \set{B_{\vec u}}_{\vec u\in \set{0,1}^{k}}$ be $8$ linear operators.  Then,
\begin{eqnarray}
|\langle \set{B_{\vec u}}_{\vec u\in \set{0,1}^{k}}\rangle_{Q^{k}}|
\leq \Pi_{\vec u\in \set{0,1}^k}\norm{B_{\vec u}}_{Q^k}\;.
\end{eqnarray}
\end{prop}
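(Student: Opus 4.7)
The case $k=2$ is already contained in Proposition~\ref{Prop:Q2Form}, so I would focus on $k=3$. My strategy is the standard Gowers--Cauchy--Schwarz iteration, adapted to the quantum setting: apply the Schwarz inequality~\eqref{SchwarzInequality-Qk} from Proposition~\ref{Prop:PreInnerProduct} three times in succession, once for each of the three binary axes indexing $\vec u\in\{0,1\}^3$, using the permutation identities of Lemma~\ref{lem:Q3} to rotate each axis in turn into the ``top'' position where Definition~\ref{Def:InnerProduct} splits the $Q^3$ functional as a $Q^2$ pairing.

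For the first step, Definition~\ref{Def:InnerProduct} identifies
\[
\langle \{B_{\vec u}\}_{\vec u\in\{0,1\}^3}\rangle_{Q^3}
=\langle \{B_{0\vec v}\}_{\vec v\in\{0,1\}^2},\,\{B_{1\vec v}\}_{\vec v\in\{0,1\}^2}\rangle_{Q^2},
\]
and \eqref{SchwarzInequality-Qk} together with squaring gives
\[
|\langle\{B_{\vec u}\}\rangle_{Q^3}|^{2}\leq \langle\{B_{0\vec v},B_{0\vec v}\}\rangle_{Q^3}\cdot \langle\{B_{1\vec v},B_{1\vec v}\}\rangle_{Q^3},
\]
where $\{B_{i\vec v},B_{i\vec v}\}$ denotes the 8-tuple obtained by duplicating the $i$-th block of four operators. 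Each right-hand factor is a $Q^3$ functional whose 8-tuple is now constant along the top coordinate. Next, apply the second permutation identity of Lemma~\ref{lem:Q3} inside each of these factors, which re-exposes the middle coordinate as the new top one, and invoke~\eqref{SchwarzInequality-Qk} again; squaring gives a bound on $|\langle\{B_{\vec u}\}\rangle_{Q^3}|^4$ by a product of four $Q^3$ functionals whose 8-tuples are constant along the first two coordinates. A third application, now invoking the first permutation identity of Lemma~\ref{lem:Q3} to rotate the remaining coordinate to the top, then yields
\[
|\langle\{B_{\vec u}\}\rangle_{Q^3}|^{8}\leq \prod_{\vec u\in\{0,1\}^3}\norm{B_{\vec u}}_{Q^3}^{8},
\]
since at this final stage each 8-tuple contains a single operator $B_{\vec u}$ (or its adjoint) in every slot, and the $Q^3$ functional of such a constant 8-tuple is exactly $\norm{B_{\vec u}}_{Q^3}^{8}$. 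Taking eighth roots delivers the announced inequality.

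The main obstacle is the combinatorial bookkeeping: at each of the three stages one must verify that, after the relevant permutation from Lemma~\ref{lem:Q3} is applied, the $Q^3$ functional genuinely factors through Definition~\ref{Def:InnerProduct} as a $Q^2$ pairing along the intended axis, so that Proposition~\ref{Prop:PreInnerProduct} actually delivers~\eqref{SchwarzInequality-Qk}. A secondary subtlety is that the first identity of Lemma~\ref{lem:Q3} introduces adjoints, which must be shown to be harmless at the collapsed stage; this reduces to noting that $\langle B_{\vec u},\ldots,B_{\vec u}\rangle_{Q^3}$ and $\langle B_{\vec u}^{*},\ldots,B_{\vec u}^{*}\rangle_{Q^3}$ coincide, a fact which itself follows by applying the first identity of Lemma~\ref{lem:Q3} with all entries equal. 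Once these identifications are secured, the three invocations of~\eqref{SchwarzInequality-Qk} telescope into the desired bound, in direct analogy with the classical Gowers--Cauchy--Schwarz inequality for the $U^3$ norm, with Lemma~\ref{lem:Q3} playing the role of the coordinate-symmetries of the classical functional.
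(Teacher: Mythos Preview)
Your proposal is correct and follows essentially the same route as the paper's proof: three successive applications of the Schwarz inequality~\eqref{SchwarzInequality-Qk}, interleaved with the permutation identities of Lemma~\ref{lem:Q3} to bring each binary coordinate into the top position, together with the observation that adjoints are harmless at the fully-collapsed stage. One small imprecision: at your third step, the first identity of Lemma~\ref{lem:Q3} alone does not place the remaining coordinate on top---the paper applies the first identity and then the second in succession before the final Schwarz splitting---but this is exactly the ``combinatorial bookkeeping'' you already flag as the main obstacle, and once traced through it works as you describe.
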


The proof of Proposition \ref{Prop:GeneralizedSchwarz} is in Appendix A.

\begin{thm}
For $k=3$, the $\norm{\ \cdot\ }_{Q^k}$ defines a norm.
\end{thm}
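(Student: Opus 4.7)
The plan is to verify the three nontrivial norm axioms for $\norm{\ \cdot\ }_{Q^3}$: absolute homogeneity, definiteness, and the triangle inequality. Non-negativity is already handled by Proposition \ref{Prop:Positivity}.

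Homogeneity is essentially bookkeeping. From the inductive definition of $\set{B_{\vec u}}(\vec a_k,\ldots,\vec a_1)$, each of the $2^3=8$ slots enters the functional $\lra{\,\cdot\,}_{Q^3}$ either linearly or conjugate-linearly, so replacing every entry by $\lambda B$ pulls out a factor $\lambda^{r}\bar\lambda^{8-r}=\abs{\lambda}^{8}$, where $r$ is the number of linear slots. Taking the eighth root gives $\norm{\lambda B}_{Q^3}=\abs{\lambda}\,\norm{B}_{Q^3}$. Definiteness follows at once from monotonicity (Theorem \ref{Thm:Monotonicity}) combined with Proposition \ref{Prop:Q2Form}: if $\norm{B}_{Q^3}=0$, then $\norm{B}_{Q^2}\leq\norm{B}_{Q^3}=0$, and since $\norm{\,\cdot\,}_{Q^2}=\norm{\Xi_{(\cdot)}}_{4}$ is already known to be a norm, this forces $B=0$.

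The substantive step is the triangle inequality. I would follow the Gowers template: expand
\begin{align*}
\norm{B+C}_{Q^3}^{8}
&=\lra{\set{(B+C)_{\vec u}}_{\vec u\in\set{0,1}^{3}}}_{Q^3}
=\sum_{(F_{\vec u})\in\set{B,C}^{\set{0,1}^3}}
\lra{\set{F_{\vec u}}_{\vec u\in\set{0,1}^3}}_{Q^3},
\end{align*}
which is legitimate because each slot of $\lra{\,\cdot\,}_{Q^3}$ is either linear or conjugate-linear, and either way the distributive law $(B+C)$-in-a-slot produces the sum of the two terms with $B$ or with $C$ there. Apply the generalized quantum Schwarz inequality (Proposition \ref{Prop:GeneralizedSchwarz}) to each of the $2^{8}=256$ resulting terms:
\begin{align*}
\abs{\lra{\set{F_{\vec u}}}_{Q^3}}
\leq\prod_{\vec u\in\set{0,1}^3}\norm{F_{\vec u}}_{Q^3},
\end{align*}
where each factor is either $\norm{B}_{Q^3}$ or $\norm{C}_{Q^3}$. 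Summing over all $2^8$ assignments reproduces the binomial expansion of $(\norm{B}_{Q^3}+\norm{C}_{Q^3})^{8}$, so $\norm{B+C}_{Q^3}^{8}\leq(\norm{B}_{Q^3}+\norm{C}_{Q^3})^{8}$, and the eighth root yields the triangle inequality.

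The only real work has already been done upstream: the generalized Schwarz inequality for $k=3$, whose proof relies on the two permutation symmetries of $\lra{\,\cdot\,}_{Q^3}$ established in Lemma \ref{lem:Q3}. Given that input, the norm property is a clean consequence of multilinear expansion plus a binomial sum, exactly as in the classical Gowers $U^3$ case. The main obstacle, if any, is a notational one: keeping track of which of the 8 slots are linear and which are conjugate-linear in the expansion, but since we only need the absolute value of each multilinear term, this distinction does not affect the final bound.
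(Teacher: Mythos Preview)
Your proposal is correct and follows essentially the same route as the paper: positivity from Proposition \ref{Prop:Positivity}, scaling from the definition, non-degeneracy from monotonicity together with $\norm{\,\cdot\,}_{Q^2}$ being a norm, and the triangle inequality via the multilinear expansion of $\lra{\set{B+C}}_{Q^3}$ into $2^8$ terms, each bounded by Proposition \ref{Prop:GeneralizedSchwarz}, then summed back into the binomial $(\norm{B}_{Q^3}+\norm{C}_{Q^3})^8$. The only small imprecision is the equality $\lambda^{r}\bar\lambda^{8-r}=\abs{\lambda}^{8}$, which requires $r=4$; this does hold (each level of the inductive construction balances linear and conjugate-linear slots), but you could also sidestep the issue entirely by noting from Proposition \ref{Prop:Positivity} that $\partial_{\vec a}(\lambda B)=\abs{\lambda}^{2}\partial_{\vec a}B$, whence $\norm{\lambda B}_{Q^k}^{2^k}=\abs{\lambda}^{2^k}\norm{B}_{Q^k}^{2^k}$ directly.
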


\begin{proof}
A norm has four properties: positivity, scaling, non-degeneracy, and the triangle inequality. Positivity for all $k$ was established in Proposition \ref{Prop:Positivity}. Scaling for all $k$  is clear from the definition. Non-degeneracy is true as $k=2$ gives a norm and 
the monotonicity property established in Theorem \ref{Thm:Monotonicity}. 
Thus we only need to prove the triangle inequality, which we prove for $k=3$.
\begin{align*}
        \norm{B_0+B_1}^{2^k}_{Q^k}
        &=\langle\set{B_0+B_1}_{\vec u\in\set{0,1}^k}\rangle_{Q^k}\\
        &=\sum_{S\subset\set{0,1}^k}
        \langle\set{B_{I_{\vec u\in S}}}_{\vec u\in\set{0,1}^k}\rangle_{Q^k}\\
        &\leq\sum_{S\subset\set{0,1}^k}
        \Pi_{\vec u\in \set{0,1}^n}\norm{B_{I_{\vec u\in S}}}_{Q^k}\\
            &=\left(\norm{B_0}_{Q^k} +\norm{B_1}_{Q^k}\right)^{2^k}\;.
    \end{align*}
Here $I_{\vec u\in S}$ is the indicator function for the subset $S$ that chooses $B_0$ in the set $S$ and $B_1$ in its complement. We have used the upper bound in Proposition \ref{Prop:GeneralizedSchwarz} for each term with $k=3$.
Hence, we have the triangle inequality $\norm{B_0+B_1}_{Q^k}\leq \norm{B_0}_{Q^k}+\norm{B_1}_{Q^k}$ in case that  $k=3$.
\end{proof}

\begin{Rem}
It would be interesting to show whether  $\norm{\ \cdot\ }_{Q^k}$ satisfies the triangle inequality for $k\geq 4$.
\end{Rem}

In the classical theory of uniformity norms, the connection between $L^p$ norms
and Gowers' norms has been studied by the Eisner and Tao~\cite{TT12}.
In this work, we investigate the connection between the quantum uniformity measure $\norm{\ \cdot \ }_{Q^k}$  (or norm with $k=2,3$) and the $L^{p}$ norm. 

\begin{lem}\label{230810lem1}   
Given two positive operators $A,B\in L(\mathcal{H}^{\ot n})$ and integer $k\geq 0$, one has   
\begin{align}
     0\leq\mathbb{E}_{\vec{a}\in V^n} \frac{1}{d^n} \Tr{ \lrp{A(\vec{a})\, B}^k }
     \leq  \|A\|_{k}^k \|B\|_{k}^k.
\end{align}
\end{lem}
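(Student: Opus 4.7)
The plan is to establish the two bounds separately, with the $k=0$ case being trivial since both sides equal $1$. For the lower bound with $k \geq 1$, I would exploit the positivity of $A(\vec a) = w(\vec a) A w(\vec a)^{*}$, which holds because conjugation by a unitary preserves positivity. Combined with the elementary trace identity $\Tr{(XY)^{k}} = \Tr{(Y^{1/2} X Y^{1/2})^{k}}$ valid for any $Y \geq 0$ (which follows by cyclicity after rewriting $(Y^{1/2} X Y^{1/2})^{k} = Y^{1/2} X (YX)^{k-1} Y^{1/2}$), applied with $X = A(\vec a)$ and $Y = B$, this yields $\Tr{(A(\vec a) B)^{k}} = \Tr{(B^{1/2} A(\vec a) B^{1/2})^{k}} \geq 0$ for every $\vec a$, as the trace of the $k$-th power of a positive operator.

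For the upper bound, the central input is the Lieb--Thirring (Araki--Lieb--Thirring) trace inequality: for positive $X, Y$ and integer $k \geq 1$, $\Tr{(XY)^{k}} \leq \Tr{X^{k} Y^{k}}$. Applied with $X = A(\vec a)$ and $Y = B$, and using that conjugation by the unitary $w(\vec a)$ is a $*$-algebra homomorphism so that $A(\vec a)^{k} = (A^{k})(\vec a)$, this gives
\[
\Tr{(A(\vec a) B)^{k}} \leq \Tr{(A^{k})(\vec a)\, B^{k}}.
\]

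Taking the expectation over $\vec a$ and invoking the ergodic identity \eqref{ErgodicIdentity} in the form $\mathbb{E}_{\vec a}(A^{k})(\vec a) = \frac{1}{d^{n}} \Tr{A^{k}}\, \mathbb{I}$, one concludes
\[
\mathbb{E}_{\vec a} \frac{1}{d^{n}} \Tr{(A^{k})(\vec a)\, B^{k}} = \frac{1}{d^{2n}} \Tr{A^{k}} \Tr{B^{k}} = \norm{A}_{k}^{k} \norm{B}_{k}^{k},
\]
which is the desired upper bound. The main non-trivial ingredient is the Lieb--Thirring inequality; everything else is bookkeeping via the ergodic identity together with the algebra-homomorphism property of Weyl conjugation, so I do not anticipate any serious obstacle beyond citing the classical trace inequality cleanly.
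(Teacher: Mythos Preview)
Your proposal is correct and essentially identical to the paper's proof: both establish non-negativity via the rewriting $\Tr{(A(\vec a)B)^{k}}=\Tr{(B^{1/2}A(\vec a)B^{1/2})^{k}}$, invoke the Araki--Lieb--Thirring inequality for the upper bound, and then apply the ergodic identity \eqref{ErgodicIdentity} to factor the expectation into $\norm{A}_{k}^{k}\norm{B}_{k}^{k}$. The only cosmetic additions in your write-up are the explicit mention of the trivial $k=0$ case and of the identity $A(\vec a)^{k}=(A^{k})(\vec a)$, which the paper leaves implicit.
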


\begin{proof}
Note that $(AB)^{k}=AB^{1/2} \lrp{B^{1/2}AB^{1/2}}^{k-1} B^{1/2}$, so 
\[
\Tr{(AB)^k}=\Tr{ \lrp{B^{1/2}AB^{1/2}}^{k} }\geq0\;.
\]
The Araki–Lieb–Thirring inequality \cite{LiebThir76,Araki90} states that for positive operators $A,B$ and positive integer $k$, 
\begin{align*}
    0\leq\Tr{(AB)^{k}}
    \leq
    \Tr{A^{k}B^{k}}\;.
\end{align*}
As $A\mapsto A(\vec{a})$ preserves positivity,
\begin{align*}
0\leq\mathbb{E}_{\vec{a}\in V^n} \frac{1}{d^n} \Tr{ ( A (\vec{a}) B)^k }  
&\leq \mathbb{E}_{\vec{a}\in V^n} \frac{1}{d^n} \Tr{ A^k(\vec{a}) B^{k }  }
\;.
\end{align*}
Using \eqref{ErgodicIdentity}, one has
\begin{align*}
\mathbb{E}_{\vec a} \frac{1}{d^n} \Tr{ A^k(\vec{a}) B^{k }  }
&=
\left(\frac{1}{d^n}\Tr{A^k}\right)\left(\frac{1}{d^n}\Tr{B^k}\right)=\|A\|_{k}^k \|B\|_{k}^k\;.
\end{align*}
Therefore, we obtain the result.
\end{proof}

\begin{prop}\label{230810prop1}
Given two operators $A,B\in L(\mathcal{H}^{\ot n})$ and an integer $k\ge 1$,
we have
\begin{align}
\label{230810eq1}
\left( \mathbb{E}_{\vec a\in V^n} \| A(\vec a) B\|_r^s\right)^{1/s} \le \|A\|_p \|B\|_q,
\end{align}
where $p = q = {2^k}/(k + 1)$, $r = {2^{k-1}}/k$, and $s = 2^{k-1}$.
\end{prop}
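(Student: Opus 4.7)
The plan is to obtain this inequality by combining Lemma~\ref{230810lem1} with a bilinear complex interpolation argument, after reducing to positive operators. A key observation is the identity
\begin{equation*}
\tfrac{1}{p} \;=\; \tfrac{1}{2r} + \tfrac{1}{2s}\,,
\end{equation*}
which one verifies by computing $\tfrac{1}{2r}+\tfrac{1}{2s}=\tfrac{k}{2^k}+\tfrac{1}{2^k}=\tfrac{k+1}{2^k}=\tfrac{1}{p}$. This marks the triple $(p,r,s)$ as the bilinear Riesz--Thorin interpolate, at parameter $\theta=r/s$, between a H\"older pointwise endpoint (at $\theta=0$) and a Lieb--Thirring-type averaged endpoint (at $\theta=1$).

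First I would reduce to the case of positive $A,B$. By the polar decompositions $A=U|A|$ and $B=|B^\ast|V$ and the unitary invariance of normalized Schatten norms, one checks that $\norm{A(\vec a)B}_r=\norm{\,|A|(\vec a)\,|B^\ast|\,}_r$, so it suffices to treat $A,B\ge 0$. For positive $A,B$, cyclicity of the trace and the same-spectrum identity $\Tr{(BXB)^m}=\Tr{(XB^2)^m}$ yield
\begin{equation*}
\norm{A(\vec a)B}_r^r \;=\; \tfrac{1}{d^n}\Tr{(A(\vec a)^2\, B^2)^{r/2}}\,,
\end{equation*}
which is precisely the form addressed by Lemma~\ref{230810lem1} applied to the positive operators $P=A^2$, $Q=B^2$ at order $r/2$.

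The two endpoints are then: (i) the pointwise H\"older bound $\sup_{\vec a}\norm{A(\vec a)B}_r \le \norm{A}_{2r}\norm{B}_{2r}$, immediate from H\"older for Schatten norms combined with the unitary invariance $\norm{A(\vec a)}_{2r}=\norm{A}_{2r}$; and (ii) the averaged bound $(\mathbb{E}_{\vec a}\norm{A(\vec a)B}_r^r)^{1/r}\le\norm{A}_r\norm{B}_r$, obtained from Lemma~\ref{230810lem1} applied to $P,Q$ at order $r/2$ (direct when $r/2\in\mathbb{Z}_{>0}$, as in $k=4$ where $r=2$; extended to general $r\ge 1$ using the continuous form of the Araki--Lieb--Thirring inequality). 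Bilinear complex interpolation on the Schatten scale, applied to $(A,B)\mapsto(\vec a\mapsto A(\vec a)B)$ at $\theta=r/s$, then delivers the stated bound with constant~$1$ and parameter $p$ matching the identity above.

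The main obstacle I foresee is the averaged endpoint when $r/2$ is not a positive integer, since Lemma~\ref{230810lem1} is stated only for integer exponents; extending it here via the continuous Araki--Lieb--Thirring inequality $\Tr{(B^{1/2}AB^{1/2})^\alpha}\le\Tr{A^\alpha B^\alpha}$ for positive $A,B$ and $\alpha\ge 1$ is the technical heart of the argument. A secondary technicality is verifying that the normalized Schatten classes form a valid complex-interpolation scale for the bilinear setting needed here, which is a standard consequence of the theory of noncommutative $L^p$ spaces.
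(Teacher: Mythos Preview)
Your interpolation strategy has a genuine gap at the averaged endpoint. You need
\[
\bigl(\mathbb{E}_{\vec a}\norm{A(\vec a)B}_r^{\,r}\bigr)^{1/r}\le\norm{A}_r\norm{B}_r\,,
\]
which after your reduction to positive operators amounts to Lemma~\ref{230810lem1} at exponent $r/2=2^{k-2}/k$. For $k\le 3$ this exponent is strictly less than $1$, and in that regime the Araki--Lieb--Thirring inequality \emph{reverses}, so the extension you propose cannot go through. In fact the endpoint is simply false there: for $k=2$ (so $r=1$), take $A=B=\proj{\psi}$ on a single qubit with $\ket{\psi}=\cos\tfrac{\pi}{8}\ket{0}+\sin\tfrac{\pi}{8}\ket{1}$; one computes
\[
\mathbb{E}_{\vec a}\norm{A(\vec a)B}_1=\tfrac12\,\mathbb{E}_{\vec a}\abs{\langle\psi|w(\vec a)|\psi\rangle}=\tfrac{1+\sqrt 2}{8}>\tfrac14=\norm{A}_1\norm{B}_1\,.
\]
Since your interpolation parameter is $\theta=r/s=\tfrac12$ at $k=2$, the argument genuinely leans on this failed endpoint and cannot be salvaged by retreating toward endpoint~(i).

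The paper takes a different route that never leaves the integer-exponent range of Lemma~\ref{230810lem1}. One first applies a three-factor H\"older splitting
\[
\abs{A}(\vec a)\,\abs{B}\;=\;\abs{A}^{(s-p)/s}(\vec a)\cdot\bigl(\abs{A}^{p/s}(\vec a)\,\abs{B}^{q/s}\bigr)\cdot\abs{B}^{(s-q)/s}
\]
with H\"older exponents $\tfrac{sp}{s-p},\,s,\,\tfrac{sq}{s-q}$, which peels off the factors $\norm{A}_p^{\,s-p}\norm{B}_q^{\,s-q}$ directly. The remaining middle factor, after averaging in $\vec a$, is controlled by Lemma~\ref{230810lem1} applied to $\abs{A}^{2p/s}$ and $\abs{B}^{2q/s}$ at exponent $s/2=2^{k-2}$, a positive integer for every $k\ge 2$. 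The point is that the H\"older step pre-balances the powers so that the averaged piece lands exactly where Araki--Lieb--Thirring applies, rather than below it.
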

We prove Proposition \ref{230810prop1} in  Appendix A.

\begin{thm}[\bf Schatten Norms Dominate Quantum Uniformity]\label{thm:rel_lp}
For integer $k\ge 1$ and  $B \in L(\mathcal{H}^{\ot n})$, 
\[\|B\|_{Q^k} \le \|B\|_{p_k}
\;,\quad\text{where }p_{k}=\frac{2^{k}}{k+1}\;.
\]
\end{thm}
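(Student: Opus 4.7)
The plan is to proceed by induction on $k$, leveraging the inductive representation of $\|\cdot\|_{Q^k}$ from Proposition~\ref{Prop:InductiveDefn} together with Proposition~\ref{230810prop1}, which is precisely calibrated to the parameter $p_k = 2^k/(k+1)$.

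For the base case $k=1$, I would simply observe that $p_1 = 1$ and
\[
\|B\|_{Q^1} = \left|\tfrac{1}{d^n}\Tr{B}\right| \le \tfrac{1}{d^n}\Tr{|B|} = \|B\|_{1},
\]
using the triangle inequality for the trace applied to the polar decomposition of $B$.

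For the inductive step, assume the bound holds at level $k-1$, so $\|C\|_{Q^{k-1}} \le \|C\|_{p_{k-1}}$ for every $C$, with $p_{k-1} = 2^{k-1}/k$. Using Proposition~\ref{Prop:InductiveDefn} and then the inductive hypothesis term by term under the expectation,
\[
\|B\|_{Q^k}^{2^k} = \mathbb{E}_{\vec a}\|\partial_{\vec a}B\|_{Q^{k-1}}^{2^{k-1}} \le \mathbb{E}_{\vec a}\|\partial_{\vec a}B\|_{p_{k-1}}^{2^{k-1}}.
\]
Now rewrite $\partial_{\vec a}B = B(\vec a)\,B^{*}$ and invoke Proposition~\ref{230810prop1} with first operator $B$, second operator $B^{*}$, and parameters $p=q=2^k/(k+1)=p_k$, $r=2^{k-1}/k=p_{k-1}$, and $s=2^{k-1}$. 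Since the Schatten $p$-norm is invariant under taking adjoints, $\|B^{*}\|_{p_k}=\|B\|_{p_k}$, and Proposition~\ref{230810prop1} yields
\[
\mathbb{E}_{\vec a}\|B(\vec a)B^{*}\|_{p_{k-1}}^{2^{k-1}} \le \bigl(\|B\|_{p_k}\|B^{*}\|_{p_k}\bigr)^{2^{k-1}} = \|B\|_{p_k}^{2^{k}}.
\]
Chaining the inequalities and taking $2^{k}$-th roots gives $\|B\|_{Q^k}\le\|B\|_{p_k}$, closing the induction.

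There is no serious obstacle here, since Proposition~\ref{230810prop1} does the analytic heavy lifting and its exponents $(p,q,r,s)$ have been chosen precisely so that $r = p_{k-1}$ and $s = 2^{k-1}$ line up with what the inductive step produces. The only point that requires care is the bookkeeping: one must verify that the level-$k$ target $p_k$ on the right of Proposition~\ref{230810prop1} is compatible, via the shift $k\mapsto k-1$ in the inductive hypothesis, with the level-$(k-1)$ exponent $p_{k-1}$ that appears after expanding $\partial_{\vec a}B$. Once this parameter matching is pinned down, the argument is a clean three-line estimate.
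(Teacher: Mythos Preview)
Your proposal is correct and follows essentially the same approach as the paper: a base case $k=1$ via $|\tfrac{1}{d^n}\Tr{B}|\le\|B\|_1$, then induction using Proposition~\ref{Prop:InductiveDefn} followed by Proposition~\ref{230810prop1} with the exponents matched exactly as you describe. The only cosmetic difference is that the paper indexes the inductive step as $k\to k+1$ while you index it as $k-1\to k$.
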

\begin{proof}
When $k=1$, we use \eqref{ErgodicIdentity} to infer that 
\[\|B\|_{Q^1}^2= \mathbb{E}_{\vec a\in V^n} \frac1{d^n} \Tr{ B(\vec a)B^*} = \left|\frac1{d^n} \Tr{B}\right|^2 \le \|B\|_1^2 . \] 
Here we also use the well-known bound  $\abs{\frac{1}{d^n}\Tr{B}}\leq \norm{B}_{1}$.

For $k>1$ we proceed inductively. Assume the statement holds for $k=1,\ldots, K$.
For the $k+1$ case, we have 
\begin{align*}
\|A\|_{Q^{k+1}}^{2^{k+1}} = \mathbb{E}_{ \vec a \in V^n} \| \partial_{\vec a} A\|_{Q^{k}}^{2^k} \le \mathbb{E}_{ \vec a \in V^n} \| A(\vec a) A^\dag\|_{p_k}^{2^k}
\le   \|A\|_{p_{k+1}}^{2^{k+1}},
\end{align*}
where the first inequality used the inductive assumption, and the second used Proposition \ref{230810prop1}.
\end{proof}

\begin{prop}[\bf Pure States]\label{Prop:Q34Pure}
Consider any $n$-qudit, pure state $\rho=\proj{\psi}=\sum_{\vec a\in V^n}  \Xi_{\rho}(\vec a)\,  w(\vec  a)$. Then
\begin{align*}
   \norm{\rho}_{Q^{1}}=&\frac{1}{d^n}; \quad \norm{\rho}_{Q^{2}}=\norm{\Xi_\rho}_{\ell^4};\\
\norm{\rho}_{Q^{3}}
=&d^{n/4} \norm{\Xi_\rho}_{\ell^4}; \quad
\norm{\rho}_{Q^{4}}=d^{n/2}\norm{\Xi_\rho}^{1/2}_{\ell^8}\norm{\Xi_\rho}^{1/2}_{\ell^4}.
\end{align*}
In general for $k\geq3$,  the $Q^{k}$ measure of a pure state $\rho$ is given by $\ell^{2^{p}}$ norms of its Fourier coefficients $\Xi_{\rho}(\vec a)$ for $2\leq p\leq k-1$. 
\be\label{PureStateQMeasures}
\norm{\rho}_{Q^{k}} = d^{n\lrp{1-\frac{2k}{2^{k}}}   } \norm{\Xi_{\rho}}_{\ell^{4}}^{2^{-k+2}}
\prod_{j=0}^{k-3} \norm{\Xi_{\rho}}_{\ell^{2^{(j+2)}}}^{2^{-k+j+2}}\;.
\ee
\end{prop}

\begin{proof}
For $k=1,2$ use Proposition \ref{Prop:NormReductions}. 
For $k\geq3$,
we have 
\begin{align*}
    \norm{\rho}^{2^k}_{Q^k}=&
\mathbb{E}_{\vec a}\norm{\partial_{\vec a}\rho}^{2^{k-1}}_{Q^{k-1}}\\
=&\mathbb{E}_{\vec a}|\bra{\psi} w(\vec a)^*\ket{\psi}|^{2^{k-1}}\norm{w(\vec a)\rho}^{2^{k-1}}_{Q^{k-1}},\\
=&\mathbb{E}_{\vec a}d^{n2^{k-1}}|\Xi_{\rho}(\vec a)|^{2^{k-1}}\norm{w(\vec a)\rho}^{2^{k-1}}_{Q^{k-1}},\\
=&\mathbb{E}_{\vec a}d^{n2^{k-1}}|\Xi_{\rho}(\vec a)|^{2^{k-1}}\norm{\rho}^{2^{k-1}}_{Q^{k-1}},\\
=&d^{n(2^{k-1}-2)} \norm{\Xi_{\rho}}^{2^{k-1}}_{l^{2^{k-1}}}\norm{\rho}^{2^{k-1}}_{Q^{k-1}},
\end{align*}
where the first line comes from Proposition \ref{Prop:InductiveDefn}.
The second line comes from the fact that
$\partial_{\vec a} \rho=\bra{\psi} w(\vec a)^*\ket{\psi}w(\vec a)\rho$ for pure state $\rho=\proj{\psi}$.
The third line comes from the fact that $\bra{\psi} w(\vec a)^*\ket{\psi} = d^{n} \,\Xi_{\rho}(\vec a)$,
and the forth line comes \eqref{InvarianceUnderWeyl} in Proposition \ref{lem:quan_inv}.
By using the inductive relation $\norm{\rho}^{2^k}_{Q^k}=d^{n(2^{k-1}-2)} \norm{\Xi_{\rho}}^{2^{k-1}}_{l^{2^{k-1}}}\norm{\rho}^{2^{k-1}}_{Q^{k-1}}$, we get the result.
\end{proof}

\begin{Rem}
An $n$-qudit unit vector, $\ket{\psi}=\sum_{\vec x} f_{\psi}(\vec x)\ket{\vec x}$ in the computational basis, gives the pure state $\rho_{\psi}=\proj{\psi}=\sum_{\vec a}\Xi_{\rho_{\psi}}(\vec a)w(\vec a)$. In case $\rho_{\psi}$ is diagonal in the computational basis, i.e., $\rho_\psi=\proj{\vec x}$, we infer from Propositions \ref{Prop:ClassicalUniformity} and \ref{Prop:Q34Pure}
that $\norm{\rho_{\psi}}_{Q^k}=\norm{f_{\psi}}_{U^k}=d^{-n\frac{k+1}{2^k}}$. But generally $\norm{\rho_\psi}_{Q^{k}}\neq \norm{f_\psi}_{U^k}$. 
In the simplest case $k=1$, using Proposition \ref{Prop:NormReductions}, $\norm{\rho_{\psi}}_{Q^1}=\frac{1}{d^{n}}$, while $\norm{f_{\psi}}_{U^1}=|\mathbb{E}_{\vec x}f_{\psi}(\vec x)|$, which  depends on the  vector $\ket{\psi}$.
\end{Rem}
Next we show logarithmic convexity of the quantum uniformity measures in the special case of pure states and certain $k$. It would be interesting to generalize this to mixed states and arbitrary $k$.

\begin{prop}[\bf Log-convexity of the quantum uniformity norm]\label{thm:log_conv}
Let $\rho$ be a pure state in $L(\mathcal{H}^{\ot n})$ for integer local dimension $d\ge 2$.  Then \begin{align}\label{230807eq1-}
\|\rho\|_{Q^2}\le \|\rho\|^{1/2}_{Q^1}\|\rho\|^{1/2}_{Q^3}.
\end{align}
Moreover, equality holds in \eqref{230807eq1-}, if and only if $\rho$ is a stabilizer state.
\end{prop}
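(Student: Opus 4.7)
The plan is to reduce the target inequality to a single scalar bound on the Fourier $\ell^{4}$-norm of the characteristic function $\Xi_\rho$, and then prove that bound using H\"older's inequality together with Parseval. Since $\Tr{\rho}=1$, Proposition \ref{Prop:Positivity} gives $\norm{\rho}_{Q^1}=d^{-n}$. Proposition \ref{Prop:Q2Form} gives $\norm{\rho}_{Q^2}=\norm{\Xi_\rho}_{4}$, and Lemma \ref{Lemma:Q3PureRho} (which uses purity) gives $\norm{\rho}_{Q^3}=d^{n/4}\norm{\Xi_\rho}_{4}$. Substituting these three expressions into \eqref{230807eq1-} and collecting powers of $d$, the desired inequality is equivalent to the single scalar assertion
\[
\sum_{\vec a \in V^n}\abs{\Xi_\rho(\vec a)}^{4} \;\leq\; d^{-3n}.
\]

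I would prove this bound via the elementary H\"older estimate $\norm{\Xi_\rho}_{4}^{4}\le \norm{\Xi_\rho}_{\infty}^{2}\,\norm{\Xi_\rho}_{2}^{2}$, controlling the two factors on the right separately. For the $\ell^{\infty}$ factor, unitarity of the Weyl operators gives $\abs{\Xi_\rho(\vec a)}=d^{-n}\abs{\bra{\psi}w(\vec a)\ket{\psi}}\le d^{-n}$. For the $\ell^{2}$ factor, Parseval in the Weyl basis (Proposition \ref{Prop:WeylOrthonormal}) together with $\Tr{\rho^2}=1$ gives $\norm{\Xi_\rho}_{2}^{2}=\tfrac{1}{d^n}\Tr{\rho^{2}}=d^{-n}$. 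Multiplying the two bounds produces $d^{-2n}\cdot d^{-n}=d^{-3n}$, as required.

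For the equality clause, saturation of $\norm{\Xi_\rho}_{4}^{4}\le \norm{\Xi_\rho}_{\infty}^{2}\,\norm{\Xi_\rho}_{2}^{2}$ forces $\abs{\Xi_\rho(\vec a)}^{2}$ to take at most the two values $0$ and $\norm{\Xi_\rho}_{\infty}^{2}$, and saturation of the full chain additionally forces $\norm{\Xi_\rho}_{\infty}=d^{-n}$. Together these conditions give $\abs{\Xi_\rho(\vec a)}\in\{0,d^{-n}\}$ for every $\vec a\in V^n$, which by Proposition \ref{prop:stab_char} is exactly the characterization of pure stabilizer states. Conversely, for any such state Parseval forces exactly $d^n$ nonzero Fourier coefficients of modulus $d^{-n}$, giving $\norm{\Xi_\rho}_{4}^{4}=d^{n}\cdot d^{-4n}=d^{-3n}$, so equality holds. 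The only subtlety worth flagging is the careful tracking of the different powers of $d$ appearing in the three measures; once the Fourier-space reformulation has been made, no analytic difficulty remains.
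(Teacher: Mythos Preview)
Your proof is correct and follows essentially the same approach as the paper: both reduce the inequality to the scalar bound $\norm{\Xi_\rho}_4^4\le d^{-3n}$ via the identities $\norm{\rho}_{Q^1}=d^{-n}$, $\norm{\rho}_{Q^2}=\norm{\Xi_\rho}_4$, and $\norm{\rho}_{Q^3}=d^{n/4}\norm{\Xi_\rho}_4$, and then establish that bound using $\norm{\Xi_\rho}_4^4\le \norm{\Xi_\rho}_\infty^2\,\norm{\Xi_\rho}_2^2$ together with $\norm{\Xi_\rho}_\infty\le d^{-n}$ and $\norm{\Xi_\rho}_2^2=d^{-n}$. The equality analysis is likewise the same, invoking Proposition~\ref{prop:stab_char} for the characterization of stabilizer states.
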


\begin{proof}
From Proposition \ref{Prop:Q2Form} we infer that  $\norm{\rho}_{Q^2}
=\norm{\Xi_{\rho}}_4$. And from Proposition \ref{Prop:ExpectationDerivative} and $\Tr{\rho}=1$ we infer that $\norm{\rho}_{Q^1}^2 =\frac{1}{d^{2n}}$.
Therefore  \eqref{230807eq1-} is equivalent to 
\begin{align}\label{TheThingToShow-1}
    d^{n}  \norm{\Xi_{\rho}}_{\ell^4}^{2}
    \leq \norm{\rho}_{Q^{3}}\;.
\end{align}
Inserting the value of $\norm{\rho}_{Q^3}$ found in  Proposition \ref{Prop:Q34Pure}, we find that one  needs to verify that 
\begin{align}\label{TheThingToShow}
    d^{n}  \norm{\Xi_{\rho}}_{\ell^4}^{2}
    \leq d^{n/4}\norm{\Xi_\rho}_{\ell^4}\;,
    \quad\text{or}\quad
    \norm{\Xi_\rho}_4 \leq d^{-3n/4}\;.
\end{align}

Since   ${\Xi_{\rho}(\vec a)}
= {\lra{w(\vec {a}),\rho}}
=
\frac{1}{d^n} {\Tr{w(\vec a)^*\rho}}$, one infers from $\abs{\Tr{w(\vec a)\rho}}\leq \Tr{\rho}=1$,   
\[
\abs{\Xi_{\rho}(\vec a)}
\leq \frac{1}{d^n}\;.
\]
Also $\rho$ is pure, so $ \Tr{\rho^2}=1$,
and $
\sum_{\vec a\in V^n} |\Xi_\rho(\vec a)|^2= \frac{1}{d^n}$. Then 
\be
\sum_{\vec a} \abs{\Xi_{\rho}(\vec {a})}^{4}
\leq
 \max_{\vec{b}}\abs{\Xi_{\rho}(\vec {b})}^{2}\sum_{\vec a}\abs{\Xi_{\rho}(\vec {a})}^{2}
\leq \frac{1}{d^{2n}} \frac{1}{d^n}\;,
\ee
so  \eqref{TheThingToShow}  holds.

Equality pertains if and only if $|\Xi_\rho(\vec a)|\in \left\{0,\frac{1}{d^n}\right\}$ for every $\vec a$,
which means the pure state $\rho$ is a stabilizer state. This characterization of stabilizers follows from  
Proposition \ref{prop:stab_char}.
\end{proof}

\section{Characterization by Quantum Convolution}\label{Sect:Convolution}
In this section we give an alternative way to characterize the quantum uniformity measures using the Hadamard convolution $\boxtimes_H$, previously introduced by the authors in~\cite{BGJ23a,BGJ23b,BGJ23c}.  In the following we define convolution by starting from the tensor product of 2 copies of the $n$-qudit Hilbert space $\mathcal{H}^{\otimes n}$, which we denote as  $\mathcal{H}_{I}$ and $\mathcal{H}_{II}$. In considering $\mathcal{H}_{I}\otimes \mathcal{H}_{II}$, we denote $\Ptr{II}{\ \cdot\ }$ as the partial trace over $\mathcal{H}_{II}$, bringing one back to the Hilbert space $\mathcal{H}_{I}=\mathcal{H}^{\otimes n}$.

\begin{Def}[\bf Hadamard Convolution]\label{Def:Had_Conv}
The Hadamard convolution  of two $n$-qudit states $\rho$ and $\sigma$ is the state
\begin{eqnarray}
\rho\boxtimes_H\sigma=\Ptr{II}{V_H(\rho\ot\sigma) V^*_H}\;.
\end{eqnarray}
In this expression we choose  $V_H=U^{\ot n}_H:=U^{(1,n+1)}_H\ot U^{(2,n+2)}_H\ot...\ot U^{(n,2n)}_H$. Here  
$U_H$ is the $2$-qudit Hadamard unitary 
\begin{eqnarray}
U_H=\sum_{x,y\in\mathbb{Z}_d}
\ket{x}\bra{ x+ y}
\ot \ket{y}\bra{ x- y}\;,
\end{eqnarray}
and  $U^{(i,n+j)}_H$ denotes the action of $U_H$ on the $i^{\rm th}$ qudit in $\mathcal{H}_{I}$  and the $j^{\rm th}$ qudit in $\mathcal{H}_{II}$. 
The corresponding convolutional channel $\mathcal{E}_H$ is 
\begin{eqnarray}
\mathcal{E}_H(\ \cdot\ )
=\Ptr{II}{V_H \ \cdot\  V^*_H}\;.
\end{eqnarray}
\end{Def}

\begin{lem}\label{Prop:ConvolutionHad-w}
Let $B=\sum_{\vec a}\Xi(\vec a)w(\vec a)$ be an $n$-qudit transformation of local dimension $d$. Then 
\be
\norm{B\boxtimes_{H} B }_{2}^{1/2} = d^{n/2}\lrp{ \sum_{\vec a} \abs{\Xi_{B}(\vec a)}^{4}}^{1/4}=d^{n/2}\norm{\Xi_B}_4\;.
\ee
\end{lem}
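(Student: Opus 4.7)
The strategy is to compute the Fourier-Weyl coefficients of $B\boxtimes_H B$ explicitly, and then to apply Parseval's identity (Proposition~\ref{Prop:WeylOrthonormal}).

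The core step is to establish that Hadamard convolution acts diagonally on the Weyl basis:
\[
w(\vec a) \boxtimes_H w(\vec b) \;=\; d^n\, \delta_{\vec a,\vec b}\, w(\tau(\vec a))\;,
\]
where $\tau$ is a bijection on $V^n$; concretely, for $d$ an odd prime one expects $\tau(p,q)=(2p,q)$ componentwise. Because $V_H=U_H^{\otimes n}$ factors across the $n$ qudit pairs and the partial trace respects this factorization, the identity reduces to a single-qudit computation. For one qudit one unfolds $U_H\,(Z^p X^q \otimes Z^{p'} X^{q'})\,U_H^*$ using the defining sums for $U_H$ and $U_H^*$, then traces out the second factor. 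The trace over the second factor forces a delta $\delta_{q,q'}$ (coming from the $X$-exponents surviving the partial trace), and the remaining sum over the traced-out basis yields $d\,\delta_{p,p'}$ (from the $Z$-exponents). After absorbing the normalization phase $\zeta^{-pq}$ from $w(p,q)=\zeta^{-pq}Z^pX^q$, one is left with $d\,w(\tau(p,q))$, as asserted. This may already be available from \cite{BGJ23a,BGJ23b,BGJ23c}.

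With this identity in hand, bilinearity of $\boxtimes_H$ applied to the Fourier expansion $B=\sum_{\vec a}\Xi_B(\vec a)\,w(\vec a)$ yields
\[
B\boxtimes_H B \;=\; \sum_{\vec a,\vec b}\Xi_B(\vec a)\Xi_B(\vec b)\bigl(w(\vec a)\boxtimes_H w(\vec b)\bigr) \;=\; d^n\sum_{\vec a}\Xi_B(\vec a)^{2}\, w(\tau(\vec a))\;.
\]
Since $\tau$ is a bijection on $V^n$, the reindexed family $\{w(\tau(\vec a))\}_{\vec a\in V^n}$ is again the full orthonormal Weyl basis with respect to $\lra{\cdot,\cdot}$. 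Parseval's identity then gives
\[
\norm{B\boxtimes_H B}_2^{2} \;=\; d^{2n}\sum_{\vec a\in V^n}\abs{\Xi_B(\vec a)}^{4}\;.
\]
Taking square roots twice produces $\norm{B\boxtimes_H B}_2^{1/2} = d^{n/2}\,\norm{\Xi_B}_4$, which is the claim.

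The main technical obstacle lies in the single-qudit convolution identity: one must carefully track the $\zeta$-phases from the Weyl normalization and confirm that $\tau$ is a bijection on $V^n$. For $d=2$ this last point needs separate attention, since $2$ is not invertible in $\mathbb{Z}_2$, so the explicit form of $\tau$ (and the interpretation of $U_H$) must be adjusted; however, once the diagonal identity $w(\vec a)\boxtimes_H w(\vec b)=d^n\delta_{\vec a,\vec b}w(\tau(\vec a))$ is in place with \emph{some} bijection $\tau$, the remaining Parseval computation is immediate and the conclusion is unchanged.
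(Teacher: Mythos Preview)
Your approach is essentially the same as the paper's. The paper quotes the convolution identity from \cite{BGJ23b} in the form $w(\vec a)\boxtimes_H w(\vec b)=d^n\delta_{\vec a,\vec b}\,\omega^{-\langle \vec a_1\vec a_2\rangle}Z^{\vec a_1}w(\vec a)$, observes that the operators $w'(\vec a):=Z^{\vec a_1}w(\vec a)$ are (up to phase) a permutation of the Weyl basis and hence orthonormal, and then applies \eqref{Schatten-Fourier} exactly as you do. Your guess $\tau(p,q)=(2p,q)$ is correct for odd $d$, since $Z^{p}w(p,q)$ equals $w(2p,q)$ up to a phase; note that a unit-modulus phase does survive in the identity, so your displayed formula should carry an extra $\eta(\vec a)$ with $|\eta(\vec a)|=1$, but this is irrelevant for the Parseval step, and your caveat about $d=2$ is apt since $U_H$ as written is not unitary there.
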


 \begin{proof}
  This is a direct consequence of Proposition 75 in \cite{BGJ23b}, noting that $\Xi$ is normalized differently in that work. In more detail, 
the Hadamard convolution of two Weyl operators satisfies:
\be\label{wHConvolution}
w(\vec a) \boxtimes_{H} w(\vec b) 
= d^{n} \delta_{\vec a, \vec b} \,\omega^{-\lra{\vec a_{1}\vec a_{2}}} Z^{\vec a_{1}} w(\vec a)
\;.
\ee
Thus $B\boxtimes_{H}B= \sum_{\vec a} d^{n}\Xi_{B}(\vec a)^{2} w'(\vec a)$, where $w'(\vec a)=Z^{\vec a_{1}} w(\vec a)$, up to a phase, is an orthonormal basis in the inner product determined by the Schatten 2-norm. In fact \eqref{Schatten-Fourier} shows the Schatten 2-norm squared of $B$ equals the  $\ell^{2}$ norm of $\Xi_{B\boxtimes_{H}B}$. This establishes the proposition as claimed.
\end{proof}

\begin{prop}\label{prop:con_unif}
For  $B\in L(\mathcal{H}^{\ot n})$, we have both
\begin{align*}
&d^{n/2}\norm{B}_{Q^2} 
= 
\norm{B\boxtimes_H B}^{1/2}_{2}\;,\\
  &
  d^{n/2}\norm{B}^{2^{k+1}}_{Q^{k+1}}
    =\underset{\vec a_1,...,\vec a_{k-1}}{\mathbb{E}}\norm{(\partial_{\vec a_{k-1}, \vec a_{k-2},...,\vec a_1} B )\boxtimes_H (\partial_{\vec a_{k-1}, \vec a_{k-2},...,\vec a_1} B)}^2_{2}\;.
\end{align*}
\end{prop}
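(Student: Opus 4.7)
The two identities are closely related; the second follows from the first combined with the inductive structure of the quantum uniformity measures, so I would handle them in that order.

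\emph{First equality.} This would be obtained by directly chaining two earlier results. Proposition \ref{Prop:Q2Form} gives $\norm{B}_{Q^{2}} = \norm{\Xi_{B}}_{4}$, while Lemma \ref{Prop:ConvolutionHad-w} gives $\norm{B \boxtimes_{H} B}_{2}^{1/2} = d^{n/2}\norm{\Xi_{B}}_{4}$. Combining these immediately yields the asserted identity $d^{n/2}\norm{B}_{Q^{2}} = \norm{B \boxtimes_{H} B}_{2}^{1/2}$ with no further computation.

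\emph{Second equality.} The plan is to reduce this to the first identity applied to the $(k-1)$-fold quantum derivative. Writing $C_{\vec a_{1},\ldots,\vec a_{k-1}} := \partial_{\vec a_{k-1},\ldots,\vec a_{1}}B$ and using Proposition \ref{Prop:InductiveDefn} to peel off all but the outermost derivative, one obtains
\begin{equation*}
\norm{B}_{Q^{k+1}}^{2^{k+1}}
= \mathbb{E}_{\vec a_{1},\ldots,\vec a_{k-1}}\,\mathbb{E}_{\vec a_{k}}\,\left|\tfrac{1}{d^{n}}\Tr{\partial_{\vec a_{k}}C_{\vec a_{1},\ldots,\vec a_{k-1}}}\right|^{2}
= \mathbb{E}_{\vec a_{1},\ldots,\vec a_{k-1}}\,\norm{C_{\vec a_{1},\ldots,\vec a_{k-1}}}_{Q^{2}}^{4},
\end{equation*}
where the last equality recognizes the inner expectation as $\norm{C}_{Q^{2}}^{4}$ by the $k=2$ instance of Proposition \ref{Prop:Positivity}. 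The first identity of this proposition, legitimate because $C_{\vec a_{1},\ldots,\vec a_{k-1}}$ is just another element of $L(\mathcal H^{\otimes n})$, then lets me rewrite $\norm{C_{\vec a_{1},\ldots,\vec a_{k-1}}}_{Q^{2}}^{4}$ as a multiple of $\norm{C_{\vec a_{1},\ldots,\vec a_{k-1}} \boxtimes_H C_{\vec a_{1},\ldots,\vec a_{k-1}}}_{2}^{2}$. Substituting yields an identity of the claimed form.

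\emph{Anticipated issue.} No substantive mathematical obstacle arises; the argument is entirely bookkeeping once the $k=1$ case and the inductive relation from Proposition \ref{Prop:InductiveDefn} are in hand. The only delicate point is the normalization constant: squaring the first identity produces a factor $d^{2n}$ on the left, so the prefactor that emerges naturally on the second line is $d^{2n}\norm{B}_{Q^{k+1}}^{2^{k+1}}$ rather than $d^{n/2}\norm{B}_{Q^{k+1}}^{2^{k+1}}$. I would verify this by checking the $k=1$ specialization against the first equality (squared) to pin down the exact power of $d$.
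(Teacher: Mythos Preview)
Your approach matches the paper's exactly: combine Proposition~\ref{Prop:Q2Form} with Lemma~\ref{Prop:ConvolutionHad-w} for the first identity, then invoke the inductive relation of Proposition~\ref{Prop:InductiveDefn} to reduce the $Q^{k+1}$ measure to an expectation of $Q^2$ measures of the iterated derivative, and substitute the first identity. The paper's proof stops at the line $\norm{B}^{2^{k+1}}_{Q^{k+1}} = \mathbb{E}_{\vec a_1,\ldots,\vec a_{k-1}}\norm{\partial_{\vec a_{k-1},\ldots,\vec a_1}B}^4_{Q^2}$ and declares ``as claimed'' without carrying out the final substitution.

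Your observation about the normalization is correct and worth flagging: raising the first identity to the fourth power gives $d^{2n}\norm{C}_{Q^2}^4 = \norm{C\boxtimes_H C}_2^2$, so the natural prefactor on the second line is $d^{2n}$, not $d^{n/2}$. Specializing the second identity to $k=1$ and comparing with the square of the first confirms this mismatch; the exponent in the statement appears to be a typo that the paper's own proof does not detect because it does not write out the last step.
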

\begin{proof}
In Proposition \ref{Prop:Q2Form} we have shown that $\norm{B}_{Q^2} =\norm{\Xi_B}_{\ell^4}$.  In Proposition \ref{Prop:ConvolutionHad-w} we show that  $\norm{B\boxtimes_H B}^{1/2}_{2}= d^{n/2}   \norm{\Xi_B}_{\ell^4}$. 
In addition, by Proposition \ref{Prop:InductiveDefn}, one infers   
$ \norm{B}^{2^{k+1}}_{Q^{k+1}}$ satisfies 
\begin{align}
    \norm{B}^{2^{k+1}}_{Q^{k+1}}
    =\underset{\vec a_1,...,\vec a_{k-1}}{\mathbb{E}}\norm{\partial_{\vec a_{k-1}, \vec a_{k-2},...,\vec a_1} B }^4_{Q^{2}}\;,
\end{align}
as claimed.
\end{proof}

\section{Overlap with the Clifford Hierarchy}  
We have shown that the measures $\norm{\ \cdot\ }_{Q^{k+1}}$ characterize exactly whether a unitary $U$ is an element of the $k^{\rm th}$-level $\mathcal{C}^{(k)}$ of the Clifford hierarchy.  Here we introduce a different measure $\norm{U}_{q^{k+1}}$, which we use to  bound  the $L^2$-Schatten distance between a given unitary $U$ and $\mathcal{C}^{(k)}$. In classical HOFA the analogs of such measures  have been extensively studied. 
\begin{Def}
The overlap of a unitary $U$ with the $k^{\rm th}$ level $\mathcal{C}^{(k)}$  of the  Clifford hierarchy is 
\begin{eqnarray}
\norm{U}_{q^{k+1}}
=\max_{V\in C^{(k)}}
|\inner{V}{U}|^2.
\end{eqnarray}
As $C^{(k)}\subset C^{(k+1)}$, then we 
have $\norm{U}_{q^k}\leq \norm{U}_{q^{k+1}}$ for any $k$.
\end{Def}

\begin{prop}\label{Prop:AnalyticCliffordCharacterization}
    Given an $n$-qudit unitary $U$ and $k\geq1$, one has 
    \be
    \norm{U}_{q^{k}}=1 \quad \text{iff} 
    \quad
    \norm{U}_{Q^{k}}=1\;.
    \ee
    In other words, $U$ is in the $k^{\rm th}$ level of Clifford hierarchy iff 
    $\norm{U}_{Q^{k+1}}=1$
\end{prop}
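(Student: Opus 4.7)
The plan is to show that each of the two conditions $\norm{U}_{q^{k}}=1$ and $\norm{U}_{Q^{k}}=1$ is independently equivalent to $U\in C^{(k-1)}$ (with the convention $C^{(0)}=\complex I$ for the base case). Combining the two equivalences yields the claim, and the reformulation $k\mapsto k+1$ produces the analytic characterization of the $k^{\rm th}$ level stated as ``in other words.''

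For the overlap side, I would apply Cauchy--Schwarz in the inner product \eqref{InnerProduct-n}. For any unitary $V\in C^{(k-1)}$ one has $\lra{V,V}=\lra{U,U}=1$, so $|\lra{V,U}|^{2}\leq 1$ with equality iff $U=\alpha V$ for some unit-modulus scalar $\alpha$. Such a phase does not affect membership in the Clifford hierarchy, since $(\alpha V)\,w(\vec a)\,(\alpha V)^{*}=V\,w(\vec a)\,V^{*}$; hence $\alpha V\in C^{(k-1)}$. Taking the maximum over $V\in C^{(k-1)}$ shows $\norm{U}_{q^{k}}=1$ if and only if $U\in C^{(k-1)}$.

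For the uniformity side, by Proposition \ref{Prop:Positivity},
\begin{align*}
\norm{U}^{2^{k}}_{Q^{k}}=\mathbb{E}_{\vec a_{k-1},\ldots,\vec a_{1}}\left|\frac{1}{d^{n}}\Tr{\partial_{\vec a_{k-1},\ldots,\vec a_{1}}U}\right|^{2}.
\end{align*}
Since $U$ is unitary, every iterated derivative $V=\partial_{\vec a_{k-1},\ldots,\vec a_{1}}U$ is a unitary on a $d^{n}$-dimensional space, so each integrand is bounded by $1$. Thus $\norm{U}_{Q^{k}}\leq 1$, with equality iff $|\frac{1}{d^{n}}\Tr V|=1$ for every tuple $(\vec a_{1},\ldots,\vec a_{k-1})$. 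The equality case of the triangle inequality applied to the unit-modulus eigenvalues of $V$ forces $V=cI$ for some scalar $c$, i.e.\ $V\in Q^{(0)}=\complex I$. Unfolding the inductive definition of the exponential quantum polynomials gives $U\in Q^{(k-1)}$, and Proposition \ref{Prop:CliffordHierarchyAlgebraic} identifies this with $C^{(k-1)}$.

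Combining the two characterizations yields $\norm{U}_{q^{k}}=1\iff U\in C^{(k-1)}\iff \norm{U}_{Q^{k}}=1$, which is the first assertion; relabeling $k\mapsto k+1$ gives the restatement $U\in\mathcal{C}^{(k)}\iff\norm{U}_{Q^{k+1}}=1$. The only mildly subtle ingredient is the equality case of $|\Tr V|\leq d^{n}$ for unitaries, but since this follows directly from the equality case of the triangle inequality for sums of unit-modulus complex numbers, I do not expect any serious obstacle: the proof is essentially a clean assembly of Proposition \ref{Prop:Positivity} and Proposition \ref{Prop:CliffordHierarchyAlgebraic} together with Cauchy--Schwarz.
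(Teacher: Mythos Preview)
Your proof is correct and follows essentially the same approach as the paper's: both show that $\norm{U}_{Q^{k}}=1$ forces every iterated derivative $\partial_{\vec a_{k-1},\ldots,\vec a_1}U$ to be a scalar multiple of $I$, i.e.\ $U\in Q^{(k-1)}=C^{(k-1)}$. The paper phrases this as an induction reducing to the base case $k=1$ (and omits the $\norm{\cdot}_{q^k}$ side entirely), whereas you unroll the induction directly via Proposition~\ref{Prop:Positivity} and also spell out the Cauchy--Schwarz argument for the overlap measure; the content is the same.
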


\begin{proof}
By the inductive definition, we only need to prove the case where $k=1$. That, $U$ is equal to identity up to some phase, iff 
$\norm{U}_{Q^{1}}=1$.
This is because $\norm{U}_{Q^{1}}^2=|\frac{1}{d^n}\Tr{U}|^2=1$
 iff $U$ is equal to identity up to some phase.
\end{proof}

\begin{lem}\label{Lem:FirstQuartic}
For any two  $n$-qudit unitaries $U, V$, 
\begin{eqnarray}\label{LowerBoundQ2}
\mathbb{E}_{\vec a}
\abs{\inner{\partial_{\vec a}U}{\partial_{\vec a} V}}^2
= \sum_{\vec a} \abs{\Xi_{U^{*}V}(\vec a)}^{4}\;.
\end{eqnarray}
\end{lem}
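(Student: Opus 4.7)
The plan is to reduce the left-hand side to a quantity of the form $\norm{W}_{Q^{2}}^{4}$ for $W = U^{*}V$, and then invoke the explicit Fourier-side formula for $\norm{\ \cdot \ }_{Q^{2}}$ established in Proposition \ref{Prop:Q2Form}.

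The first step uses the identity \eqref{TraceDerivativeProduct}, namely
\[
\lra{\partial_{\vec a} U,\, \partial_{\vec a} V}
= \frac{1}{d^{n}}\Tr{\partial_{\vec a}(U^{*}V)}\;.
\]
Setting $W = U^{*}V$ and squaring the modulus yields
\[
\bigl|\lra{\partial_{\vec a} U,\, \partial_{\vec a} V}\bigr|^{2}
= \abs{\tfrac{1}{d^{n}}\Tr{\partial_{\vec a}W}}^{2}\;.
\]
Taking the expectation over $\vec a$ and applying Proposition \ref{Prop:Positivity} with $k=2$ then identifies the result as $\norm{W}_{Q^{2}}^{4}$.

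The second step invokes Proposition \ref{Prop:Q2Form}, which gives $\norm{W}_{Q^{2}} = \norm{\Xi_{W}}_{4}$, so that
\[
\norm{W}_{Q^{2}}^{4}
= \sum_{\vec a}\abs{\Xi_{W}(\vec a)}^{4}
= \sum_{\vec a}\abs{\Xi_{U^{*}V}(\vec a)}^{4}\;.
\]
Combining the two steps yields \eqref{LowerBoundQ2}.

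There is essentially no obstacle; the lemma is a direct corollary of two earlier results, provided one first recognizes that the $\vec a$-averaged inner product is precisely the degree-two quantum uniformity measure of the auxiliary unitary $U^{*}V$. The only thing that requires a moment's care is the application of \eqref{TraceDerivativeProduct}, whose proof uses the multiplicativity of $B\mapsto B(\vec a)$ under conjugation; in particular no unitarity of $U$ or $V$ is needed for the identity itself, though the compactness of the final expression relies on $U^{*}V$ being a well-defined operator whose characteristic function $\Xi_{U^{*}V}$ appears in the statement.
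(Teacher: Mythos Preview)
Your proof is correct and follows essentially the same approach as the paper, which simply cites \eqref{TraceDerivativeProduct} and Proposition~\ref{Prop:Q2Form} with $B=U^{*}V$. Your use of Proposition~\ref{Prop:Positivity} to explicitly identify the averaged quantity as $\norm{W}_{Q^{2}}^{4}$ just makes the intermediate step more transparent.
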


\begin{proof}
This is a consequence of \eqref{TraceDerivativeProduct} and 
Proposition \ref{Prop:Q2Form}, with $B=U^{*}V$.
\end{proof}

\begin{thm}[\bf Overlap and the Uniformity Measures]
Given an $n$-qudit unitary $U$,  and  $k\geq 1$, 
\begin{eqnarray}
\norm{U}_{q^k}
\leq \norm{U}_{Q^k}\;.
\end{eqnarray}
\end{thm}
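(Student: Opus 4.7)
The plan is to iterate a one-step Cauchy--Schwarz bound that upgrades $|\inner{V}{U}|^{2}$ to an averaged inner product of discrete derivatives, and then invoke Proposition~\ref{Prop:CliffordHierarchyAlgebraic} to reduce $k-1$ iterated derivatives of $V\in C^{(k-1)}$ to a scalar multiple of the identity.

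First I would establish the one-step bound: for arbitrary operators $V,U$,
$$|\inner{V}{U}|^{4}\;\leq\;\mathbb{E}_{\vec a}\abs{\inner{\partial_{\vec a}V}{\partial_{\vec a}U}}^{2}.$$
Indeed, \eqref{TraceDerivativeProduct} yields $\inner{\partial_{\vec a}V}{\partial_{\vec a}U}=\frac{1}{d^{n}}\Tr{\partial_{\vec a}(V^{*}U)}$, so Proposition~\ref{Prop:Positivity} and Proposition~\ref{Prop:Q2Form} give $\mathbb{E}_{\vec a}|\frac{1}{d^{n}}\Tr{\partial_{\vec a}(V^{*}U)}|^{2}=\norm{V^{*}U}^{4}_{Q^{2}}=\sum_{\vec b}|\Xi_{V^{*}U}(\vec b)|^{4}\geq|\Xi_{V^{*}U}(\vec 0)|^{4}=|\inner{V}{U}|^{4}$, where the last equality uses $\Xi_{V^{*}U}(\vec 0)=\frac{1}{d^{n}}\Tr{V^{*}U}=\inner{V}{U}$.

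Next I would iterate. Applying this bound with $(V,U)$ replaced by $(\partial_{\vec a_{j},\ldots,\vec a_{1}}V,\partial_{\vec a_{j},\ldots,\vec a_{1}}U)$, and using $(\mathbb{E}|X|)^{2}\leq\mathbb{E}|X|^{2}$ to absorb the square from the previous step, an induction on $j\geq 0$ yields
$$|\inner{V}{U}|^{2^{j+1}}\;\leq\;\mathbb{E}_{\vec a_{1},\ldots,\vec a_{j}}\abs{\inner{\partial_{\vec a_{j},\ldots,\vec a_{1}}V}{\partial_{\vec a_{j},\ldots,\vec a_{1}}U}}^{2}.$$
Specializing to $V\in C^{(k-1)}$ and $j=k-1$, Proposition~\ref{Prop:CliffordHierarchyAlgebraic} identifies $C^{(k-1)}$ with $Q^{(k-1)}$, so $\partial_{\vec a_{k-1},\ldots,\vec a_{1}}V=\alpha I$ for some unit scalar $\alpha=\alpha(\vec a_{1},\ldots,\vec a_{k-1})$. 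The integrand on the right therefore equals $|\frac{1}{d^{n}}\Tr{\partial_{\vec a_{k-1},\ldots,\vec a_{1}}U}|^{2}$, and Proposition~\ref{Prop:Positivity} identifies the expectation with $\norm{U}^{2^{k}}_{Q^{k}}$. Hence $|\inner{V}{U}|^{2}\leq\norm{U}^{2}_{Q^{k}}\leq\norm{U}_{Q^{k}}$, where the last step uses $\norm{U}_{Q^{k}}\leq 1$ for unitary $U$ (each $|\frac{1}{d^{n}}\Tr{\,\cdot\,}|\leq 1$ because commutators of unitaries are unitary). Maximizing over $V\in C^{(k-1)}$ yields the claim; the $k=1$ case corresponds to $j=0$ and $C^{(0)}=\complex I$, so the iteration is vacuous and the same final argument applies.

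The main care is aligning exponents: each Cauchy--Schwarz step doubles the power on the left, and exactly $k-1$ iterations synchronizes $|\inner{V}{U}|^{2^{k}}$ with the $2^{k}$-th-power structure of $\norm{U}_{Q^{k}}^{2^{k}}$. The unitarity bound $\norm{U}_{Q^{k}}\leq 1$ then collapses the residual $\norm{U}_{Q^{k}}^{2}$ to $\norm{U}_{Q^{k}}$, matching the theorem as stated.
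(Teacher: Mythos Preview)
Your proof is correct and follows essentially the same route as the paper: both establish the one-step bound $|\inner{V}{U}|^{4}\leq \mathbb{E}_{\vec a}|\inner{\partial_{\vec a}V}{\partial_{\vec a}U}|^{2}$ via the identity $\mathbb{E}_{\vec a}|\inner{\partial_{\vec a}V}{\partial_{\vec a}U}|^{2}=\sum_{\vec b}|\Xi_{V^{*}U}(\vec b)|^{4}$ (retaining only the $\vec b=\vec 0$ term), then iterate with Cauchy--Schwarz and use that $k-1$ derivatives of $V\in C^{(k-1)}$ collapse to a scalar. Your write-up is in fact slightly more careful than the paper's at the final step, making explicit the passage $\norm{U}_{Q^{k}}^{2}\leq\norm{U}_{Q^{k}}$ via $\norm{U}_{Q^{k}}\leq 1$ for unitaries, which the paper leaves tacit.
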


\begin{proof}
First, let us prove that for any unitaries $U$ and $V$,
\begin{eqnarray}\label{LowerBoundQ2-1}
\mathbb{E}_{\vec a}
|\inner{\partial_{\vec a}U}{\partial_{\vec a} V}|^2
\geq |\inner{U}{V}|^4\;.
\end{eqnarray}
 In fact by Lemma \ref{Lem:FirstQuartic} and 
$
\lra{\partial_{\vec{a}}U, \partial_{\vec{a}}V}
= \frac{1}{d^n} \Tr{\partial_{\vec a}(U^{*}V)}
$,
\begin{align*}
    \mathbb{E}_{\vec a}
|\inner{\partial_{\vec a}U}{\partial_{\vec a} V}|^2 
=& \sum_{\vec a} \abs{\Xi_{U^{*}V}(\vec a)}^{4}
\geq \abs{\Xi_{U^{*}V}(0)}^{4} \\
=& \abs{{\frac{1}{d^n} \Tr{U^*V}}}^4
=\abs{ \lra{U,V}}^4\;,
\end{align*}
as claimed in \eqref{LowerBoundQ2}.

The definition of $\norm{U}_{q^{k+1}}$, ensures that there exists a unitary $V\in C^{(k)}$, such that 
\begin{eqnarray}
\norm{U}_{q^{k+1}}
=|\inner{U}{V}|^2\;.
\end{eqnarray}
Also Theorem \ref{Thm:CliffordHierarchy} shows that $U\in C^{k}$ means that $\partial_{\vec a_k,...,\vec a_{1}}V$ is equal to $I$, up to a phase. 
Thus Proposition \ref{Prop:NormReductions} lets us write 
\begin{align*}
\norm{U}^{2^{k+1}}_{Q^{k+1}}
&= 
\mathbb{E}_{\vec a_1, ..., \vec a_{k}}
|\inner{\partial_{\vec a_k,...,\vec a_{1}}U}{\partial_{\vec a_k,...,\vec a_1} V}|^2\\
&\geq 
\mathbb{E}_{\vec a_1, ..., \vec a_{k-1}}
|\inner{\partial_{\vec a_{k-1},...,\vec a_{1}}U}{\partial_{\vec a_{k-1},...,\vec a_1} V}|^4\\
&\geq 
(\mathbb{E}_{\vec a_1, ..., \vec a_{k-1}}|\inner{\partial_{\vec a_{k-1},...,\vec a_{1}}U}{\partial_{\vec a_{k-1},...,\vec a_1} V}|^2)^2\\
&\geq ...
\geq \lrp{\abs{\inner{U}{V}}^2}^{2^k}.
\end{align*}
In the second inequality, we use the Schwarz inequality. Iterating this procedure establishes the desired bound.
\end{proof}

\section{q-HOFA Gives  a Clifford Hierarchy Test}
We consider an important task in quantum property testing, called Clifford hierarchy testing. 
Given a unitary $U$ and an integer $k$, the goal is to determine whether $U$ belongs to the $k$-Clifford hierarchy $\mathcal{C}^{(k)}$, or if it is $\epsilon$-far from $\mathcal{C}^{(k)}$. This task can be regarded as the quantum counterpart of low-degree testing of polynomials in classical theory.

 Clifford testing and magic entropy have been proposed by the authors using their quantum convolution~\cite{BGJ23c}. 
Let us first consider the $n$-qudit system with $d$ being odd prime.  The Hadamard convolution is given in  Definition \ref{Def:Had_Conv}.
We can generalize the test to any $k$-Clifford hierarchy,  using convolution-swap testing.

\begin{center}
  \begin{tcolorbox}[width=11cm,height=6 cm,title=$k^{\rm th}$  Clifford-Hierarchy Testing for odd prime $d$]
1. Chose $\vec a_1,...,\vec a_{k-2}$ from $V^n$ independently and uniformly and random. \\
  2. Prepare 4 copies of the Choi states $J_{\partial_{\vec a_1,...,\vec a_{k-2}}U}$ for the unitary  $\partial_{\vec a_1,...,\vec a_{k-2}}U$, and apply the convolution 
 for each 2 copies of $J_{\partial_{\vec a_1,...,\vec a_{k-2}}U}$, and get 2 copies of $J_{\partial_{\vec a_1,...,\vec a_{k-2}}U}\boxtimes_H J_{\partial_{\vec a_1,...,\vec a_{k-2}}U}$;\\
3.Perform the swap test for the 2 copies of $J_{\partial_{\vec a_1,...,\vec a_{k-2}}U}\boxtimes_H J_{\partial_{\vec a_1,...,\vec a_{k-2}}U}$.
If the output is $0$, it passes the test; otherwise, it fails.
    \end{tcolorbox}
   \end{center}

\begin{thm}\label{thm:QU_CH}
      The probability of acceptance for the  above  $k$th-Clifford hierarchy testing 
      can be written in terms of the quantum uniformity norm:
\begin{align}
P_{k}[U]=\frac{1}{2}
\left[1+\norm{U}^{2^k}_{Q^{k+1}}
\right].
\end{align}
\end{thm}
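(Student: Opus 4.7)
The plan is to combine three ingredients: the standard swap-test identity for two copies of an identical mixed state; the convolutional representation of $\|\cdot\|_{Q^2}$ from Proposition~\ref{Prop:ConvolutionHad-w}; and the inductive unfolding of the $(k+1)$-st uniformity measure from Proposition~\ref{Prop:InductiveDefn}. Conditioning on the random choice of $\vec a_1,\ldots,\vec a_{k-2}$ and writing $V := \partial_{\vec a_{k-2},\ldots,\vec a_1} U$, the swap test applied to two identical copies of the $2n$-qudit state $\rho := J_V\boxtimes_H J_V$ accepts with conditional probability $\tfrac12\bigl(1+\Tr{\rho^{2}}\bigr)$. Averaging over the $\vec a_i$'s yields
\[
P_k[U] \;=\; \tfrac12\Bigl(1 + \underset{\vec a_1,\ldots,\vec a_{k-2}}{\mbe}\Tr{(J_V\boxtimes_H J_V)^{2}}\Bigr).
\]

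The heart of the argument is to identify $\Tr{(J_V\boxtimes_H J_V)^{2}}$ with a power of $\|V\|_{Q^3}$ for every unitary $V$ on $\mathcal H^{\otimes n}$. I would first apply the $2n$-qudit version of Proposition~\ref{Prop:ConvolutionHad-w} to the pure state $J_V$, rewriting $\|J_V\boxtimes_H J_V\|_2^2$ in terms of $\|\Xi_{J_V}\|_4^4$, and then compute the characteristic function $\Xi_{J_V}(\vec a,\vec b)$ using the Choi-Jamio{\l}kowski construction together with the standard identity $\langle \Phi | A\otimes B | \Phi \rangle = d^{-n}\Tr{A^{T}B}$. This yields a clean formula of the form $\Xi_{J_V}(\vec a,\vec b)\propto \Xi_{V^{*}w(-\vec b)V}(\vec a')$, where $\vec a\mapsto\vec a'$ is the Weyl-transpose bijection arising from $(w(\vec a))^{T}$. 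Summing $|\Xi_{J_V}|^{4}$ first over $\vec a$ and then averaging over $\vec b$ converts the expression into $\mbe_{\vec b}\,\|V^{*}w(\vec b)V\|_{Q^2}^{4}$ by Proposition~\ref{Prop:Q2Form}, and the identity $V^{*}w(\vec b)V = w(\vec b)\,\partial_{-\vec b}V^{*}$ combined with the Weyl-invariance of $\|\cdot\|_{Q^2}$ (Proposition~\ref{lem:quan_inv}) reduces this further to a power of $\|V^{*}\|_{Q^3}$. A final appeal to Lemma~\ref{lem:Q3} with all eight entries set equal to $V$ gives $\|V\|_{Q^3} = \|V^{*}\|_{Q^3}$, completing the identification.

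The proof is then finished by iterating Proposition~\ref{Prop:InductiveDefn} a total of $k-2$ times, which unfolds $\|U\|_{Q^{k+1}}$ down to a nested expectation of the form $\mbe_{\vec a_1,\ldots,\vec a_{k-2}}\|\partial_{\vec a_{k-2},\ldots,\vec a_1}U\|_{Q^3}^{\,2^{3}}$; substituting into the expression for $P_k[U]$ gives the claimed acceptance probability. The main obstacle is the explicit computation in the middle paragraph: tracking the $d$-dependent prefactors that arise from the $2n$-qudit convolution, handling the Weyl-transpose bijection attached to $(w(\vec a))^{T}$, and invoking the symmetry $\|V\|_{Q^3}=\|V^{*}\|_{Q^3}$ at the right moment. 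Each ingredient is a mild consequence of results already proved in the paper, but assembling them with the correct bookkeeping of $d$-powers and exponents is the most delicate part.
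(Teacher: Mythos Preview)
Your outline is correct and matches the paper's approach: swap-test identity, then the convolution--Fourier link, then inductive unfolding. The paper's proof invokes Proposition~\ref{prop:con_unif} in a single line, but that proposition concerns $B\boxtimes_H B$ rather than $J_V\boxtimes_H J_V$, so it leaves implicit exactly the Choi-state-to-unitary bridge that you spell out; your route through $\Xi_{J_V}$, the reduction $V^{*}w(\vec b)V=w(\vec b)\,\partial_{-\vec b}V^{*}$, Weyl invariance of $\|\cdot\|_{Q^2}$, and the adjoint symmetry $\|V\|_{Q^3}=\|V^{*}\|_{Q^3}$ from Lemma~\ref{lem:Q3} is the natural way to fill that gap. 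One caveat on bookkeeping: when you track the $d$-powers (Lemma~\ref{Prop:ConvolutionHad-w} on $2n$ qudits plus the $d^{-3n}$ prefactor in $\Xi_{J_V}$) you will find $\Tr{(J_V\boxtimes_H J_V)^{2}}=\|V\|_{Q^3}^{8}$, so after the $(k-2)$ applications of Proposition~\ref{Prop:InductiveDefn} you arrive at $\|U\|_{Q^{k+1}}^{2^{k+1}}$ rather than $\|U\|_{Q^{k+1}}^{2^{k}}$ as printed in the theorem; this looks like a misprint in the stated exponent, not a defect in your argument.
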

\begin{proof}
Based on the protocol for the $k$th-Clifford hierarchy testing, the probability can be written as

\begin{align*}
     &P_{k}[U]\\
     =&\frac{1}{2}\left[1+\underset{
     \vec a_1,...,\vec a_{k-2}}{\mathbb{E}}
\Tr{(
J_{\partial_{\vec a_1,...,\vec a_{k-2}}U}\boxtimes_HJ_{\partial_{\vec a_1,...,\vec a_{k-2}}U}
)^2
}\right].
\end{align*}

Then using Proposition \ref{prop:con_unif}, we obtain the result.
\end{proof}

 We can use other quantum convolutions, in addition to  the Hadamard convolution, to implement the Clifford hierarchy testing.
 In the $n$-qubit case, we need to change the definition of quantum convolution.
 
  \begin{Def} 
  The convolution of three $n$-qubit states $\rho_1, \rho_2, \rho_3$
is 
\begin{eqnarray}
\boxtimes_3(\rho_1, \rho_2, \rho_3)
=\Ptr{1^c}{V\rho_1\ot \rho_2\ot \rho_3 V^*},
\end{eqnarray}
where $V=U^{\ot n}=U^{(1,n+1, 2n+1)}\ot U^{(2, n+2, 2n+2)}\ot....\ot U^{(n,2n, 3n)}$, and 
$U$ is a $K$-qubit unitary constructed using CNOT gates:
\begin{eqnarray}
U:=\left(\prod^3_{j=1}CNOT_{j\to 1}\right)\left(\prod^3_{i=1}CNOT_{1\to i}\right),
\end{eqnarray}
and 
$
CNOT_{2\to 1}\ket{x}\ket{y}=\ket{x+y}\ket{y}
$ for any $x,y\in\mathbb{F}_2$. 
  \end{Def}
  
This gives rise to a Clifford hierarchy test for qubits.   The probability of acceptance for the above  $k$th-Clifford hierarchy testing is 
  \begin{align}
      P_{k}[U]=\frac{1}{2}\left[1+\mathbb{E}_{\vec a_1,...,\vec a_{k-2}\in V^n}
\Tr{(
\boxtimes_3J_{\partial_{\vec a_1,...,\vec a_{k-2}U}}
)^2
}\right].
  \end{align}
  
  \begin{cor}
  The success probability of the $k$-th level of Clifford hierarchy testing   is lower bounded by the maximal overlap with the $k$-th Clifford hierarchy,
  \begin{eqnarray}
      P_{k}(U)
\geq \frac{1}{2}\left[1+\norm{U}^{2^{k+1}}_{q^{k+1}}\right].
  \end{eqnarray}
\end{cor}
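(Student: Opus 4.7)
The plan is to combine the exact formula from Theorem~\ref{thm:QU_CH} for the acceptance probability with the Direct Inequality (Theorem~\ref{Thm:DirectBound}) that relates the hierarchy-overlap measure $\norm{\cdot}_{q^{k+1}}$ to the quantum uniformity measure $\norm{\cdot}_{Q^{k+1}}$. Since Theorem~\ref{thm:QU_CH} already gives
$$P_{k}(U) = \frac{1}{2}\bigl[\,1 + \norm{U}^{2^{k}}_{Q^{k+1}}\bigr],$$
essentially all that remains is to replace $\norm{U}_{Q^{k+1}}$ by the smaller quantity $\norm{U}_{q^{k+1}}$ and adjust the exponent.

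First I would invoke Theorem~\ref{Thm:DirectBound} with index $k+1$ in place of $k$, which yields $\norm{U}_{q^{k+1}} \le \norm{U}_{Q^{k+1}}$. Raising both sides to the power $2^{k}$ (which is legitimate because both quantities are non-negative) gives
$$\norm{U}^{2^{k}}_{q^{k+1}} \;\le\; \norm{U}^{2^{k}}_{Q^{k+1}}.$$
Plugging this into the formula from Theorem~\ref{thm:QU_CH} already produces the bound
$$P_{k}(U) \;\ge\; \frac{1}{2}\bigl[\,1 + \norm{U}^{2^{k}}_{q^{k+1}}\bigr].$$

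To get the stated exponent $2^{k+1}$, I would then use that $\norm{U}_{q^{k+1}} \le 1$ for any unitary $U$: indeed, for any unitary $V \in C^{(k)}$ the Cauchy-Schwarz inequality in the inner product \eqref{InnerProduct-n} gives $|\lra{V,U}|^{2} \le \lra{V,V}\lra{U,U} = 1$, so the supremum over $V \in C^{(k)}$ satisfies $\norm{U}_{q^{k+1}} \le 1$ as well. Consequently $\norm{U}^{2^{k+1}}_{q^{k+1}} \le \norm{U}^{2^{k}}_{q^{k+1}}$, and chaining this with the previous inequality gives the advertised bound
$$P_{k}(U) \;\ge\; \frac{1}{2}\bigl[\,1 + \norm{U}^{2^{k+1}}_{q^{k+1}}\bigr].$$

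There is no real obstacle here; the result is an immediate consequence of the exact testing identity and the already-established comparison between $\norm{\cdot}_{q^{k+1}}$ and $\norm{\cdot}_{Q^{k+1}}$. The only subtle point worth flagging in the write-up is that the exponent jump from $2^k$ to $2^{k+1}$ is harmless precisely because $\norm{U}_{q^{k+1}} \in [0,1]$ for unitary $U$; if one preferred the tighter bound with exponent $2^k$, that also holds and is in fact what the proof most directly delivers.
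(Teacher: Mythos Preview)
Your proof is correct and is precisely the argument the paper intends: the corollary is stated without proof as an immediate consequence of Theorem~\ref{thm:QU_CH} combined with the Direct Inequality (Theorem~\ref{Thm:DirectBound}), and your write-up carries this out cleanly, including the observation that $\norm{U}_{q^{k+1}}\le 1$ justifies passing from exponent $2^{k}$ to $2^{k+1}$.
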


\section{Outlook}
In this work, we propose a framework of \textit{quantum} higher-order Fourier analysis and show its application in quantum computation. There are still many interesting open problems.  The inverse quantum uniformity norm conjecture  can be regarded as a quantum generalization of  the higher-order  Goldreich-Levin algorithm. This conjecture is:  if  $\norm{U}_{Q^3}\geq c$, then there exists a Clifford unitary $V$ such that the overlap between $U$ and $V$ is bounded below by some constant, independent of the number of qudits. If this conjecture is true, it is natural to ask: can one find an algorithm that makes polynomially-many queries of a given unitary $U$ and produces a decomposition of $U$ as a sum of Clifford unitaries and a small error term? 
In general, if $\norm{U}_{Q^{k+1}}\geq c$, 
does there exist a 
unitary $V$ in $k$th-level of Clifford hierarchy, such that the overlap between $U$ and $V$ is bounded below  by some constant, which 
is independent of the number of qudits?

Quantum teleportation is a crucial element that allows universal fault-tolerant quantum computation through stabilizer codes~\cite{GCh99}. An important problem is to find the depth of teleportation to implement quantum gates in the Clifford hierarchy, where the depth of teleportation
is a measure of the complexity of the gate to characterize the number of teleportation steps to implement a given gate in fault-tolerant quantum computation~\cite{ZengPRA08}. Is there a connection between the quantum uniformity measures and teleportation depth?

\section*{Acknowledgement}
We thank Chi-Ning Chou, Michael Freedman, Roy Garcia,  Cassandra Marcussen, Graeme Smith, Madhu Sudan for helpful discussion. We thank Arkopal Dutt for commenting on an  earlier draft and for bringing to our attention work by him and his collaborators using  classical uniformity norms. This
work was supported in part by JobsOhio GR138220,  ARO Grant W911NF-19-1-0302, ARO MURI Grant W911NF-20-1-0082, and NSF Eager
Grant 2037687.

\bibliographystyle{plain}
\bibliography{reference}

\newcommand{\noopsort}[1]{}
\begin{thebibliography}{10}

\bibitem{Araki90}
Huzihiro Araki.
\newblock On an inequality of {L}ieb and {T}hirring.
\newblock {\em Letters in Mathematical Physics}, 19(2):167--170, 1990.

\bibitem{arun2024poly}
Srinivasan Arunachalam and Arkopal Dutt.
\newblock Polynomial-time tolerant testing stabilizer states.
\newblock {\em arxiv:2408.06289}, 2024.

\bibitem{BeverlandQST20}
Michael Beverland, Earl Campbell, Mark Howard, and Vadym Kliuchnikov.
\newblock Lower bounds on the non-{Clifford} resources for quantum
  computations.
\newblock {\em Quantum Sci. Technol.}, 5(3):035009, May 2020.

\bibitem{bravyi2019simulation}
Sergey Bravyi, Dan Browne, Padraic Calpin, Earl Campbell, David Gosset, and
  Mark Howard.
\newblock Simulation of quantum circuits by low-rank stabilizer decompositions.
\newblock {\em {Quantum}}, 3:181, September 2019.

\bibitem{BravyiPRL16}
Sergey Bravyi and David Gosset.
\newblock Improved classical simulation of quantum circuits dominated by
  {Clifford} gates.
\newblock {\em Phys. Rev. Lett.}, 116:250501, Jun 2016.

\bibitem{BravyiPRX16}
Sergey Bravyi, Graeme Smith, and John~A. Smolin.
\newblock Trading classical and quantum computational resources.
\newblock {\em Phys. Rev. X}, 6:021043, Jun 2016.

\bibitem{Bucomplexity22}
Kaifeng Bu, Roy~J Garcia, Arthur Jaffe, Dax~Enshan Koh, and Lu~Li.
\newblock Complexity of quantum circuits via sensitivity, magic, and coherence.
\newblock {\em Communications in Mathematical Physics}, 405(7):161, 2024.

\bibitem{BGJ23a}
Kaifeng Bu, Weichen Gu, and Arthur Jaffe.
\newblock Quantum entropy and central limit theorem.
\newblock {\em Proceedings of the National Academy of Sciences},
  120(25):e2304589120, 2023.

\bibitem{BGJ23b}
Kaifeng Bu, Weichen Gu, and Arthur Jaffe.
\newblock Discrete quantum {G}aussians and central limit theorem.
\newblock {\em arXiv:2302.08423}, 2023\noopsort{b}.

\bibitem{BGJ2025a}
Kaifeng Bu, Weichen Gu, and Arthur Jaffe.
\newblock Quantum {R}uzsa divergence to quantify magic.
\newblock {\em IEEE Transactions on Information Theory}, 71(4):2726--2740,
  2025.

\bibitem{BGJ23c}
Kaifeng Bu, Weichen Gu, and Arthur Jaffe.
\newblock Stabilizer testing and magic entropy via quantum fourier analysis.
\newblock {\em Commun. Math. Phys.}, 406(10):236, 2025.

\bibitem{BJ2025a}
Kaifeng Bu and Arthur Jaffe.
\newblock Magic resource can enhance the quantum capacity of channels.
\newblock {\em Phys. Rev. Lett.}, 134:050202, Feb 2025.

\bibitem{Bu19}
Kaifeng Bu and Dax~Enshan Koh.
\newblock Efficient classical simulation of {C}lifford circuits with
  nonstabilizer input states.
\newblock {\em Phys. Rev. Lett.}, 123:170502, Oct 2019.

\bibitem{bu2022classical}
Kaifeng Bu and Dax~Enshan Koh.
\newblock Classical simulation of quantum circuits by half {G}auss sums.
\newblock {\em Commun. Math. Phys.}, 390:471--500, Mar 2022.

\bibitem{BuPRA19_stat}
Kaifeng Bu, Dax~Enshan Koh, Lu~Li, Qingxian Luo, and Yaobo Zhang.
\newblock Statistical complexity of quantum circuits.
\newblock {\em Phys. Rev. A}, 105:062431, Jun 2022.

\bibitem{bu2023effects}
Kaifeng Bu, Dax~Enshan Koh, Lu~Li, Qingxian Luo, and Yaobo Zhang.
\newblock Effects of quantum resources and noise on the statistical complexity
  of quantum circuits.
\newblock {\em Quantum Science and Technology}, 8(2):025013, 2023.

\bibitem{Choi75}
Man-Duen Choi.
\newblock Completely positive linear maps on complex matrices.
\newblock {\em Linear Algebra and its Application}, 10(3):285--290, 1975.

\bibitem{TT12}
Tanja Eisner and Terence Tao.
\newblock Large values of the {G}owers-{H}ost-{K}ra seminorms.
\newblock {\em JAMA}, 117(1):133--186, Aug 2012.

\bibitem{FLLW25}
Zixuan Feng, Zhengwei Liu, Fan Lu, and Ningfeng Wang.
\newblock Quon classical simulation: Unifying {C}lifford, matchgates and
  entanglement.
\newblock {\em arXiv:2505.07804}, 2025\noopsort{c}.

\bibitem{Host2017higher}
Nikos Frantzikinakis and Bernard Host.
\newblock Higher order {F}ourier analysis of multiplicative functions and
  applications.
\newblock {\em Journal of the American Mathematical Society}, 30(1):67--157,
  2017.

\bibitem{Gottesman97}
D.~Gottesman.
\newblock Stabilizer codes and quantum error correction.
\newblock {\em arXiv:quant-ph/9705052}, 1997.

\bibitem{gottesman1998heisenberg}
D~Gottesman.
\newblock The {H}eisenberg representation of quantum computers.
\newblock In {\em Proc. XXII International Colloquium on Group Theoretical
  Methods in Physics, 1998}, pages 32--43, 1998.

\bibitem{Gottesman96}
Daniel Gottesman.
\newblock Class of quantum error-correcting codes saturating the quantum
  {H}amming bound.
\newblock {\em Phys. Rev. A}, 54:1862--1868, Sep 1996.

\bibitem{GCh99}
Daniel Gottesman and Isaac~L. Chuang.
\newblock Demonstrating the viability of universal quantum computation using
  teleportation and single-qubit operations.
\newblock {\em Nature}, 402(6760):390--393, 1999.

\bibitem{Gowers98}
W.~T. Gowers.
\newblock A new proof of {S}zemeredi’s theorem for arithmetic progressions of
  length four.
\newblock {\em Geometric \& Functional Analysis}, 8(3):529--551, Aug 1998.

\bibitem{Gowers01}
W.~T. Gowers.
\newblock A new proof of {S}zemerédi's theorem.
\newblock {\em Geometric \& Functional Analysis}, 11(3):465--588, Aug 2001.

\bibitem{green-tao08}
Ben Green and Terence Tao.
\newblock The primes contain arbitrarily long arithmetic progressions.
\newblock {\em Annals of mathematics}, pages 481--547, 2008.

\bibitem{GTZAnnal12}
Ben Green, Terence Tao, and Tamar Ziegler.
\newblock An inverse theorem for the {G}owers ${U}^{s+1}[{N}]$-norm.
\newblock {\em Annals of Mathematics}, 176(2):1231--1372, 2012.

\bibitem{Gross06}
D.~Gross.
\newblock Hudson's theorem for finite-dimensional quantum systems.
\newblock {\em J. Math. Phys.}, 47(12):122107, 2006.

\bibitem{host2005nonconventional}
Bernard Host and Bryna Kra.
\newblock Nonconventional ergodic averages and nilmanifolds.
\newblock {\em Annals of Mathematics}, pages 397--488, 2005.

\bibitem{JaffePNAS20}
Arthur Jaffe, Chunlan Jiang, Zhengwei Liu, Yunxiang Ren, and Jinsong Wu.
\newblock Quantum {F}ourier analysis.
\newblock {\em Proceedings of the National Academy of Sciences},
  117(20):10715--10720, 2020.

\bibitem{Jamio72}
Andrzej Jamiołkowski.
\newblock Linear transformations which preserve trace and positive
  semidefiniteness of operators.
\newblock {\em Rep. Math. Phys.}, 3(4):275–278, 1972.

\bibitem{Kitaev_toric}
Alexei Kitaev.
\newblock Fault-tolerant quantum computation by anyons.
\newblock {\em Annals of Physics}, 303(1):2--30, Jun 2003.

\bibitem{koh2015further}
Dax~Enshan Koh.
\newblock Further extensions of {C}lifford circuits and their classical
  simulation complexities.
\newblock {\em Quantum Information \& Computation}, 17(3\&4):0262--0282, 2017.

\bibitem{labib2022stabilizer}
Farrokh Labib.
\newblock Stabilizer rank and higher-order fourier analysis.
\newblock {\em Quantum}, 6:645, 2022.

\bibitem{LiebThir76}
E.~Lieb and W.~Thirring.
\newblock Inequalities for the moments of the eigenvalues of the {S}chrödinger
  {H}amiltonian and their relation to {S}obolev inequalities.
\newblock {\em in Studies in Mathematical Physics}, pages 139--148, 1976.

\bibitem{lyu2024fermionicgaussiantestingnongaussian}
Xingjian Lyu and Kaifeng Bu.
\newblock Fermionic gaussian testing and non-gaussian measures via convolution,
  2024.

\bibitem{Taoannal23}
Kaisa Matomäki, Maksym Radziwiłł, Terence Tao, Joni Teräväinen, and Tamar
  Ziegler.
\newblock Higher uniformity of bounded multiplicative functions in short
  intervals on average.
\newblock {\em Annals of Mathematics}, pages 739--857, 2023.

\bibitem{mehr2024}
Saeed Mehraban and Mehrdad Tahmasbi.
\newblock Improved bounds for testing low stabilizer complexity states, 2024.

\bibitem{montanaro2010quantum}
Ashley Montanaro and Tobias~J. Osborne.
\newblock Quantum {B}oolean functions.
\newblock {\em Chicago Journal of Theoretical Computer Science}, 2010(1),
  January 2010.

\bibitem{Yao21}
Minglong Qin and Penghui Yao.
\newblock Nonlocal games with noisy maximally entangled states are decidable.
\newblock {\em SIAM Journal on Computing}, 50(6):1800--1891, 2021.

\bibitem{SeddonPRXQ21}
James~R. Seddon, Bartosz Regula, Hakop Pashayan, Yingkai Ouyang, and Earl~T.
  Campbell.
\newblock Quantifying quantum speedups: Improved classical simulation from
  tighter magic monotones.
\newblock {\em PRX Quantum}, 2:010345, Mar 2021.

\bibitem{ShorPRA95}
Peter~W. Shor.
\newblock Scheme for reducing decoherence in quantum computer memory.
\newblock {\em Phys. Rev. A}, 52:R2493--R2496, Oct 1995.

\bibitem{tao2012higher}
Terence Tao.
\newblock {\em Higher order {F}ourier analysis}, volume 142.
\newblock American Mathematical Soc., 2012.

\bibitem{tao2006additive}
Terence Tao and Van~H Vu.
\newblock {\em Additive combinatorics}, volume 105.
\newblock Cambridge University Press, 2006.

\bibitem{tao2010inverse}
Terence Tao and Tamar Ziegler.
\newblock The inverse conjecture for the {G}owers norm over finite fields via
  the correspondence principle.
\newblock {\em Analysis \& PDE}, 3(1):1--20, 2010.

\bibitem{tao2010inverse_low}
Terence Tao and Tamar Ziegler.
\newblock The inverse conjecture for the {G}owers norm over finite fields in
  low characteristic.
\newblock {\em Annals of Combinatorics}, 16(1):121--188, 2012.

\bibitem{Yao2019doubly}
Penghui Yao.
\newblock A doubly exponential upper bound on noisy epr states for binary
  games, 2019.

\bibitem{ZengPRA08}
Bei Zeng, Xie Chen, and Isaac~L. Chuang.
\newblock Semi-{C}lifford operations, structure of \textit{C}$_{k}$ hierarchy,
  and gate complexity for fault-tolerant quantum computation.
\newblock {\em Phys. Rev. A}, 77:042313, Apr 2008.

\bibitem{ziegler2007universal}
Tamar Ziegler.
\newblock Universal characteristic factors and {F}urstenberg averages.
\newblock {\em Journal of the American Mathematical Society}, 20(1):53--97,
  2007.

\end{thebibliography}
\bigskip\bigskip
%
 
\begin{appendix}
\begin{center}
APPENDIX
\end{center}

Here we derive some assertions in the main text that require extra space. The propositions are numbered here with reference to the corresponding statements in the main text.

\section{Generalized Quantum  Schwarz Inequality}
\setcounter{thm}{40}
\begin{prop}\label{AppendixProof41}
For $k=3$, 
\begin{eqnarray}
|\langle \set{B_{\vec u}}_{\vec u\in \set{0,1}^{k}}\rangle_{Q^{k}}|
\leq \Pi_{\vec u\in \set{0,1}^k}\norm{B_{\vec u}}_{Q^k}\;.
\end{eqnarray}
In particular, 
\begin{align*}
   &\abs{ \langle B_{000}, B_{001}, B_{010}, B_{011}, B_{100}, B_{101}, B_{110}, B_{111}\rangle_{Q^3}  }  \\
  & \quad\quad \leq\norm{B_{000}}_{Q^{3}} \, \norm{B_{001}}\, \norm{B_{010}}_{Q^{3}} \, \norm{B_{011}}_{Q^{3}} \, \norm{B_{100}}_{Q^{3}} \, \norm{B_{101}}_{Q^{3}} \, \norm{B_{110}}_{Q^{3}} \, \norm{B_{111}}_{Q^{3}}\;.
\end{align*}    
\end{prop}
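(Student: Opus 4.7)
The plan is to adapt the classical Gowers--Cauchy--Schwarz inequality to the quantum setting by applying the pre-inner product Cauchy--Schwarz bound \eqref{SchwarzInequality-Qk} three times, once for each coordinate of $\set{0,1}^3$. The permutation identities of Lemma \ref{lem:Q3} supply the coordinate symmetries that let us play each of the three indices in turn as the ``outermost'' splitting index.

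Step 1 (Cauchy--Schwarz on the first coordinate). Write the $Q^3$ functional as a $Q^2$ pairing in the first index:
\begin{align*}
\lra{\set{B_{\vec u}}_{\vec u\in\set{0,1}^3}}_{Q^3}
=\lra{\set{B_{0\vec u'}}_{\vec u'\in\set{0,1}^2},\,\set{B_{1\vec u'}}_{\vec u'\in\set{0,1}^2}}_{Q^2}\;.
\end{align*}
Applying \eqref{SchwarzInequality-Qk} and then the identification \eqref{InnerProduct-2},
\begin{align*}
\abs{\lra{\set{B_{\vec u}}}_{Q^3}}^{2}
\leq
\lra{\set{C^{(0)}_{\vec u}}}_{Q^3}\lra{\set{C^{(1)}_{\vec u}}}_{Q^3}\;,
\quad\text{where } C^{(j)}_{(u_1,u_2,u_3)}:=B_{j,u_2,u_3}\;.
\end{align*}
So each factor is a $Q^3$ functional on an $8$-tuple whose entries are independent of $u_1$.

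Step 2 (Cauchy--Schwarz on the second coordinate). For each $j\in\set{0,1}$, I would use the permutation identities of Lemma \ref{lem:Q3} to rewrite $\lra{\set{C^{(j)}_{\vec u}}}_{Q^3}$ as a $Q^2$ pairing along the $u_2$ index, obtaining (after possibly transposing to $C^{(j)*}$, which is harmless since $\norm{\cdot}_{Q^{k}}$ is adjoint-invariant)
\begin{align*}
\lra{\set{C^{(j)}_{\vec u}}}_{Q^3}
=
\lra{\set{\tilde C^{(j)}_{u_1,0,u_3}}_{u_1,u_3},\,\set{\tilde C^{(j)}_{u_1,1,u_3}}_{u_1,u_3}}_{Q^2}\;.
\end{align*}
Applying \eqref{SchwarzInequality-Qk} once more to each factor yields
\begin{align*}
\lra{\set{C^{(j)}_{\vec u}}}_{Q^3}
\leq
\lra{\set{D^{(j,0)}_{\vec u}}}_{Q^3}^{1/2}\lra{\set{D^{(j,1)}_{\vec u}}}_{Q^3}^{1/2}\;,
\end{align*}
where $D^{(j,i)}_{(u_1,u_2,u_3)}=B_{j,i,u_3}$ is independent of $u_1$ and $u_2$.

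Step 3 (Cauchy--Schwarz on the third coordinate and conclusion). Repeat the permutation-plus-Schwarz move one final time, now freezing $u_3$. The result is a bound of the form
\begin{align*}
\abs{\lra{\set{B_{\vec u}}}_{Q^3}}^{8}
\leq
\prod_{(j,i,k)\in\set{0,1}^3}\lra{\set{E^{(j,i,k)}_{\vec u}}}_{Q^3}\;,
\end{align*}
where each $E^{(j,i,k)}_{\vec u}\equiv B_{j,i,k}$ is constant in $\vec u$, so that $\lra{\set{E^{(j,i,k)}_{\vec u}}}_{Q^3}=\norm{B_{j,i,k}}_{Q^3}^{2^3}$. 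Taking $8$-th roots gives the claim.

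The main obstacle is Step 2 (and its analogue in Step 3): one must verify that the permutation identities of Lemma \ref{lem:Q3} really produce an expression that factors as a $\lra{\cdot,\cdot}_{Q^2}$ pairing with the second (resp.\ third) coordinate playing the splitting role, tracking adjoints and Weyl phases carefully. Adjoints that appear in the rearranged cubes are benign for the final bound because the $Q^3$ measure satisfies $\norm{B}_{Q^3}=\norm{B^*}_{Q^3}$, which follows from $\overline{\Tr{\partial_{\vec a_2,\vec a_1}B^*}}=\Tr{\partial_{\vec a_1,\vec a_2}B}$ (a direct cyclic/conjugation computation) together with the symmetry of the expectation over $\vec a_1,\vec a_2$.
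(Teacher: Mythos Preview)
Your proposal is correct and follows essentially the same route as the paper: iterated applications of the Schwarz inequality \eqref{SchwarzInequality-Qk}, interleaved with the permutation identities of Lemma~\ref{lem:Q3} to bring each of the three coordinates in turn into the outermost ``splitting'' position. The paper carries this out concretely as Schwarz $\to$ second identity of Lemma~\ref{lem:Q3} $\to$ Schwarz $\to$ first identity $\to$ second identity $\to$ Schwarz, which matches your Steps~1--3. One point in your favor: you explicitly note and justify the adjoint invariance $\norm{B^*}_{Q^3}=\norm{B}_{Q^3}$ (via $\overline{\Tr{\partial_{\vec a_2,\vec a_1}B^*}}=\Tr{\partial_{\vec a_1,\vec a_2}B}$), which the paper's appendix proof uses silently at the final step when it writes $\abs{\lra{B^*,\ldots,B^*}_{Q^3}}^{1/8}=\norm{B}_{Q^3}$.
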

\begin{proof}
Start by using the Schwarz inequality of Proposition 36 to give 
\begin{align}\label{FirstBoundFor8}
&\abs{ \langle B_{000}, B_{001}, B_{010}, B_{011}, B_{100}, B_{101}, B_{110}, B_{111}\rangle_{Q^3}  } \\
\nonumber\leq& \abs{\langle{B_{000}, B_{001}, B_{010}, B_{011}, B_{000}, B_{001}, B_{010}, B_{011}}\rangle_{Q^3}}^{1/2}\,\\
\nonumber&\times\abs{\langle{B_{100}, B_{101}, B_{110}, B_{111}, B_{100}, B_{101}, B_{110}, B_{111}}\rangle_{Q^3}}^{1/2}.
\end{align}
Now one can apply the second identity in Lemma 39, followed by the Schwarz inequality of Proposition 36 which  shows that the  first term in \eqref{FirstBoundFor8} satisfies 
\begin{align}\label{SecondBoundFor8}
    &\abs{\langle{B_{000}, B_{001}, B_{010}, B_{011}, B_{000}, B_{001}, B_{010}, B_{011}}\rangle_{Q^3}}^{1/2}\nonumber\\
 &\quad\quad = \abs{\langle{B_{011}, B_{010}, B_{011}, B_{010}, B_{001}, B_{000}, B_{001}, B_{000}}\rangle_{Q^3}}^{1/2} \nonumber\\
&\leq \abs{\langle{B_{011}, B_{010}, B_{011}, B_{010}, B_{011}, B_{010}, B_{011}, B_{010}}\rangle_{Q^3}}^{1/4}\nonumber\\
&\quad\quad \times
 \abs{\langle{ B_{001}, B_{000}, B_{001}, B_{000}, B_{001}, B_{000}, B_{001}, B_{000}}\rangle_{Q^3}}^{1/4}\;.
\end{align}
One can continue this process. For the first term on the right of \eqref{SecondBoundFor8},  
\begin{align}\label{FourthBoundFor8}
   &\abs{\langle{B_{011}, B_{010}, B_{011}, B_{010}, B_{011}, B_{010}, B_{011}, B_{010}}\rangle_{Q^3}}^{1/4}\nonumber\\
&\quad\quad =\abs{\langle{B^*_{011}, B^*_{011}, B^*_{010}, B^*_{010}, B^*_{011}, B^*_{011}, B^*_{010}, B^*_{010}}\rangle_{Q^3}}^{1/4}\nonumber\\
& \quad\quad =\abs{\langle{B^*_{011}, B^*_{011}, B^*_{011}, B^*_{011}, B^*_{010}, B^*_{010},B^*_{010}, B^*_{010}}\rangle_{Q^3}}^{1/4}\nonumber\\
& \quad\quad\leq \abs{\langle{B^*_{011}, B^*_{011}, B^*_{011}, B^*_{011}, B^*_{011}, B^*_{011}, B^*_{011}, B^*_{011}}\rangle_{Q^3}}^{1/8}\nonumber\\
&\quad\quad\quad\quad\times
\abs{\langle{B^*_{010}, B^*_{010}, B^*_{010}, B^*_{010}, B^*_{010}, B^*_{010}, B^*_{010}, B^*_{010}}\rangle_{Q^3}}^{1/8}\nonumber\\
&\quad\quad=\norm{B_{011}}_{Q^3}\norm{B_{010}}_{Q^3}\;.
\end{align}
Here the first equality comes from the first identity in Lemma 39, while the second equality comes from the second identity. One obtains the final inequality by using the quantum Schwarz inequality in Proposition 36.

The second term on the right of \eqref{SecondBoundFor8} satisfies a similar bound, so that 
\begin{align}\label{FifthBoundFor8}
    \abs{\langle{ B_{001}, B_{000}, B_{001}, B_{000}, B_{001}, B_{000}, B_{001}, B_{000}}\rangle_{Q^3}}^{1/4}
  \leq\norm{B_{000}}_{Q^3}\norm{B_{001}}_{Q^3}\;.
\end{align}
Inserting \eqref{FourthBoundFor8} and \eqref{FifthBoundFor8} into \eqref{SecondBoundFor8} gives 
\begin{align}\label{ThirdBoundFor8}
&\abs{\langle{B_{000}, B_{001}, B_{010}, B_{011}, B_{000}, B_{001}, B_{010}, B_{011}}\rangle_{Q^3}}^{1/2} \nonumber\\ 
&\quad\quad\leq
\norm{B_{000}}_{Q^3}\norm{B_{001}}_{Q^3}
\norm{B_{010}}_{Q^3}\norm{B_{011}}_{Q^3}\;.
\end{align}

The last term in \eqref{FirstBoundFor8} is similar to the term we just bounded, except it is a bound with $B_{1\vec{u}_2}$ replacing $B_{0\vec{u}_2}$. Making this substitution, one has 
\begin{align}\label{SeventhBoundFor8}
&\abs{\langle{B_{100}, B_{101}, B_{110}, B_{111}, B_{100}, B_{101}, B_{110}, B_{111}}\rangle_{Q^3}}^{1/2}\nonumber\\
&\quad\quad\leq
\norm{B_{100}}_{Q^3}\norm{B_{101}}_{Q^3}
\norm{B_{110}}_{Q^3}\norm{B_{111}}_{Q^3}\;.
\end{align}
Inserting the bounds \eqref{ThirdBoundFor8} and \eqref{SeventhBoundFor8} into \eqref{FirstBoundFor8} yields
 \begin{align*}\label{SixthBoundFor8}
&\abs{ \langle B_{000}, B_{001}, B_{010}, B_{011}, B_{100}, B_{101}, B_{110}, B_{111}\rangle_{Q^3}  }\\
&\quad\quad\leq
     \langle{\set{B_{\vec u}}_{\vec u\in \set{0,1}^3}}\rangle_{Q^3}
     \leq \Pi_{\vec u\in \set{0,1}^3}\norm{B_{\vec u}}_{Q^3}\;,
 \end{align*}
and completes the proof.
\end{proof}

\setcounter{thm}{44}

\begin{prop}\label{AppendixProof45}
Given two operators $A,B\in L(\mathcal{H}^{\ot n})$ and an integer $k\ge 1$,
one has for Schatten norms $\norm{\ \cdot\ }_r$ that
\begin{align}
\left( \mathbb{E}_{\vec a\in V^n} \| A(\vec a) B\|_r^s\right)^{1/s} \le \|A\|_p \|B\|_q,
\end{align}
where
\[ p = q =  \frac {2^k}{k + 1}, \quad
r = \frac {2^{k-1}}k, \quad
s = 2^{k-1}.\]
\end{prop}
\begin{proof}
We use  H\"{o}lder inequality for  $A_1,A_2,A_3$, namely 
\begin{align}\label{eq:holder_op}
 \|A_1A_2A_3\|_r \le \|A_1 \|_{p_1} \|A_2\|_{p_2}\|A_3\|_{p_3}\;,
\end{align}
where $ \frac1r = \frac1{p_1}+\frac1{p_2}+\frac1{p_3} $. 
We take 
$p_1=\frac{sp}{s-p}$, $p_2=s$, and $p_3=\frac{sq}{s-q}$.
We use the identity 
\[
\abs{A(\vec a)B}^{r}
=\lrp{\abs{A(\vec a)B}^{2}}^{r/2}
= \lrp{B^*\lrp{A^*A}(\vec a)B}^{r/2}\;.
\]
Suppose $B=\abs{B}V$ for unitary $V$. Then, for any $0<r$, we have
$$\Tr{\abs{A(\vec a)B}^{r}} = \Tr{\lrp{\abs{B}\abs{A}^{2}(\vec a)\abs{B}}^{r/2}}=\Tr{\lrp{\abs{A}(\vec a)\abs{B}}^{r}}.$$
Hence,
\begin{align*}
&\left(\frac1{d^n} \Tr{ \abs{A(\vec a) B}^r }\right)^{\frac{1}{r}}\\
=& \left(\frac1{d^n} \Tr{ \lrp{\abs{A}(\vec a) \abs{B}}^r }\right)^{\frac{1}{r}}
= \left(\frac1{d^n} \Tr{\lrp{\abs{A}^{\frac{s-p}{s}}(\vec a) \abs{A}^{\frac{p}{s}}(\vec a)\abs{B}^{\frac{q}{s}}   \abs{B}^{\frac{s-q}{s}}}^r }\right)^{\frac{1}{r}}\\
\le&  \left(\frac1{d^n} \Tr{(|A|^{\frac{s-p}{s}})^{\frac{sp}{s-p}} }\right)^{\frac{s-p}{sp} } 
\left(\frac1{d^n} \Tr{\big||A|^{\frac{p}{s}} (\vec a) |B|^{\frac{q}{s}}\big|^{s} }\right)^{\frac{1}{s} }
\left(\frac1{d^n} \Tr{(|B|^{\frac{s-q}{s}})^{\frac{sq}{s-q}} }\right)^{\frac{s-q}{sq} }\;.
\end{align*}
In the last inequality we use \eqref{eq:holder_op}.
By rewriting it, we have
\begin{align}\label{230810eq2}
\| A (\vec a) B\|_r^s \le \|A\|_p^{s-p} \|B\|_q^{s-q} \frac1{d^n} \Tr{\big|w(\vec a) |A|^{\frac{p}{s}} w(\vec a)^* |B|^{\frac{q}{s}}\big|^{s} }\;.
\end{align}

If $k=1$, 
\begin{align*}
\mathbb{E}_{\vec a\in V^n}\frac1{d^n} \Tr{\big|w(\vec a) |A|^{\frac{p}{s}} w(\vec a)^* |B|^{\frac{q}{s}}\big|^{s} } 
= \mathbb{E}_{\vec a\in V^n} \frac1{d^n} \Tr{w(\vec a)|A|w(\vec a)^* |B |  } = \|A\|_p^p \|B\|_q^q\;.
\end{align*}
When $k\ge2$, 
\begin{align*}
&\mathbb{E}_{\vec a\in V^n}\frac1{d^n} \Tr{\big|w(\vec a) |A|^{\frac{p}{s}} w(\vec a)^* |B|^{\frac{q}{s}}\big|^{s} }\\
=& \mathbb{E}_{\vec a\in V^n}\frac1{d^n} \Tr{ \left(w(\vec x)  |A| ^{\frac{p}{s}}  w(\vec a)^*  |B|^{\frac{q}{s}} |B|^{\frac{q}{s}} w(\vec a)  |A|^{\frac{p}{s}}  w(\vec a)^* \right)^{\frac{s}2}} \\ 
=& \mathbb{E}_{\vec a\in V^n}\frac1{d^n} \Tr{ \left(w(\vec a)  |A| ^{\frac{2p}{s}}  w(\vec a)^*  |B|^{\frac{2q}{s}} \right)^{\frac{s}2}} \\
\le & \||A|^{\frac{2p}{s}}\|_{s/2}^{s/2} \; \||B|^{\frac{2q}{s}}\|_{s/2}^{s/2}\\
= & \|A\|_p^p \|B\|_q^q\;.
\end{align*}
Here the inequality comes from the Lemma 43.
Combined with \eqref{230810eq2}, we get
\begin{align*}
  \mathbb{E}_{\vec a\in V^n} \|  A(\vec a) B\|_r^s  \le \|A\|_p^s \|B\|_q^s\;,
\end{align*}
which completes the proof.
\end{proof}

\section{Calculation of the Quantum Measures for Gates in \S 7 }
\label{appen:example_de}
We give the detailed calculations of the quantum uniformity measures for several examples of quantum gates, most of which are stated in \S 7. 

\begin{Exam}[\bf The Weyl Gate]
For ${b}=({p},  {q})$, the one-qudit Weyl (or Pauli) gates are $w(a)=\zeta^{-pq}Z^pX^q$.  The $n$-qudit Weyl gates are tensor products of 1-qudit gates, so their quantum uniformity  measures are the $n^{\rm, th}$ power of the one-qudit measures.  We claim that 
%
\begin{align}
  \label{eq:UN1_pauli}      \norm{w(\vec b)}_{Q^1}=&\delta_{\vec b, \vec 0}\;,
\quad\text{and}\quad
\norm{w(\vec b)}_{Q^k}=1, \quad \forall k\geq 2. 
\end{align}

To calculate the  $k=1$ norm,  use  Proposition 25. In the computational basis, $Z$ is diagonal and $X$ is off-diagonal. So  $\Tr{Z^{p}X^{q}}=\Tr{Z^{p}}\delta_{q,0}=d\delta_{q,0}$. Thus, 
\begin{align*}
    \norm{w(\vec b)}_{Q^1}
    =\abs{\frac{1}{d^n}\Tr{w(\vec b)}}
    =\delta_{\vec b, \vec 0}\;.
\end{align*}

To calculate the $k=2$ norm, observe that  
\begin{align*}
       \norm{w(\vec b)}^4_{Q^2}
       =&\mathbb{E}_{\vec a}\norm{\partial_{\vec a}w(\vec b)}^2_{Q_1}
       =\mathbb{E}_{\vec a}
       \abs{\frac{1}{d^n}\Tr{\partial_{\vec a}w(\vec b)}}^2\\
       =&\mathbb{E}_{\vec a}\abs{ \frac{1}{d^{ n}}\Tr{w(\vec a)w(\vec b)w(\vec a)^* w(\vec b)^*}}^2
       =\mathbb{E}_{\vec a} \abs{\frac{1}{d^n}\Tr{I}}^2
      =1.
\end{align*}

Finally, $\norm{w(\vec b)}_{Q^k}=1$ for $k\geq 3$, as 
$\norm{w(\vec b)}^{2^k}_{Q^k}
=\mathbb{E}_{\vec a}\norm{\partial_{\vec a}w(\vec b)}^{2^{k-1}}_{Q_{k-1}}
=\mathbb{E}_{\vec a}\norm{I}^{2^{k-1}}_{Q^{k-1}}=1
$.
\end{Exam}

\begin{Exam}[\bf The Fourier Gate]
The Fourier gate on an $n$-qudit system plays an important role in quantum computation, 
\[
F=\frac{1}{{d^{n/2}}}\sum_{\vec {x}, \vec{y}}\omega^{\vec {x} 
 \cdot\vec{y}}\ket{\vec x}\bra{\vec y}\;,
 \quad{where}\quad
 \omega=e^{2\pi i/d}\;.
\] 
The quantum uniformity measures of $F$ are
\begin{align}\label{QMeasuresForWeyl}
        \norm{F}_{Q^1}
          &=\left\{
    \begin{aligned}
            &\,0\;, \text{ if}~ d=2\\
      &\frac{1}{d^n},  \text{ if $d$ is odd prime}
    \end{aligned}
    \right.
    \;,
    \\
 \norm{F}_{Q^2} &=\left\{
    \begin{aligned}
        \frac{1}{2^{n/4}}, &\quad \text{if}~~ d=2 \\
        \frac{1}{d^{n/2}}, &\quad \text{if d is odd prime}\\
    \end{aligned}
    \right.\;,
    \\
\norm{F}_{Q^k}&=1, \quad \forall k\geq 3\;.
\end{align}

In verifying \eqref{QMeasuresForWeyl} we   use that the $n$-qudit $F$ is the tensor product of 1-qudit transforms $F_1$. 
Thus $\norm{F}_{Q^k}=\norm{F_1}_{Q^k}^{n}$. 

\paragraph{\bf The case $k=1$:}
By Proposition 25, 
\begin{align}\label{Q1-Measure}
    \norm{F_{1}}_{Q^1}
=\frac{1}{d}\abs{\Tr{F_{1}}}.
\end{align}
The trace $\Tr{F_{1}}$ can be evaluated in the computational basis as a Gauss sum:
\begin{align}
  \Tr{F_{1}}= {\frac{1}{\sqrt{d}}\sum_{x\in\mathbb{Z}_d}\omega^{x^2}}
  &=\left\{
    \begin{aligned}
        1+i, &\quad \text{if}~~ d\equiv 0\mod 4 \\
        1, &\quad \text{if}~~ d\equiv 1\mod 4\\
        0, &\quad \text{if}~~ d\equiv 2\mod 4\\
        i, &\quad \text{if}~~d\equiv 3\mod 4
\end{aligned}
\right.\;,
\quad\text{so}\\
\abs{\Tr{F_{1}}}
  &=\left\{
    \begin{aligned}
        0\;, &\quad \text{if}~~ d=2\\
        1\;, &\quad \text{if $d$ is an odd prime}
\end{aligned}
\right.\;.
\end{align}
Inserting this into \eqref{Q1-Measure}  gives for the one-qudit system, 
\begin{align*}\label{FourierQ1}
        \norm{F_1}_{Q^1}
          &=\left\{
    \begin{aligned}
            \,0\;,\phantom{,} &\quad \text{if}~~ d=2\\
      \frac{1}{d}\;, &\quad \text{if $d$ is an odd prime}\\
    \end{aligned}
       \right.\;,
\end{align*}
{and for $n$ qudits},
\begin{align*}
               \norm{F}_{Q^1}
          &=\left\{
    \begin{aligned}
            \,0\;,\phantom{,} &\quad \text{if}~~ d=2\\
      \frac{1}{d^{n}}\;, &\quad \text{if $d$ is an odd prime}\\
    \end{aligned}
       \right.\;.
    \end{align*}

\paragraph{\bf The case $k=2$:}
Let $a=(p,q)\in \mathbb{Z}_d\times\mathbb{Z}_d$. We claim  that
$\partial_{\vec a}F_{1}$ is equal to 
 $w(p-q,p+q)$, up to some phase.  In fact, 
\begin{align*}
\partial_{a}F_{1}
=&w(a)F_{1}w(a)^* F_{1}^*
=Z^pX^q F_{1}X^{-q}Z^{-p}F_{1}^*
=Z^pX^q (F_{1}X^{-q}F_{1}^*)(F_{1}Z^{-p}F_{1}^*)\\
=&\omega^{q^{2}} Z^{p-q}X^{p+q}
=\zeta^{p^2+q^2}w(p-q,p+q)\;.
\end{align*}
Here we used the fact that 
$F_{1}ZF_{1}^* =X^{-1}$, and $F_{1}XF_{1}^*=Z$. Then  $\norm{F_{1}}_{Q^2}$ can be computed using the recursion relation in the main text,  namely Proposition 2.  This gives,
\begin{align*}
    \norm{F_{1}}^4_{Q^2}=&
    \mathbb{E}_{ a} \norm{\partial_{ a}F_{1}}^2_{Q^1}
    =\frac{1}{d^2}\sum_{p,q\in \mathbb{Z}_d}\norm{w(p-q,p+q)}^2_{Q^1}
    =\frac{1}{d^2}\sum_{p,q\in \mathbb{Z}_d}\delta_{p+q,0}\,\delta_{p-q,0}\;,
\end{align*}
where the last equality comes from  \eqref{eq:UN1_pauli}. This is
also equal to 
\begin{equation}
 \norm{F_{1}}^4_{Q^2} =\left\{
    \begin{aligned}
        \frac{1}{2}, &\quad \text{if}~~ d=2 \\
        \frac{1}{d^2}, &\quad \text{if d is an odd prime}
    \end{aligned}
    \right.\;.
    \end{equation}
    \quad\text{Hence}\quad
\begin{equation}
     \norm{F}_{Q^2} =\left\{
    \begin{aligned}
        \frac{1}{2^{n/4}}, &\quad \text{if}~~ d=2 \\
        \frac{1}{d^{n/2}}, &\quad \text{if d is an odd prime}\\
    \end{aligned}
    \right.\;.
\end{equation}

\paragraph{\bf The case $k\geq 3$:}
Again one can use Proposition 2 to show  
\begin{align*}
     \norm{F_{1}}^8_{Q^3}=   \mathbb{E}_{ a}\norm{\partial_{a} F_{1}}^4_{Q^2}
     =\frac{1}{d^2}\sum_{p,q\in\mathbb{Z}_d}\norm{w(p-q,p+q)}^4_{Q^2}
     =1\;.
\end{align*}
Hence  $\norm{F}_{Q^3}=\norm{F_1}_{Q^3}=1$.
The same argument  shows that $\norm{F}_{Q^k}=1$ for any $k > 3$.
\bigskip

For $d=2$ case, namely for a qubit system, the Fourier gate $F$ is known as  Hadamard gate $H$. For $n=1$,  
\begin{equation}
H_1=\frac{1}{\sqrt{2}}
\left[
\begin{array}{cccc}
1&1\\
1&-1\\
\end{array}
\right].
\end{equation}
The corresponding quantum uniformity measures are 
\begin{align}
\norm{H_{1}}_{Q^1}&=0\;, \quad \norm{H_{1}}_{Q^2}=\frac{1}{2^{1/4}}\;,
\quad
\norm{H_{1}}_{Q^k}=1\;, \quad \forall k\geq 3\;.
\end{align}


\end{Exam}

\begin{Exam}[\bf The CNOT gate]
Let us consider a two-qudit ($n=2$) CNOT gate $$CNOT=\sum_{x,y\in \mathbb{Z}_d}\proj{x}\ot \ket{x+y}\bra{y}\;.$$  This is an important gate for the generation of quantum entanglement.  Here we compute  the corresponding quantum uniformity norms 
\begin{align}
\norm{CNOT}_{Q^1}&=\frac{1}{d}\;, \quad \norm{CNOT}_{Q^2}=\frac{1}{\sqrt{d}}\;,
\quad
\norm{CNOT}_{Q^k}=1\;, \quad \forall k\geq 3.
\end{align}

\paragraph{\bf The case $k=1$:}
Using  Proposition 2 or Proposition 25 in the main text for the case $n=2$, and the relation $\Tr{CNOT}=d$, we have  
\begin{align*}
\norm{CNOT}_{Q^1}
    =\frac{1}{d^2}|\Tr{CNOT}|
    =\frac{1}{d}\;. 
\end{align*}

\paragraph{\bf The case $k=2$:}
It is convenient to know the action of CNOT on the Weyl operator $w(\vec a)=w(a_{1})\otimes w(a_{2})$, where $\vec a=(a_{1}, a_{2})$ and $a_{j}=(p_{j},q_{j})$.  We claim that 
\begin{align}\label{CNOT-on-w}
CNOT \, w(\vec a)  \, CNOT^{*}
&= CNOT\lrp{w(p_{1},q_{1})\ot w(p_{2},q_{2})}CNOT^{*}\nonumber\\
&= \zeta^{p_{2}(q_{1}+q_{2})} w(p_{1}-p_{2},q_{1}) \ot w(p_{2}, q_{1}+q_{2})\;.
\end{align}
As 
\be
w(p_{1},q_{1})\ot w(p_{2},q_{2}
= w(\vec a)
= \zeta^{-p_{1}q_{1}+p_{2}q_{2}} \lrp{  Z^{p_{1}}\otimes X^{q_{2}} }   \lrp{   X^{q_{1}} \otimes Z^{p_{2}}    }\;,
\ee
one can use  the actions
\begin{align}\label{CNOTAction-1}
      CNOT\, (Z^{p}\ot X^{q})\, CNOT^*
      =Z^{p}\ot X^{q}\;,
      \end{align}
      and
      \begin{align}
    CNOT\, (X^{q}\ot Z^{p}) \,CNOT^*
    =X^{q}Z^{-p}\ot X^{q}Z^{p}
\end{align}
to obtain \eqref{CNOT-on-w}.    The first identity in \eqref{CNOTAction-1} comes from  
\begin{align*}
&CNOT
\, (Z^{p}\ot X^{q})\, CNOT^*\nonumber\\
& \quad\quad=  
\sum_{a,b,c,d\in \mathbb{Z}_d}
 \lrp{\proj{a} \ot \ket{a+b}\bra{b}}
\, (Z^{p} \ot X^{q})\, 
\lrp{  \proj{c} \ot  \ket{d} \bra{c+d}}\\
& \quad\quad=  
 \sum_{a,b,c,d\in \mathbb{Z}_d} \omega^{pc} 
  \lrp{\proj{a} \ot \ket{a+b}\bra{b}}
\lrp{  \proj{c} \ot  \ket{d+q} \bra{c+d}}\\
& \quad\quad=  
 \sum_{a,b,c,d\in \mathbb{Z}_d} \omega^{pc} 
 \delta_{a,c}  \delta_{b,d+q}  \ket{a} \bra{c} \ot  \ket{a+b}  \bra{c+d}\\
& \quad\quad=  
 \sum_{a,b\in \mathbb{Z}_d} \omega^{pa} 
   \ket{a} \bra{a} \ot  \ket{a+b}  \bra{a+b-q}\\
& \quad\quad=  
 \sum_{a,b\in \mathbb{Z}_d} \omega^{pa} 
   \ket{a} \bra{a} \ot  \ket{a+b+q}  \bra{a+b}\\
& \quad\quad=  
 \sum_{a,b\in \mathbb{Z}_d} \omega^{pa} 
   \ket{a} \bra{a} \ot  \ket{b+q}  \bra{b}
= Z^{p} \ot X^{q}\;.
\end{align*}
The second identity  in \eqref{CNOTAction-1} is
\begin{align*}
&CNOT
\, ( X^{q}   \ot  Z^{p})\, CNOT^*\\
& \quad\quad=   
\sum_{a,b,c,d\in \mathbb{Z}_d}
 \lrp{\proj{a} \ot \ket{a+b}\bra{b}}
\, ( X^{q}   \ot  Z^{p})\, 
\lrp{  \proj{c} \ot  \ket{d} \bra{c+d}}\\
& \quad\quad=  
 \sum_{a,b,c,d\in \mathbb{Z}_d} \omega^{pd} 
  \lrp{\proj{a} \ot \ket{a+b}\bra{b}}\,
\lrp{  \ket{c+q} \bra{c} \ot  \ket{d} \bra{c+d}}\\
& \quad\quad=  
 \sum_{a,b,c,d\in \mathbb{Z}_d} \omega^{pd} 
 \delta_{a,c+q}  \delta_{b,d}  \ket{a} \bra{c} \ot  \ket{a+b}  \bra{c+d}\\
& \quad\quad=   
 \sum_{a,b\in \mathbb{Z}_d} \omega^{pb} 
   \ket{a} \bra{a-q} \ot  \ket{a+b}  \bra{a+b-q}\\
& \quad\quad=  
 \sum_{a,b\in \mathbb{Z}_d} \omega^{pb} 
   \ket{a+q} \bra{a} \ot  \ket{a+b+q}  \bra{a+b}\\
& \quad\quad=  
 \sum_{a,b\in \mathbb{Z}_d} \omega^{p(b-a)} 
   \ket{a+q} \bra{a} \ot  \ket{b+q}  \bra{b}
= X^{q} Z^{-p} \ot X^{q}  Z^{p}\;.
\end{align*}
Hence
\begin{align*}
CNOT\, w(\vec a)\, CNOT^{*}
&= \zeta^{-p_{1}q_{1}+ p_{2}q_{2}  }   (Z^{p_{1}} \ot X^{q_{2}}) \, (X^{q_{1}}Z^{-p_{2}}\ot X^{q_{1}} Z^{p_{2}})\\
&= \zeta^{-p_{1}q_{1}+ p_{2}q_{2}  }  \zeta^{2q_{1}p_{2}   - (q_{1}+q_{2})p_{2}}  Z^{p_{1}-p_{2}}  X^{q_{1}} \ot  Z^{p_{2}}X^{q_{1}+q_{2}}\\
&= \zeta^{p_{2}(q_{1}+q_{2})} w(p_{1}-p_{2},q_{1}) \ot w(p_{2}, q_{1}+q_{2})\;.
\end{align*}
Therefore
\be
\partial_{\vec a} \,CNOT
= \zeta^{p_{2}(q_{2}-q_{1})} Z^{p_{2}}  \ot X^{-q_{1}}  \;,
\ee
which follows from 
\begin{align}
\partial_{\vec a} \,CNOT
&=w(\vec a)\, CNOT \, w(\vec a)^{*}\,CNOT^{*}
\nonumber\\
&=  \zeta^{p_{2}(q_{1}+q_{2})} w(p_{1},q_{1})w(p_{2}-p_{1} , -q_{1})  \ot w(p_{2},q_{2}) w(-p_{2},-q_{1}-q_{2})\nonumber\\
&=\zeta^{   p_{2}(q_{1}+q_{2})   +(- p_{1} q_{1} -  q_{1}(p_{2}-p_{1})  )  -p_{2}   (q_{1}+q_{2})      +q_{2}p_{2}    }\,  w(p_{2},0) \ot w(0,-q_{1})\nonumber\\
& = \zeta^{p_{2}(q_{2}-q_{1})}  w(p_{2},0) \ot w(0,-q_{1})
= \zeta^{p_{2}(q_{2}-q_{1})} Z^{p_{2}}  \ot X^{-q_{1}}  \;.
\label{partialCNOT}
\end{align}
Thus 
\be
\norm{\partial_{\vec a}\,CNOT}_{Q^{1}}
=\abs{\frac{1}{d^{2}}\Tr{\partial_{\vec a}\,CNOT}}
=   \delta_{p_{2},0}  \delta_{q_{1},0}\;.
\ee
It then follows from Proposition 2 in the main text that 
\be
\norm{CNOT}_{Q^{2}}^{4}
=  \mathbb{E}_{\vec a} \,{\norm{\partial_{\vec a}\,CNOT }}_{Q_{1}}^{2}
= \frac{1}{d^{2}}\;,
\quad\text{and}\quad
\norm{CNOT}_{Q^{2}} = \frac{1}{d^{1/2}}\;.
\ee

\paragraph{\bf The case $k\geq 3$:}
For $k=3$, we use the fact that in \eqref{partialCNOT} we show that $\partial_{\vec a} \,CNOT$ is a Weyl operator, up to a phase. Thus $\norm{\partial_{\vec a}\,CNOT}_{Q^{2}}=1$. Thus  using Proposition 2,
\begin{align*}
\norm{CNOT}^8_{Q^3}=\mathbb{E}_{\vec a} \norm{\partial _{\vec a} \,CNOT   }_{Q^{2}}^{2}
=1\;.
\end{align*}
Then  $\norm{CNOT}_{Q^{k}}=1$ also for  $k\geq 4$.

\paragraph{The qubit case:}
For $d=2$ case, i.e., 1-qubit system, the CNOT gate can be written in the computational basis as the following
matrix, 
\begin{equation}
CNOT=
\left[
\begin{array}{cccc}
1&0&0&0\\
0&1&0&0\\
0&0&0&1\\
0&0&1&0\\
\end{array}
\right].
\end{equation}
The corresponding quantum uniformity norms are
\begin{align}
\norm{CNOT}_{Q^1}=\frac{1}{2}, \quad \norm{CNOT}_{Q^2}=\frac{1}{\sqrt{2}},\quad
\norm{CNOT}_{Q^k}=1, \quad \forall k\geq 3.
\end{align}
\end{Exam}

\begin{Exam}[\bf The one-qubit T gate]
A single-qubit gate $T$ gate in $\mathcal{C}^{3}$  plays important role in the universal quantum computation, as
Clifford unitary + T gate can generate any unitary operator.  The $T$ gate is a fourth root of $Z$ and  has the matrix form
\begin{equation}
T=\left[
\begin{array}{cc}
1 &0\\
0& e^{i\pi/4}
\end{array}
\right].
\end{equation}
The quantum uniformity norms of $T$ are 
\begin{align}   
\norm{T}_{Q^1}
=\frac{\sqrt{2+\sqrt{2}}}{2}, \quad
\norm{T}_{Q^2}
=\left(\frac{3}{4}\right)^{1/4},\quad
\norm{T}_{Q^3}
=\left(\frac{3}{4}
\right)^{1/8}, \quad
\norm{T}_{Q^k}=1, \forall k\geq 4.
\end{align}

\paragraph{\bf The case $k=1$:}
\begin{align*}
   \norm{T}_{Q^1} 
   =\frac{1}{2}|\Tr{T}|
   =\frac{|1+e^{i\pi/4}|}{2}=\frac{\sqrt{2+\sqrt{2}}}{2}
   = .92387953....
\end{align*}

\paragraph{\bf The case $k=2$:}

Since $T$ commutes with $Z$,  
\[
Z^{p}X^{q}TX^{-q}Z^{-p}
=\left\{  
\begin{matrix}         
	 \begin{pmatrix}
         1&{0}\\
         {0}&{e^{i\pi/4}}
         \end{pmatrix}
\phantom{xxxx}\text{ for }q=0\\
         \begin{pmatrix}
         {e^{i\pi/4}}&{0}\\
         {0}&{1}
         \end{pmatrix}
\phantom{xxxx}\text{ for $q=1$}
\end{matrix} 
\right.\;,
\]
and
\[
\partial_{p,q}\,T
= w(p,q)Tw(p,q)^{*}T^{*}
=\left\{  
\begin{matrix}         
	 \begin{pmatrix}
         1&{0}\\
         {0}&{1}
         \end{pmatrix}
\phantom{xxxxxxx}\text{ for }q=0\\
         \begin{pmatrix}
         {e^{i\pi/4}}&{0}\\
         {0}&{e^{-i\pi/4}}
         \end{pmatrix}
\text{ for $q=1$}
\end{matrix} 
\right.
\;. 
\]

Thus, we have
\be
\abs{\frac{1}{2}\Tr{\partial_{a} \,T}}
= \left\{  
\begin{matrix}         
	1
\phantom{xxxx}\text{ for }q=0\\
         \frac{1}{\sqrt{2}} 
         \phantom{xx,}\text{ for }q=1
\end{matrix} 
\right.\;. 
\ee
As a consequence, we can use Proposition 2 to show that 
\begin{align*}
\norm{T}^4_{Q^2}
&=
\mathbb{E}_{a}
\norm{\partial_{a}\,T}^2_{Q^1}
=\mathbb{E}_{a} \abs{\frac{1}{2} \Tr{\partial_{a}\,T}}^{2}
= \frac{1}{2} \lrp{ 1 + \frac{1}{2}     }
= \frac{3}{4}\;,
\end{align*}
Hence, we have
\begin{align*}
\norm{T}_{Q^{2}}  = \lrp{\frac{3}{4}}^{1/4} =.93060486....
\end{align*}

\paragraph{\bf The case $k=3$:}
By Proposition 2, we have
\begin{align}\label{TQ3}
\norm{T}^8_{Q^3}
&=
\mathbb{E}_{b}
\norm{\partial_{b}\,T}^2_{Q^2}
= \mathbb{E}_{b,a}
\norm{\partial_{a}\partial_{b}\,T}^2_{Q^1}\;.
\end{align}
Note that 
\begin{align}
\partial_{a}\partial_{b}\,T
&= \left\{
\begin{matrix}
\begin{pmatrix}
1&0\\
0&1
\end{pmatrix}
\;, \text{ if } (a,b) =(0,0), (0,1), \text{ or } (1,0)\\
\begin{pmatrix}
-i&0\\
0&i
\end{pmatrix}
 \;, \text{ if } (a,b)=(1,1) \phantom{xxxxxxxxxxxxx}
\end{matrix}
\right. \;.
\end{align}
Inerting this into \eqref{TQ3} shows that 
\be
\norm{T}_{Q^{3}}^{8} = \frac{3}{4}\;,
\quad\text{and}\quad
\norm{T}_{Q^{3}} =\lrp{ \frac{3}{4}}^{1/8} 
= .96467863....
\ee

\paragraph{\bf The case $k\geq 4$:}
In this case $\partial_{a}\partial_{b}\partial_{c}\,T=\pm I$ for all cases of $(a,b,c)$.  Thus 
\be
\norm{T}_{Q^{4}} = 1\;,
\ee
and the same holds for $\norm{T}_{Q^{k}}$ for $k>4$.
\end{Exam}

\begin{Exam}[\bf The CCZ (or Toffoli) gate]
The three-qudit gate $CCZ\in\mathcal{C}^{3}$
(or control-control $Z$ gate)
is defined as 
\begin{align}
    CCZ=\sum_{x,y,z=1}^{d-1}\omega^{xyz} \proj{x}\ot \proj{y}\ot\proj{z}\;,
    \quad\text{where}\quad
    \omega=e^{2\pi i/d}\;.
\end{align}
The $CCZ$ gate  plays an important role in the realization of universal quantum computation. The corresponding quantum uniformity norms are 
\begin{align}
&\norm{CCZ}_{Q^1}
=\frac{2d-1}{d^2}, \quad
\norm{CCZ}_{Q^2}
=\left(\frac{d^3+(d-1)^3+d[d^3-(d-1)^3-1]}{d^6}\right)^{1/4},\\
&\norm{CCZ}_{Q^3}=\left(\frac{d^3+d^2-1}{d^5}\right)^{1/8}, \quad\text{and}\quad
\norm{CCZ}_{Q^k}=1, \ \ \forall k\geq 4\;.
\end{align}

\paragraph{\bf The case $k=1$:}
One can establish this by using Proposition 2, 
\begin{align*}
\norm{CCZ}_{Q^1}
=&\frac{1}{d^3}|\Tr{CCZ}|
=\frac{1}{d^3}
\left|\sum_{x,y,z\in\mathbb{Z}_d}\omega^{xyz}_d\right|
=\frac{1}{d^2}\left|\sum_{x,y\in \mathbb{Z}_d}\delta_{xy,0}\right|
    =\frac{(2d-1)}{d^2}\;.
\end{align*}

\paragraph{\bf The case $k=2$:}
Since $CCZ$ is diagonal in the computational basis, it commutes with $Z_j$, for $j=1,2,3$.  Thus  using Proposition 2, and with 
$\vec a= ((p_1,q_1),(p_2,q_2),(p_3,q_3))\in \mathbb{Z}_{d}^{6}$,
\begin{align*}
\norm{CCZ}^4_{Q^2}
&=\mathbb{E}_{\vec a} \norm{\partial_{\vec a}  CCZ  } ^2_{Q_1}
=\mathbb{E}_{\vec a}\norm{w(\vec a) \, CCZ \, w(\vec a)^*\,CCZ^*}^2_{Q_1}\\
&=\mathbb{E}_{\vec a}\frac{1}{d^6} \left|\Tr{w(\vec a)\, CCZ \, w(\vec a)^*  \, CCZ^*}\right|^2\\
&=\mathbb{E}_{\vec a}\frac{1}{d^6}\left|\Tr{(X^{q_1}\ot X^{q_2}\ot X^{q_3})\, CCZ\,
(X^{-q_1}\ot X^{-q_2}\ot X^{-q_3})\, CCZ^*}\right|^2\;.
\end{align*}
Evaluating the trace in the computational basis $\ket{k_1}\ot\ket{k_2}\ot\ket{k_3}$, we obtain

\begin{align*}
\norm{CCZ}^4_{Q^2}
    &=\mathbb{E}_{ a, b,  c\in \mathbb{Z}_d}\frac{1}{d^6}
    \left|\sum_{x,y,z}\omega^{xyz-(x+a)(y+b)(z+c)}_d\right|^2\\
    &=\frac{d^3+(d-1)^3+d[d^3-(d-1)^3-1]}{d^6}.
\end{align*}
And 
\begin{align}
   \norm{CCZ}^8_{Q^3}  
   =\frac{d^3+3(d-1)d+3(d-1)^2d+(d-1)^3}{d^6}
\end{align}

Let us consider now the 3-qubit gate  $CCZ$, which has the following form  in the computational basis
\begin{equation}
CCZ=
\left[
\begin{array}{cccccccccccccccc}
1&0&0&0&0&0&0&0\\
0&1&0&0&0&0&0&0\\
0&0&1&0&0&0&0&0\\
0&0&0&1&0&0&0&0\\
0&0&0&0&1&0&0&0\\
0&0&0&0&0&1&0&0\\
0&0&0&0&0&0&1&0\\
0&0&0&0&0&0&0&-1\\
\end{array}
\right].
\end{equation}
Then we calculate the quantum uniformity norm, and find that 
\begin{align}
&\norm{CCZ}_{Q^1}
=\frac{3}{4}, \quad
\norm{CCZ}_{Q^2}
=\left(\frac{11}{32}\right)^{1/4},\\
&\norm{CCZ}_{Q^3}=\left(\frac{11}{32}\right)^{1/8}, \quad
\norm{CCZ}_{Q^k}=1,\quad \forall k\geq 4
\end{align}
\end{Exam}

\end{appendix}

\end{document}